\pdfoutput=1
\documentclass[12pt]{article}
\usepackage[utf8]{inputenc}
\usepackage[english]{babel}
\usepackage{amsmath}
\usepackage{amssymb}
\usepackage{amsthm}
\usepackage{amsfonts}
\usepackage{mathrsfs}
\usepackage{bm}
\usepackage{graphicx}
\usepackage{booktabs}
\usepackage{multirow}
\usepackage{appendix}
\usepackage{enumerate}
\usepackage{url}
\usepackage{authblk}
\usepackage{xcolor}
\usepackage{xr}
\usepackage{adjustbox}
\usepackage{array}
\usepackage{tabularx}
\usepackage{tikz}
\usepackage{natbib}
\usepackage{algorithm}
\usepackage{algpseudocode}
\usepackage[labelfont=bf]{caption}
\usepackage{threeparttable}
\usepackage{makecell}
\usepackage{rotating}
\usepackage{placeins}
\usepackage{etoc}

\newcommand{\compacttablesetup}{%
    \setlength{\tabcolsep}{2.8pt}%
    \renewcommand{\arraystretch}{0.70}%
    \setlength{\aboverulesep}{0.20ex}%
    \setlength{\belowrulesep}{0.21ex}%
}

\usepackage[colorlinks=true,linkcolor=red,citecolor=blue]{hyperref}
\usetikzlibrary{arrows.meta,positioning,calc}

\graphicspath{{./art/}}

\newcommand{\blind}{1}

\makeatletter
\renewcommand{\arabic}[1]{\expandafter\@arabic\csname c@#1\endcsname}
\makeatother

\definecolor{inputblue}{RGB}{24,98,230}
\definecolor{hiddenteal}{RGB}{62,201,201}
\definecolor{outputblue}{RGB}{116,158,236}
\definecolor{edgeblue}{RGB}{44,118,255}
\definecolor{edgeteal}{RGB}{62,201,201}
\definecolor{edgelight}{RGB}{120,220,230}

\tikzset{
  nnfig/.style={
    font=\small,
    line cap=round,
    line join=round
  },
  input node/.style={
    circle,
    draw=none,
    fill=inputblue,
    minimum size=9mm,
    inner sep=0pt
  },
  hidden node/.style={
    circle,
    draw=none,
    fill=hiddenteal,
    minimum size=9mm,
    inner sep=0pt
  },
  output node/.style={
    circle,
    draw=none,
    fill=outputblue,
    minimum size=9mm,
    inner sep=0pt
  },
  input edge/.style={
    -{Latex[length=1.8mm,width=1.3mm]},
    draw=edgeblue,
    line width=0.55pt
  },
  hidden edge/.style={
    -{Latex[length=1.8mm,width=1.3mm]},
    draw=edgeteal,
    line width=0.55pt
  },
  hidden edge light/.style={
    -{Latex[length=1.8mm,width=1.3mm]},
    draw=edgelight,
    line width=0.55pt
  },
  output edge/.style={
    -{Latex[length=1.8mm,width=1.3mm]},
    draw=outputblue,
    line width=0.55pt
  }
}

\definecolor{figaccent}{HTML}{2E5E78}
\definecolor{figink}{HTML}{252525}

\tikzset{
  journal figure/.style={
    font=\small,
    text=figink,
    line cap=round,
    line join=round
  },
  split node/.style={
    draw=figink!80,
    rounded corners=1.2pt,
    fill=white,
    minimum width=18mm,
    minimum height=6.5mm,
    inner sep=2pt,
    align=center
  },
  treatment plus/.style={
    circle,
    draw=figink!80,
    fill=figaccent!13,
    minimum size=6.5mm,
    inner sep=0pt,
    font=\small\bfseries
  },
  treatment minus/.style={
    circle,
    draw=figink!80,
    fill=figink!7,
    minimum size=6.5mm,
    inner sep=0pt,
    font=\small\bfseries
  },
  tree edge/.style={
    draw=figink!65,
    semithick
  },
  panel title/.style={
    font=\small\bfseries,
    anchor=west
  },
  figure note/.style={
    font=\footnotesize,
    text=figink!68,
    align=center
  },
  neuron/.style={
    circle,
    draw=figink!75,
    fill=white,
    minimum size=5.5mm,
    inner sep=0pt
  },
  input neuron/.style={
    neuron,
    fill=figink!4
  },
  relu neuron/.style={
    neuron,
    draw=figaccent,
    line width=.7pt
  },
  network edge/.style={
    draw=figink!48,
    line width=.55pt
  },
  network edge strong/.style={
    draw=figaccent!90!black,
    line width=.8pt
  },
  network box/.style={
    draw=figink!75,
    rounded corners=1.5pt,
    fill=white,
    minimum height=9mm,
    inner xsep=5pt,
    align=center
  },
  flow arrow/.style={
    -{Latex[length=1.8mm,width=1.3mm]},
    draw=figink!70,
    semithick
  }
}

\tikzset{
  output neuron/.style={
    neuron,
    draw=figink!90,
    line width=.8pt,
    minimum size=10mm,
    font=\scriptsize
  },
  omitted edge/.style={
    network edge,
    densely dashed
  },
  layer label/.style={
    font=\scriptsize,
    align=center,
    text=figink
  }
}

\def\T{{ \mathrm{\scriptscriptstyle T} }}

\newcommand{\calX}{\mathcal{X}}
\newcommand{\calA}{\mathcal{A}}

\newcommand{\calD}{\mathcal{D}}
\newcommand{\calL}{\mathcal{L}}
\newcommand{\calG}{\mathcal{G}}
\newcommand{\calH}{\mathcal{H}}

\newcommand{\mO}{\mathcal{O}}
\newcommand{\bX}{\mathbf{X}}
\newcommand{\bx}{\mathbf{x}}
\newcommand{\lb}{\left\{}
\newcommand{\rb}{\right\}}
\newcommand{\lsb}{\left(}
\newcommand{\rsb}{\right)}
\newcommand{\lbb}{\left[}
\newcommand{\rbb}{\right]}
\newcommand{\bE}{\mathbb{E}}
\newcommand{\mone}{\mathbb{I}}
\newcommand{\mP}{\mathbb{P}}

\newcolumntype{Y}{>{\raggedright\arraybackslash}X}
\newcolumntype{P}[1]{>{\raggedright\arraybackslash}p{#1}}
\newcolumntype{C}[1]{>{\centering\arraybackslash}p{#1}}

\makeatletter
\newcommand*{\addFileDependency}[1]{
\typeout{(#1)}
\@addtofilelist{#1}
\IfFileExists{#1}{}{\typeout{No file #1.}}
}
\makeatother

\newtheorem{assumption}{Assumption}
\newtheorem{theorem}{Theorem}
\newtheorem{corollary}{Corollary}
\newtheorem{example}{Example}
\newtheorem{lemma}{Lemma}
\begin{document}

\def\spacingset#1{\renewcommand{\baselinestretch}{#1}\small\normalsize} \spacingset{1}


\if1\blind
{
  \title{\bf Orthogonal double residual learning for optimal individualized treatment rules}
\author{
  Jiaqi Tong$^{1}$
  and Fan Li$^{1,*}$
  \vspace{0.2cm} 
  \\
  $^{1}$Department of Biostatistics, Yale School of Public Health, New\\ Haven, CT, USA \\
  $^{*}$\emph{email}: fan.f.li@yale.edu
}
  \date{}
  \maketitle
} \fi

\if0\blind
{
  \bigskip
  \bigskip
  \bigskip
  \begin{center}
    {\LARGE\bf Orthogonal double residual learning for optimal individualized treatment rules}
  \end{center}
  \medskip
} \fi

\bigskip
\begin{abstract} 
Individualized treatment rules (ITRs) map baseline characteristics to treatment recommendations, with the optimal ITR maximizing expected reward or policy welfare. Indirect methods may require restrictive modeling assumptions, whereas direct methods can be sensitive to nuisance estimation error and limited overlap. We propose orthogonal double residual learning (ODRL), a two-stage, cross-fitted framework that directly targets the optimal ITR through cost-sensitive classification using the product of treatment and outcome residuals. To our knowledge, ODRL is the first direct method with a universally Neyman orthogonal objective requiring neither restrictive modeling assumptions nor inverse propensity score weighting. Thus, nuisance estimation errors affect regret through a second-order product, and ODRL remains robust under limited overlap. The Fisher consistent objective accommodates general decision rule sieves. We establish nonasymptotic high probability value function regret bounds relative to the Bayes classifier for VC classes, including linear rules and decision trees, and calibrated regret bounds for surrogate relaxations using support vector machines and deep ReLU neural networks. We further show that generic surrogate relaxations need not preserve orthogonality, whereas bounded score hinge learning does. Simulations demonstrate strong performance across complex and linear decision boundaries, limited overlap, and working model misspecification. Applications to the Right Heart Catheterization study and the Oxford Net Zero experiment illustrate interpretable treatment or policy recommendations. The \texttt{odrlITR} R package implements ODRL. 
\end{abstract}

\noindent%
{\it Keywords:} Causal inference, individualized treatment rules, precision medicine, orthogonal statistical learning, overlap weights, nonasymptotic regret bound
\vfill

\newpage
\spacingset{1.7} 

\section{Introduction}

\subsection{Background and related literature}
In causal inference, average treatment effects may mask substantial variation in treatment effects across individuals. This heterogeneity raises a fundamental decision problem: how should treatment be assigned based on an individual’s characteristics? In precision medicine, an individualized treatment rule (ITR) maps baseline patient characteristics to a recommended treatment; in economics, the analogous object is a treatment assignment policy. An optimal rule maximizes the value function, defined as the expected outcome or reward under deployment of the rule in the target population. The expanding availability of electronic health records, insurance claims, clinical registries, and administrative databases has generated substantial interest in learning such rules from observational data \citep{qian2011performance,zhao2012estimating,zhao2019efficient,athey2021policy}.

Existing methods for learning ITRs fall into two broad categories. The first, referred to as \emph{indirect methods}, includes Q-learning \citep{qian2011performance}, A-learning \citep{schulte2015q}, D-learning \citep{qi2018d}, E-learning \citep{mo2022efficient}, and related variants \citep{zhao2009reinforcement,wallace2015doubly}. These methods estimate either the quality function, defined as the conditional outcome mean, or the blip function, corresponding to the conditional average treatment effect (CATE), and obtain the optimal decision rule by taking the sign of the estimated CATE. The term ``indirect'' reflects that the decision rule is not learned by directly maximizing the value function. Despite their simplicity, indirect methods have several limitations. First, they are primarily model-based, relying on parametric or semiparametric specifications and differing mainly in their estimation strategies. Their consistency therefore requires correct specification of all or part of the nuisance model. Second, even when the quality function is estimated nonparametrically or using data-adaptive machine learning tools, the slow convergence rate associated with estimating a complex quality function propagates to the value function regret bound. This is undesirable when the true decision boundary is substantially simpler than the CATE surface.

Methods in the second category are known as \emph{direct methods} \citep{zhao2012estimating,zhou2017residual,zhao2019efficient,mi2019bagging,kallus2021more,athey2021policy,zhou2023offline}. These methods first construct an estimator \(\widehat V\) of the value function \(V\) and then obtain the optimal ITR by maximizing \(\widehat V(d)\) over the decision rule space \(\calD\). Methods in this category differ primarily in their construction of \(\widehat V\). Examples include outcome weighted learning (OWL) \citep{zhao2012estimating} and residual weighted learning (RWL) \citep{zhou2017residual,mi2019bagging}, both of which use an inverse probability weighting (IPW) estimator of $V$, as well as doubly robust (DR) methods, which use an {augmented inverse probability weighting (AIPW) estimator of $V$} \citep{zhao2019efficient,athey2021policy,zhou2023offline}. The AIPW-based direct methods offer two advantages. First, consistency requires correct specification of either the quality function model or the propensity score model, but not necessarily both. Second, these methods can be implemented within a two-stage debiased machine learning framework \citep{dml,foster2023orthogonal}, with the nuisance functions estimated in the first stage using data-adaptive machine learning tools. The resulting second-stage empirical risk minimization objective is \emph{Neyman orthogonal} (see Example~3.4 of \cite{foster2023orthogonal}), so nuisance estimation errors affect the regret bound only through second-order terms.

In observational studies, however, treatment assignment often reflects clinical judgment, contraindications, institutional practice, and self-selection. These mechanisms can lead to limited overlap, with treatment probabilities close to zero or one for some covariate profiles. In such regions, the data provide little information about one treatment alternative, and the device of IPW may become numerically unstable due to dominating weights of a small number of individuals. In fact, a common limitation of the IPW- and AIPW-based direct methods is that their objectives explicitly involve division by the propensity score, rendering them vulnerable to positivity violation. To mitigate this limitation, \cite{kallus2021more} extends IPW- and AIPW-based direct methods to accommodate optimally retargeted value functions. 
When the retargeting weights depend on an estimated propensity score, however, Neyman orthogonality is not automatic because the dependence of the weights on the nuisance function must also be included in the derivative calculation. Recent covariate balancing ITR methods also avoid the device of IPW by imposing empirical covariate balance constraints in an optimization problem with a linear decision boundary \citep{lee2024effective}.

Table~\ref{tab:literature-comparison} summarizes selective existing methods for estimating optimal ITRs and distinguishes four properties that are often conflated in the ITR literature: whether the treatment rule is learned directly, whether the framework relies on restrictive modeling assumptions, whether the objective used by a direct method is Neyman orthogonal \citep{foster2023orthogonal}, and whether the objective contains an IPW component. To the best of our knowledge, none of the existing methods simultaneously targets the ITR directly, avoids restrictive modeling assumptions, uses a Neyman orthogonal objective---therefore robust to nuisance function estimation errors---and remains robust under lack of overlap.

\subsection{Our contributions}
We propose \emph{orthogonal double residual learning (ODRL)}, 
a two-stage, cross-fitted causal machine learning framework for estimating the optimal ITR. To the best of our knowledge, it is the first direct method based on a Neyman orthogonal objective that requires no restrictive modeling assumptions and does not exploit the numerically unstable IPW component. The core of our approach builds upon a novel objective function. That is, once the nuisance functions entering the new objective are estimated using machine learning tools, the second-stage empirical risk minimization can be implemented using existing computational routines for direct methods. We study a broad class of such procedures that fall into two main approaches: (i) exact optimization over restricted sieves of binary treatment rules \citep{athey2021policy,zhou2023offline}; and (ii) surrogate risk minimization using general machine learning methods to estimate a treatment rule score, including support vector machines (SVMs) \citep{zhao2012estimating,zhou2017residual,zhao2019efficient} and deep neural networks (DNNs) \citep{mi2019bagging}. Our specific contributions are threefold.

First, we propose a double residual loss that can be interpreted as a cost-sensitive binary classification objective with label
`$\operatorname{sign}(\epsilon_A\epsilon_Y)$' and weight
`$|\epsilon_A\epsilon_Y|$', where $\epsilon_A$ and $\epsilon_Y$ denote the treatment and outcome residuals, respectively. The proposed objective therefore requires no explicit division by the propensity score. We show that the double residual loss is closely connected to Robinson's residualization approach for CATE estimation \citep{robinson1988root,nie2021quasi}. In particular, when the CATE is restricted to functions taking values in $\{-1,1\}$, the Robinson-type squared criterion is equivalent for optimization purposes to the proposed double residual loss. Despite this connection, the validity of our method does not require the true CATE to be dichotomous, and direct learning in the spirit of Robinson residualization remains underdeveloped in the ITR literature. We then prove that our loss is both Fisher consistent and universal Neyman orthogonal.

Second, we derive a generic, nonasymptotic, high probability bound on the value function regret relative to the Bayes classifier when the second-stage empirical risk minimization is performed over a general sieve \citep{chen2007large}. We then specialize this result to binary treatment rule sieves whose combinatorial complexity is controlled by their {Vapnik--Chervonenkis (VC)} dimension. This specialization covers several treatment rule classes commonly used in the ITR and policy learning literature, including linear and decision tree rules \citep{manski1975maximum,athey2021policy,zhou2023offline}.

Third, we derive a generic value function regret bound for surrogate risk minimization. Our analysis accommodates a broad family of classification calibrated surrogate losses, including the hinge, squared hinge, and logistic losses, as well as general second-stage score sieves, such as those generated by SVMs and DNNs. Exact optimization of the underlying cost-sensitive classification problem over a rich treatment rule class may be computationally intractable and, for many classes, NP-hard. Surrogate relaxation, on the other hand, provides a practical alternative that reduces the computational burden while retaining some statistical guarantees \citep{bartlett2006convexity,zhao2012estimating,zhou2017residual,zhao2019efficient}. We explain that surrogate relaxation generally does not preserve universal Neyman orthogonality without additional restrictions. As a particularly interesting case, we show that Neyman orthogonality is preserved under the hinge loss with bounded sieves, yielding a computationally tractable procedure whose regret remains only second-order sensitive to nuisance estimation errors.

\begin{table}[ht!]
\caption{A selective comparison between existing ITR methods with the proposed orthogonal double residual learning (ODRL) approach. Entries refer to the cited formulations rather than every possible extension. Here, $\mu(a,\bX):=\bE(Y\mid A=a,\bX)$ denotes the quality function, $m(\bX):=\bE(Y\mid\bX)$ denotes the marginal outcome regression, and $\pi(a,\bX):=\Pr(A=a\mid\bX)$ denotes the propensity score. ``Required'' indicates that the corresponding function must be correctly specified. Paired ``Either'' entries indicate that correct specification of either nuisance function is sufficient. ``Working'' denotes a working model that is not required for Fisher consistency. ``Flexible'' indicates that the nuisance function may be estimated using machine learning, and ``--'' indicates that the corresponding function is not required.  }
\label{tab:literature-comparison}
\centering

\begingroup
\scriptsize
\linespread{1}\selectfont
\compacttablesetup

\setlength{\tabcolsep}{3.2pt}
\renewcommand{\arraystretch}{0.90}

\begin{tabularx}{\linewidth}{
    @{}
    >{\raggedright\arraybackslash}X
    >{\centering\arraybackslash}p{0.072\linewidth}
    *{3}{>{\centering\arraybackslash}p{0.069\linewidth}}
    >{\centering\arraybackslash}p{0.105\linewidth}
    >{\centering\arraybackslash}p{0.070\linewidth}
    >{\centering\arraybackslash}p{0.140\linewidth}
    @{}
}
\toprule
\addlinespace[0.5em]
Method
&
Route
&
\multicolumn{3}{c}{Modeling assumptions}
&
\shortstack{Neyman\\orthogonality}
&
\shortstack{Cross-\\fitting?}
&
\shortstack{Explicit division \\by propensity\\score?}
\\[-0.15em]
\cmidrule(lr){3-5}
& & $\mu$ & $m$ & $\pi$ & & &
\\[0.3em]
\midrule
\addlinespace[0.5em]
Q-learning \citep{qian2011performance}
&
Indirect
&
Required
&
--
&
--
&
NA
&
No
&
No
\\

\addlinespace[1.0em]

A-learning \citep{schulte2015q}
&
Indirect
&
--
&
Either
&
Either
&
NA
&
No
&
No
\\

\addlinespace[1.0em]

D-learning \citep{qi2018d}
&
Indirect
&
--
&
--
&
Required
&
NA
&
No
&
Yes
\\

\addlinespace[1.0em]

E-learning \citep{mo2022efficient}
&
Indirect
&
--
&
Either
&
Either
&
NA
&
No
&
Yes
\\\addlinespace[0.3em]

\midrule
\addlinespace[0.5em]
OWL
\citep{zhao2012estimating}
&
Direct
&
--
&
--
&
Required
&
No
&
No
&
Yes
\\

\addlinespace[1.0em]

RWL \citep{zhou2017residual}
&
Direct
&
Working
&
--
&
Required
&
No
&
No
&
Yes
\\

\addlinespace[1.0em]

EARL
\citep{zhao2019efficient}
&
Direct
&
Either
&
--
&
Either
&
Generally not
&
Optional
&
Yes
\\

\addlinespace[1.0em]

CAIPWL
\citep{athey2021policy,zhou2023offline}
&
Direct
&
Flexible
&
--
&
Flexible
&
Yes
&
Yes
&
Yes
\\

\addlinespace[1.0em]

Optimal retargeting \citep{kallus2021more}
&
Direct
&
Flexible
&
--
&
Flexible
&
No
&
Optional
&
No
\\

\addlinespace[1.0em]

\textbf{ODRL (proposed)}
&
\textbf{Direct}
&
\textbf{--}
&
\textbf{Flexible}
&
\textbf{Flexible}
&
\textbf{Yes}
&
\textbf{Yes}
&
\textbf{No}
\\
\addlinespace[0.3em]
\bottomrule
\end{tabularx}

\endgroup
\end{table}

The remainder of this article is organized as follows. Section~\ref{sec:setup} introduces the notation, identification assumptions, along with nonparametric identification of the value function and optimal ITR. Section~\ref{sec:odrl} develops the double residual loss and establishes its statistical properties. Section~\ref{sec:genera-regret-bound} presents the two-stage cross-fitted ODRL procedure and derives a general value function regret bound. Sections~\ref{sec:exact-sieves} and~\ref{sec:surrogate-sieves} specialize the framework to exact optimization over binary VC sieves and surrogate relaxation over score sieves, respectively. Section~\ref{sec:simulation} reports the simulation experiments. Section~\ref{sec:data-rhc} applies the proposed methods to observational data from the Right Heart Catheterization study. Section~\ref{sec:conclusion} concludes.

\section{Setup and notation}\label{sec:setup}
We consider a randomized or an observational study with a binary treatment $A\in\calA:=\{-1,1\}$.
An analyst observes $n$ independent and identically distributed copies of the data triplet $\mO_i=(Y_i,A_i,\bX_i)^\top$, where $Y_i$ denotes the outcome or reward, with higher values being favorable, and $\bX_i\in\calX$ denotes the $p$-vector of baseline covariates. We assume $\bE(|Y|)<\infty$ throughout. To enable personalized recommendations for precision medicine or policy evaluation, a central quantity of interest is the individualized treatment rule, defined as a decision rule $d:\calX\to\calA$. The optimal ITR maximizes the expected potential reward under that rule, which is given by
\begin{align*}
    d^\ast(\bX)\in\arg\max_{d\in\calD}V(d),
\end{align*}
where $V(d):=\bE\lb Y(d(\bX))\rb$ is the \emph{value function}, $Y(a)$ for $a\in\calA$ denotes the potential reward under assignment $a$, $\calD\subseteq\check{\calD}$ is the specified class of decision rules that encodes domain knowledge, and $\check{\calD}$ contains all measurable decision rules. To identify the value function and the optimal ITR, we invoke the following standard, structural assumptions.
\begin{assumption}\label{asp:iden}
We assume (i) consistency, $Y=Y(A)$ \citep{rubin1974estimating}; (ii) treatment ignorability, $\{Y(1),Y(-1)\}\perp A|\bX$; and (iii) positivity, $\pi(a,\bX)=\Pr(A=a|\bX)\geq\epsilon_0$ for all $a\in\calA$ and $\bX$, for some constant $\epsilon_0\in(0,1/2]$.    
\end{assumption}
Under Assumption \ref{asp:iden}, \cite{qian2011performance} showed that when $\calD=\check{\calD}$, the value function and optimal ITR can be identified as
\begin{align}
    V(d)=&\bE\lbb\frac{\mathbb{I}\lb A=d(\bX)\rb}{\pi(A,\bX)}Y\rbb=\bE\lbb\mu\lb d(\bX),\bX\rb\rbb,\label{eq:iden-value-fun}\\
    d^\ast(\bX)\in&\arg\max_{a\in\calA}\mu(a,\bX),
    \notag
\end{align}
where $\mone(\cdot)$ is the indicator function and $\mu(A,\bX)=\bE(Y\mid A,\bX)$ denotes the conditional outcome mean or the \emph{quality function}, namely the mean reward under treatment $A$ given covariates $\bX$. Alternatively, the optimal ITR is also determined by the sign of the conditional average treatment effect:
\begin{align}
    d^\ast(\bX)=\text{sign}\lb \tau(\bX)\rb,\notag
\end{align}
where $\tau(\bX)=\mu(1,\bX)-\mu(-1,\bX)$ represents the CATE, and $\text{sign}(x)=\mathbb{I}(x\geq0)-\mathbb{I}(x<0)$. 

\section{Orthogonal double residual learning}\label{sec:odrl}
\subsection{Constructing the double residual loss function} 
Hereafter, we retain consistency and ignorability from Assumption \ref{asp:iden} and replace the positivity condition by the following weaker version:
\begin{assumption}[\emph{Weak positivity}]\label{assump:weak-positivity}
We assume that $\epsilon_n\leq\pi(1,\bX)\leq1-\epsilon_n$ for all $\bX$, where $\epsilon_n\in(0,1/2]$ may depend on $n$. Thus, overlap may remain fixed or may deteriorate with $n$.   
\end{assumption}

Our proposed method constructs a direct learning objective whose unrestricted population optimizer agrees with the value optimal rule while avoiding division by the propensity score. In ITR learning, optimal retargeting and covariate balancing likewise modify the learning objective \citep{kallus2021more,lee2024effective}, but these constructions do not by themselves establish the Neyman orthogonality. The construction of residual-based objectives has received considerable attention in CATE estimation \citep{nie2021quasi,chernozhukov2024applied,morzywolek2023weighted}, but little attention in the ITR literature. To proceed, we propose a simple Neyman orthogonal loss function based on the treatment residual $\epsilon_A:=A-\bE(A\mid\bX)=A-e(\bX)$ and the outcome residual $\epsilon_Y:=Y-m(\bX)=Y-\bE(Y\mid\bX)$:
\begin{align}\label{eq:objective-double residual-classification}
    \calL(d):=\bE\lbb |\epsilon_A\epsilon_Y|\mone\lb d(\bX)\neq\text{sign}(\epsilon_A\epsilon_Y) \rb\rbb
    =\bE\lb \frac{|\epsilon_A\epsilon_Y|-\epsilon_A\epsilon_Yd(\bX)}{2}\rb,
\end{align}
where $e(\bX)=2\pi(1,\bX)-1$ holds by definition. Minimization of the loss function in \eqref{eq:objective-double residual-classification} defines a supervised cost-sensitive binary classification problem \citep{bartlett2006convexity} with covariates $\bX$, label $\text{sign}(\epsilon_A\epsilon_Y)$, and weights $|\epsilon_A\epsilon_Y|$. Intuitively, $|\epsilon_A\epsilon_Y|$ can be interpreted as the misclassification cost, and the optimal prediction rule seeks to minimize the average misclassification cost $\calL(d)$. The form of the loss $\calL$ motivates the term \emph{double residual}, because $\calL$ depends on the product of two residuals, $\epsilon_A\epsilon_Y$. 

Our construction of the loss function shares the same spirit as the R-learner in the CATE estimation literature \citep{nie2021quasi}, which builds upon the original idea of \cite{robinson1988root}, in that the loss is Neyman orthogonal, as shown in Theorem \ref{prop:Neyman-orthogonality}, and is constructed from double residuals. Here, we provide some further insights into the proposed double residual loss and Robinson's residualization for the CATE. Robinson's residualization states that the CATE $\tau$ minimizes the following squared loss function:
\begin{align*}
    \mathcal{J}(\tau):=\bE\lbb \{\epsilon_Y-2^{-1}\epsilon_A\tau(\bX)\}^2\rbb.
\end{align*}
Suppose that $\tau(\bX)\in\calA=\{-1,1\}$, 
so that $\tau=\text{sign}(\tau)$. Since $\tau^2=1$, there exists a constant $C$ independent of $\tau$ such that $ \mathcal{J}(\tau)=C-\bE\lb \epsilon_A\epsilon_Y \tau(\bX)\rb$. Therefore, when the CATE function is restricted to binary values, minimizing Robinson's squared loss over $\tau$ is equivalent to minimizing $-2^{-1}\bE\lb \epsilon_A\epsilon_Y \tau(\bX)\rb$, which is exactly the term in the proposed double residual loss that depends on $d$. However, unlike Robinson's squared loss, the proposed double residual loss directly targets the ITR, admits a cost-sensitive classification interpretation, is shown to be Fisher consistent in Theorem~\ref{thm:valid-loss}, and does not require the CATE function to take only binary values.

\subsection{Fisher consistency and universal Neyman orthogonality}
The proposed loss function $\calL$ has two notable properties: Fisher consistency, summarized in Theorem \ref{thm:valid-loss}, and Neyman orthogonality, established in Theorem \ref{prop:Neyman-orthogonality}.
 
\begin{theorem}[\emph{Fisher consistency}]\label{thm:valid-loss}
Under the consistency and ignorability conditions in Assumption \ref{asp:iden} and Assumption \ref{assump:weak-positivity}, the rule $d^\ast(\bX)=\operatorname{sign}\{\tau(\bX)\}$ is a minimizer of $\calL$ over all measurable binary decision rules. Every minimizer agrees with $d^\ast$ almost surely on $\{\tau(\bX)\neq0\}$; if $\Pr\{\tau(\bX)=0\}=0$, then $d^\ast$ is the unique minimizer almost surely.
\end{theorem}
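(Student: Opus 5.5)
The plan is to reduce $\calL(d)$ to an expression that depends on $d$ only through $\bE\{\tau(\bX)\,w(\bX)\,d(\bX)\}$ for a nonnegative weight $w$, and then minimize pointwise. By the second expression in \eqref{eq:objective-double residual-classification},
\begin{align*}
\calL(d)=\tfrac12\bE|\epsilon_A\epsilon_Y|-\tfrac12\bE\{\epsilon_A\epsilon_Y d(\bX)\}.
\end{align*}
The first term does not involve $d$. So minimizing $\calL$ is equivalent to maximizing $\bE\{\epsilon_A\epsilon_Y d(\bX)\}$. Before manipulating this, I would check that $\bE|\epsilon_A\epsilon_Y|<\infty$. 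This holds because $|\epsilon_A|\le 2$ and $\bE|Y|<\infty$, which also gives $\bE|m(\bX)|<\infty$.

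Next, condition on $\bX$. Because $d(\bX)$ is $\bX$-measurable, $\bE\{\epsilon_A\epsilon_Y d(\bX)\}=\bE[d(\bX)\,\bE(\epsilon_A\epsilon_Y\mid\bX)]$. Since $\bE(\epsilon_A\mid\bX)=0$, the $m(\bX)$ part drops out:
\begin{align*}
\bE(\epsilon_A\epsilon_Y\mid\bX)=\bE(\epsilon_A Y\mid\bX)=\sum_{a\in\calA}\pi(a,\bX)\{a-e(\bX)\}\mu(a,\bX).
\end{align*}
Here $\mu(a,\bX)=\bE(Y\mid A=a,\bX)$ is well defined because Assumption~\ref{assump:weak-positivity} gives $\pi(a,\bX)\ge\epsilon_n>0$. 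Substituting $e=2\pi(1,\bX)-1$ gives $1-e=2\pi(-1,\bX)$ and $-1-e=-2\pi(1,\bX)$. Hence
\begin{align*}
\bE(\epsilon_A\epsilon_Y\mid\bX)=2\pi(1,\bX)\pi(-1,\bX)\{\mu(1,\bX)-\mu(-1,\bX)\}.
\end{align*}
By consistency and ignorability, $\mu(a,\bX)=\bE\{Y(a)\mid\bX\}$, so the bracket equals $\tau(\bX)$, and the conditional expectation is $w(\bX)\tau(\bX)$ with $w(\bX):=2\pi(1,\bX)\pi(-1,\bX)$. Weak positivity gives $w(\bX)\ge 2\epsilon_n(1-\epsilon_n)>0$ for every $\bX$. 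Note that $w$ is exactly the overlap weight.

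The objective is now $\bE\{w(\bX)\tau(\bX)d(\bX)\}$ with $d(\bX)\in\{-1,1\}$. For every $\bX$,
\begin{align*}
w(\bX)\tau(\bX)d(\bX)\le w(\bX)|\tau(\bX)|,
\end{align*}
with equality if and only if $\tau(\bX)=0$ or $d(\bX)=\operatorname{sign}\{\tau(\bX)\}$. Taking expectations shows that $d^\ast$ attains the upper bound $\bE\{w|\tau|\}$, so $d^\ast$ is a minimizer. It is measurable because $\tau$ is.

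For the agreement claim, let $d$ be any minimizer. Then $\bE[w(\bX)\{|\tau(\bX)|-\tau(\bX)d(\bX)\}]=0$. The integrand is nonnegative, so it vanishes almost surely. Since $w>0$, on $\{\tau\neq0\}$ this forces $d(\bX)=\operatorname{sign}\{\tau(\bX)\}$ almost surely. If $\Pr\{\tau(\bX)=0\}=0$, this set has full measure and uniqueness almost surely follows.

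I expect the only delicate step to be the conditional-expectation computation. It requires the integrability check above and the strict positivity of $w$; the latter is exactly where Assumption~\ref{assump:weak-positivity} enters, and it is what delivers the almost-sure uniqueness.
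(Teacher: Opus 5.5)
Your proposal is correct and follows the paper's proof essentially step for step. Both isolate the $d$-dependent term $-\frac12 E\{\epsilon_A\epsilon_Y d(\mathbf{X})\}$, show that its conditional mean given $\mathbf{X}$ is the overlap-weighted CATE $2\pi(1,\mathbf{X})\pi(-1,\mathbf{X})\tau(\mathbf{X})$, and finish with a pointwise comparison; the paper packages that last step as the identity $\mathcal{L}(d)-\mathcal{L}(d^\ast)=E[|h_0(\mathbf{X})|\,\mathbb{I}\{d(\mathbf{X})\neq d^\ast(\mathbf{X})\}]$. Your explicit integrability check and your arm-by-arm computation of $E(\epsilon_A Y\mid\mathbf{X})$, where the paper uses the covariance form, are presentational differences that make the argument slightly more complete.
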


Theorem \ref{thm:valid-loss} states that optimizing either the proposed loss or the value function yields the Bayes classifier $d^\ast(\bX)=\operatorname{sign}\{\tau(\bX)\}$ wherever the CATE is nonzero. More precisely, the minimizer of the double residual criterion is $2\pi(1,\bX)\pi(-1,\bX)\tau(\bX)$. The overlap weight \citep{li2018balancing} is strictly positive under Assumption \ref{assump:weak-positivity}, so it does not change the sign of the CATE. On the zero-CATE set either treatment has the same value, and our convention $\operatorname{sign}(0)=1$ selects one of the minimizers. A useful feature of the criterion is that it avoids division by the propensity score, unlike direct methods based on IPW \citep{zhao2012estimating,zhou2017residual} or AIPW \citep{zhao2019efficient,athey2021policy,zhou2023offline}, and can be particularly attractive in the presence of weak overlap.

To show the Neyman orthogonality, write $\eta=(e,m)^\top$, $Z_\eta=\{A-e(\bX)\}\{Y-m(\bX)\}$, and
$\calL(d;\eta)=\bE[\{|Z_\eta|-Z_\eta d(\bX)\}/2]$, with $\eta_0=(e_0,m_0)^\top$ denoting the true values of $\eta$. The second property guarantees that learning the optimal ITR by empirical minimization of the double residual loss $\calL$ is locally insensitive to perturbations of the nuisance functions around their true values. A loss function satisfying this property is called a \emph{Neyman orthogonal loss function} \citep{foster2023orthogonal}. A prominent existing example is the direct method based on the AIPW estimator of the value function \citep{zhao2019efficient,athey2021policy,zhou2023offline}, as shown in the policy learning example in Section 3.4 of \cite{foster2023orthogonal}. The proposed double residual loss $\calL(d)$ provides another example of a Neyman orthogonal loss function, as formalized in the following theorem.




\begin{theorem}[\emph{Universal Neyman orthogonality}]\label{prop:Neyman-orthogonality}
Assume that the displayed expectations and directional derivatives exist. Then the loss $\calL(d;\eta)$ is universally Neyman orthogonal in the sense of Assumption 5 in \cite{foster2023orthogonal}. Let $\mathcal F$ be an ambient real-valued function space containing the binary decision rules. For every $\bar d\in\mathcal F$, every admissible target direction $v\in\mathcal F$, and every nuisance direction $\dot{\eta}:=(\dot e,\dot m)$,
\begin{align*}
D_\eta D_d\calL(\bar d;\eta_0)\lbb v,\dot{\eta}\rbb=0.
\end{align*}
The directional derivative $D_d$ is taken in the ambient space $\mathcal F$.
\end{theorem}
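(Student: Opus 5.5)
The plan is to compute the two directional derivatives explicitly and reduce the cross derivative to conditional moment identities that hold at $\eta_0$.

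\textbf{Step 1: derivative in $d$.} In $\calL(d;\eta)=\bE[\{|Z_\eta|-Z_\eta d(\bX)\}/2]$ the term $|Z_\eta|$ does not depend on $d$, and the remaining term is linear in $d$. Hence for any $\bar d\in\mathcal F$ and admissible direction $v\in\mathcal F$,
\begin{align*}
D_d\calL(\bar d;\eta)[v]=-\tfrac12\bE\lbb Z_\eta v(\bX)\rbb=-\tfrac12\bE\lbb\{A-e(\bX)\}\{Y-m(\bX)\}v(\bX)\rbb .
\end{align*}
This expression does not depend on $\bar d$, which is why orthogonality will hold universally, i.e. for every $\bar d$ and not only at the minimizer. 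It also sidesteps the non-differentiability of $|Z_\eta|$: that term is differentiated only jointly with $\eta$ after $D_d$ has already removed it, and the assumption that the displayed derivatives exist covers the interchange of differentiation and expectation.

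\textbf{Step 2: derivative in $\eta$.} Perturb $\eta_0$ to $\eta_0+t\dot\eta$ and differentiate at $t=0$. The product rule gives
\begin{align*}
D_\eta D_d\calL(\bar d;\eta_0)[v,\dot\eta]=\tfrac12\bE\lbb \dot e(\bX)\{Y-m_0(\bX)\}v(\bX)\rbb+\tfrac12\bE\lbb \{A-e_0(\bX)\}\dot m(\bX)v(\bX)\rbb .
\end{align*}

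\textbf{Step 3: conditioning on $\bX$.} In each term the factors other than the residual are functions of $\bX$, so I condition on $\bX$. Since $\bE\{Y-m_0(\bX)\mid\bX\}=0$ and $\bE\{A-e_0(\bX)\mid\bX\}=0$ by the definitions $m_0(\bX)=\bE(Y\mid\bX)$ and $e_0(\bX)=\bE(A\mid\bX)$, both terms vanish. This requires only the integrability implied by existence of the displayed expectations. Notably, no second-order term in $(\dot e,\dot m)$ appears, since only one derivative in $\eta$ is taken.

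\textbf{Main obstacle.} The only delicate point is making precise the class of admissible directions $v$ and $\dot\eta$ and justifying differentiation under the expectation. I would handle this by restricting to directions for which $\dot e\,v\,\epsilon_Y$ and $\epsilon_A\,\dot m\,v$ are integrable; for example, this holds if the directions are bounded and $\bE|Y|<\infty$. With that restriction in place, the map $t\mapsto\bE[Z_{\eta_0+t\dot\eta}v]$ is a quadratic polynomial in $t$, so the Gateaux derivative is exact, and the verification matches Assumption 5 of \cite{foster2023orthogonal}.
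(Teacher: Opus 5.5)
Your proposal is correct and follows essentially the same route as the paper's proof. Like the paper, you compute $D_d\calL(\bar d;\eta)[v]=-\tfrac12\bE\{Z_\eta v(\bX)\}$, which is free of $\bar d$, differentiate along $\eta_0+t\dot\eta$, and kill both resulting terms by iterated expectation using $\bE\{Y-m_0(\bX)\mid\bX\}=0$ and $\bE\{A-e_0(\bX)\mid\bX\}=0$. Your added integrability restriction on the directions is a harmless sharpening of the paper's blanket existence assumption.
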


Theorem \ref{prop:Neyman-orthogonality} shows that the double residual loss satisfies universal Neyman orthogonality, thereby mitigating the propagation of nuisance estimation errors. 
We summarize three important features of the proposed double residual loss that are not simultaneously satisfied by existing direct methods. First, Fisher consistency ensures that the double residual loss correctly targets the optimal ITR. Second, its form improves robustness to the empirical instability of inverse propensity score weighting by avoiding division by the propensity score. Finally, Neyman orthogonality improves robustness to nuisance estimation errors. 

\section{Two-stage meta-learning and a general value function regret bound}\label{sec:genera-regret-bound}
For implementation, we propose the two-stage meta-learning procedure in Algorithm \ref{alg:cross_fitted_double_residual}, with first-stage nuisance predictions obtained using any machine learning method under cross-fitting and second-stage empirical risk minimized over a possibly data-adaptive decision rule sieve $\calD_n$ \citep{chen2007large}. Popular sieve specifications in the ITR and policy learning literature include linear score threshold sieves \citep{athey2021policy,zhou2023offline}, decision tree sieves \citep{athey2021policy,zhou2023offline}, support vector machines \citep{zhao2012estimating,zhou2017residual,zhao2019efficient}, and deep neural networks \citep{mi2019bagging}. In this section, we derive a nonasymptotic, sieve-agnostic regret bound based on a high-level rate for the sieve complexity and defer sieve-specific analyses to Examples \ref{ss:linear-sieves}--\ref{ss:relu-sieves}.

\begin{algorithm}[ht!]
\caption{Two-stage cross-fitted orthogonal double residual learning}
\label{alg:cross_fitted_double_residual}
\begin{algorithmic}[1]
\fontsize{10}{10}\selectfont
\Require Observed data
$\{\mathcal O_i=(Y_i,A_i,\bX_i)\}_{i=1}^n$,
$R$ folds, and decision rule sieve
$\mathcal D_n\subseteq\{d:\mathcal X\rightarrow\{-1,1\}\}$
\Ensure Cross-fitted nuisance predictions
$\{\widehat e_i,\widehat m_i\}_{i=1}^n$
and estimated treatment rule $\widehat d_n$

\State \textbf{Stage 1:} Randomly partition $[n]$ into $R$ disjoint folds
$\mathcal F_r$, $r\in[R]$.

\For{$r=1$ to $R$}
    \State Fit
    $\widehat{\boldsymbol\eta}^{(-r)}
    =(\widehat e^{(-r)},\widehat m^{(-r)})^\top$
    using
    $\{\mathcal O_j:j\in\mathcal F_r^c\}$,
    where $\mathcal F_r^c=[n]\setminus\mathcal F_r$.
    
    \State Set
    $\widehat{\boldsymbol\eta}_i
    =(\widehat e_i,\widehat m_i)^\top
    =\widehat{\boldsymbol\eta}^{(-r)}(\bX_i)$
    for all $i\in\mathcal F_r$.
\EndFor

\State \textbf{Stage 2:} For all $i\in[n]$, compute
$\widehat Z_i=(A_i-\widehat e_i)(Y_i-\widehat m_i)$.

\State Estimate
$\displaystyle
\widehat d_n
\in
\operatorname*{arg\,min}_{d\in\mathcal D_n}
\frac{1}{2n}\sum_{i=1}^n
\left\{|\widehat Z_i|-\widehat Z_i d(\bX_i)\right\}.
$

\State Equivalently, solve a cost-sensitive binary classification problem
with covariates $\bX_i$, labels $\operatorname{sign}(\widehat Z_i)$,
and weights $|\widehat Z_i|$.

\end{algorithmic}
\end{algorithm}

For a treatment rule $d$, define its value regret by
\begin{align*}
 \operatorname{Reg}_{V}(d):=
 V(d^\ast)-V(d) =
 \bE\lbb
 |\tau(\bX)|
 \mone\{d(\bX)\neq d^\ast(\bX)\}
 \rbb.
\end{align*}
Interestingly, the excess double residual risk admits a direct decision theoretic interpretation as the overlap weighted regret \citep{li2018balancing}:
\begin{align*}
 \operatorname{Reg}_{\omega}(d):=
 \calL(d;\eta_0)-\calL(d^\ast;\eta_0)=\bE\lbb
 2\pi(1,\bX)\pi(-1,\bX)|\tau(\bX)|
 \mone\{d(\bX)\neq d^\ast(\bX)\}
 \rbb.     
\end{align*}
Moreover, under Assumption~\ref{assump:weak-positivity}, it follows that $ 2\epsilon_n(1-\epsilon_n)\operatorname{Reg}_{V}(d)
 \leq
 \operatorname{Reg}_{\omega}(d)
 \leq
 2^{-1}\operatorname{Reg}_{V}(d)$.
Consequently, a bound on the excess double residual risk is
immediately a bound on the value regret of the learned ITR. To derive a general value function regret bound, it is convenient to remove the component of the double residual loss that is constant with respect to the target rule. The same device is used for the AIPW-based method in Example 3.4 of \cite{foster2023orthogonal}. Define the target-dependent linear criterion $\mathcal R(d;\eta)=\bE\{\ell^{\mathrm{linear}}(d;\eta)\}$, where $\ell^{\mathrm{linear}}(d;\eta)=-2^{-1}Z_\eta d(\bX)$.
For every pair of rules $d,d'$ and every nuisance value $\eta$, $\calL(d;\eta)-\calL(d';\eta)
 =\mathcal R(d;\eta)-\mathcal R(d';\eta)$ and the same identity holds for empirical risks. Thus, replacing $\ell$ by $\ell^{\mathrm{linear}}$ changes neither empirical minimizers nor excess risks. This replacement also removes the nondifferentiable nuisance-only term $|Z_\eta|/2$. The target derivative of $\mathcal R$ is the same as that of $\calL$, so the universal Neyman orthogonality in Theorem \ref{prop:Neyman-orthogonality} also holds naturally for $\mathcal R$. 

For ease of exposition but without loss of generality, we present the theoretical analysis under a two-fold sample split, although the implementation in Algorithm \ref{alg:cross_fitted_double_residual} uses multi-fold cross-fitting to make more efficient use of the data. When the number of folds is finite and the fold sizes are approximately equal, the bound for the cross-fitted implementation follows by applying analogous arguments fold by fold. We use the first fold $\mathcal F_1$ to estimate the nuisance functions and the independent second fold $\mathcal F_2$ for empirical risk minimization. Let $P f=\bE\{f(\mO)\}$ and let $\mathbb P_2$ be the empirical measure on the second split, whose size is $n_2$, so that $\mathbb P_2f=n_2^{-1}\sum_{i\in\mathcal F_2}f(\mO_i)$. Let $\widehat\eta=(\widehat e,\widehat m)^\top$ be fitted using the first split. 
We allow an approximate empirical minimizer satisfying
\begin{align}\label{eq:sieve-erm}
 \mathbb P_2\ell^{\mathrm{linear}}(\widehat d_n;\widehat\eta)
 \leq\inf_{d\in\calD_n}\mathbb P_2\ell^{\mathrm{linear}}(d;\widehat\eta)+\rho_n,
\end{align}
where $\rho_n\geq0$ records optimization error. Since $\calL(d;\eta)-\calL(d';\eta)
 =\mathcal R(d;\eta)-\mathcal R(d';\eta)$, this is equivalent to approximate minimization of the original double residual loss. Define $a_n:=\inf_{d\in\calD_n}\lb\calL(d;\eta_0)-\calL(d^\ast;\eta_0)\rb$ and $\Delta_n(\calD_n;\widehat\eta):=\sup_{d\in\calD_n}
 \left|(\mathbb P_2-P)\lb Z_{\widehat\eta}(\mO)d(\bX)\rb\right|$.
Here, \(a_n\) denotes the sieve approximation error, namely, the smallest
excess population double residual risk attainable over \(\mathcal{D}_n\)
relative to the optimal ITR \(d^\ast\). In particular, \(a_n=0\) whenever
\(d^\ast \in \mathcal{D}_n\). The term \(\Delta_n\) is the conditional uniform
empirical process deviation of the nuisance plug-in linear criterion over
\(\mathcal{D}_n\). It captures the second-stage estimation error arising from
the complexity of the sieve and the stochastic behavior of the plug-in
residual product $Z_{\widehat\eta}$. Depending on the structure of \(\mathcal{D}_n\),
\(\Delta_n\) can be controlled using suitable maximal inequalities based on VC dimension, bracketing entropy, covering numbers, or Rademacher complexity \citep{vanderVaartWellner2023}. The following Theorem \ref{thm:sieve-regret} is the first main result of this paper. It provides a value function regret bound for a generic sieve.

\begin{theorem}[\emph{Orthogonal sieve oracle inequality}]\label{thm:sieve-regret}
Suppose that the consistency and ignorability conditions in Assumption \ref{asp:iden} and Assumption \ref{assump:weak-positivity} hold. Assume that the displayed risks and directional derivatives are finite and that the nuisance errors are square integrable. Conditional on the nuisance-training split $\mathcal{F}_1$, let $d_n^\circ\in\calD_n$ be any comparator. If a second-stage learner $\widehat d_n\in\calD_n$ satisfies, for some $r_{2,n}\geq0$,
\begin{align}\label{eq:generic-second-stage-rate}
 \mathcal R(\widehat d_n;\widehat\eta)-\mathcal R(d_n^\circ;\widehat\eta)
 \leq r_{2,n},
\end{align}
then it obeys the oracle excess risk bound $\calL(\widehat d_n;\eta_0)-\calL(d_n^\circ;\eta_0)
 \leq r_{2,n}+
 \|\widehat e-e_0\|_2\|\widehat m-m_0\|_2$,
where $\|f\|_2:=(\int |f(\bX)|^2d P_{\bX})^{1/2}$ denotes the usual $L_2(P_{\bX})$ norm. In particular, every $\widehat d_n$ satisfying \eqref{eq:sieve-erm} guarantees \eqref{eq:generic-second-stage-rate}, simultaneously for all $d_n^\circ\in\calD_n$, with
$r_{2,n}=\Delta_n(\calD_n;\widehat\eta)+\rho_n$. Consequently,
\begin{align*}
 \calL(\widehat d_n;\eta_0)-\calL(d^\ast;\eta_0)
 &\leq a_n+\Delta_n(\calD_n;\widehat\eta)+\rho_n+
 \|\widehat e-e_0\|_2\|\widehat m-m_0\|_2,
 \\
 V(d^\ast)-V(\widehat d_n)
 &\leq\frac{a_n+\Delta_n(\calD_n;\widehat\eta)+\rho_n+
 \|\widehat e-e_0\|_2\|\widehat m-m_0\|_2}
 {2\epsilon_n(1-\epsilon_n)}.
\end{align*}
Thus, any conditional high probability bound for $\Delta_n$, together with first-stage error bounds, gives a high probability regret bound by direct substitution.
\end{theorem}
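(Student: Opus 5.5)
The plan is to decompose the oracle excess risk into a second-stage term at the plug-in nuisance and a bias term from replacing $\widehat\eta$ by $\eta_0$. Then we show that the bias term is exactly a product of nuisance errors; this is where Neyman orthogonality makes the bound sharp. Since $\calL(d;\eta)-\calL(d';\eta)=\mathcal R(d;\eta)-\mathcal R(d';\eta)$, we write
\begin{align*}
\calL(\widehat d_n;\eta_0)-\calL(d_n^\circ;\eta_0)
=\{\mathcal R(\widehat d_n;\widehat\eta)-\mathcal R(d_n^\circ;\widehat\eta)\}
+\{\mathcal R(\widehat d_n;\eta_0)-\mathcal R(\widehat d_n;\widehat\eta)\}-\{\mathcal R(d_n^\circ;\eta_0)-\mathcal R(d_n^\circ;\widehat\eta)\}.
\end{align*}
All expectations are conditional on $\mathcal F_1$, so $\widehat\eta$ is a fixed function. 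The first brace is at most $r_{2,n}$ by \eqref{eq:generic-second-stage-rate}.

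For the remaining terms, $\mathcal R$ is linear in $d$, so they combine to $-\tfrac12\bE[\{Z_{\eta_0}-Z_{\widehat\eta}\}\{\widehat d_n(\bX)-d_n^\circ(\bX)\}]$. The key computation conditions on $\bX$. Write $\delta_e=\widehat e-e_0$ and $\delta_m=\widehat m-m_0$. Then
\begin{align*}
Z_{\widehat\eta}=(\epsilon_A-\delta_e)(\epsilon_Y-\delta_m)
=Z_{\eta_0}-\delta_e\epsilon_Y-\delta_m\epsilon_A+\delta_e\delta_m .
\end{align*}
Since $\bE(\epsilon_A\mid\bX)=\bE(\epsilon_Y\mid\bX)=0$ and $d$ depends only on $\bX$, the cross terms vanish in expectation; this is the content of Theorem~\ref{prop:Neyman-orthogonality} made exact by the bilinearity of $Z_\eta$ in $(e,m)$. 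The remainder is therefore $\tfrac12\bE[\delta_e\delta_m\{\widehat d_n-d_n^\circ\}]$. Because $|\widehat d_n-d_n^\circ|\le 2$, Cauchy--Schwarz bounds it by $\|\widehat e-e_0\|_2\|\widehat m-m_0\|_2$. This gives the oracle inequality.

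Next we verify that \eqref{eq:sieve-erm} implies \eqref{eq:generic-second-stage-rate}. The standard chain is
\begin{align*}
\mathcal R(\widehat d_n;\widehat\eta)-\mathcal R(d_n^\circ;\widehat\eta)
=(P-\mathbb P_2)\{\ell^{\rm lin}(\widehat d_n)-\ell^{\rm lin}(d_n^\circ)\}
+\mathbb P_2\{\ell^{\rm lin}(\widehat d_n)-\ell^{\rm lin}(d_n^\circ)\}.
\end{align*}
The second term is at most $\rho_n$. Because $\ell^{\rm lin}=-\tfrac12 Z d$, the first term is at most $\tfrac12\cdot 2\Delta_n=\Delta_n$. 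This gives $r_{2,n}=\Delta_n+\rho_n$ simultaneously for all comparators, since $\Delta_n$ is a supremum over $\calD_n$.

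For the final displays, we take $d_n^\circ$ to be a near-minimizer of the population excess risk over $\calD_n$ (within arbitrary $\varepsilon$, then let $\varepsilon\to0$), and add $\calL(d_n^\circ;\eta_0)-\calL(d^\ast;\eta_0)\le a_n+\varepsilon$. The value regret bound then follows from the stated sandwich $2\epsilon_n(1-\epsilon_n)\operatorname{Reg}_V\le\operatorname{Reg}_\omega$. That sandwich rests on the overlap-weighted regret identity: conditional on $\bX$, $\bE(\epsilon_A\epsilon_Y\mid\bX)=\operatorname{Cov}(A,Y\mid\bX)=2\pi(1,\bX)\pi(-1,\bX)\tau(\bX)$ under ignorability, and the bound uses $\pi(1)\pi(-1)\ge\epsilon_n(1-\epsilon_n)$. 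The main delicate point is the product-remainder step. The cross terms must be justified conditional on $\mathcal F_1$, using independence of the splits and integrability of $\delta_e\epsilon_Y$ and $\delta_m\epsilon_A$, so that the nuisance error enters only through $\delta_e\delta_m$ with no first-order piece.
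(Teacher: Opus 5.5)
Your proof is correct. Its skeleton matches the paper's: the add-and-subtract decomposition at the plug-in nuisance, the verification $r_{2,n}=\Delta_n+\rho_n$, the passage to $a_n$, and the overlap-weight conversion to value regret. The difference is in how you handle the nuisance bias term.

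The paper takes a second-order Taylor expansion of $\mathcal R(\cdot;\eta)$ around $\eta_0$, with intermediate points $\bar\eta_1,\bar\eta_2$. It cancels the first-order terms by integrating along the ambient segment $d_t=d_n^\circ+t(\widehat d_n-d_n^\circ)$, commuting the mixed derivatives, and invoking the universal Neyman orthogonality of Theorem~\ref{prop:Neyman-orthogonality}. It then evaluates the constant second derivative $-P(d\,\dot e\,\dot m)$ to obtain $\tfrac12P[\delta_e\delta_m(\widehat d_n-d_n^\circ)]$.

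You reach the same expression in one line from the exact identity $\bE(Z_{\widehat\eta}-Z_{\eta_0}\mid\bX)=\delta_e\delta_m$. The paper itself uses this identity in the proof of Theorem~\ref{thm:bounded-hinge-sieve}.

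\emph{What your route buys.} It is shorter and needs no mean-value or derivative machinery, only the integrability that legitimizes the iterated expectations. It also makes plain why the paper's ``Taylor remainder'' is in fact exact: $\mathcal R$ is quadratic in $\eta$.

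\emph{What the paper's route buys.} It is written as an instance of the Foster--Syrgkanis orthogonal-learning template. The first-order cancellation is displayed as literally being orthogonality, and the argument extends to criteria that are not polynomial in $\eta$, where one bounds the second derivative along the segment rather than computing it.

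One small correction: the vanishing of the cross terms does not rely on independence of the splits. It holds for any fixed $\widehat\eta$ and $\widehat d_n$, because $P$ integrates over a fresh observation $\mO$. Independence is needed only later, when controlling $\Delta_n$.
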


Several implications of Theorem \ref{thm:sieve-regret} are worth emphasizing. First, the oracle excess risk bound admits a modular interpretation. That is, if a second-stage procedure achieves a conditional plug-in excess risk bound \(r_{2,n}\), then its oracle excess risk is bounded by \(r_{2,n}\) plus the nuisance product remainder. Second, the resulting regret bound separates four distinct sources, including the sieve approximation error \(a_n\), the second-stage empirical process error \(\Delta_n(\calD_n;\widehat\eta)\), the numerical optimization or computational error \(\rho_n\), and the first-stage nuisance estimation error \(\|\widehat e-e_0\|_2\|\widehat m-m_0\|_2\). In particular, (i) the optimization or computational error may remain nonnegligible for exact optimization over a rich decision rule sieve \(\calD_n\); and (ii) \(a_n\) measures how well the sieve \(\calD_n\) approximates the decision rule class \(\calD\) containing the optimal ITR. Enriching \(\calD_n\) may reduce \(a_n\) while increasing \(\Delta_n\). 
Third, universal Neyman orthogonality renders the nuisance contribution bilinear rather than first order. If \(\|\widehat e-e_0\|_2=o_{\mP}(r_{e,n})\) and \(\|\widehat m-m_0\|_2=o_{\mP}(r_{m,n})\), then the first-stage contribution is \(o_{\mP}(r_{e,n}r_{m,n})\), rather than the \(o_{\mP}(r_{e,n}+r_{m,n})\) contribution that generally arises without orthogonality. Thus, if both nuisance estimators converge at rates \(o_{\mP}(n^{-1/4})\), their product is \(o_{\mP}(n^{-1/2})\); if one nuisance is estimated at the parametric rate, consistency of the other suffices for the product remainder to be \(o_{\mP}(n^{-1/2})\). Fourth, the weak positivity factor appears only when the overlap weighted classification excess risk is converted into value function regret. In particular, if overlap deteriorates as the sample size increases, as reflected by \(\epsilon_n=o(1)\), the regret may converge more slowly, reflecting the increased statistical difficulty under weaker overlap.

\section{Exact optimization with binary VC sieves}\label{sec:exact-sieves}
The oracle inequality in Theorem \ref{thm:sieve-regret} is stated for a general decision rule sieve. We now specialize the value function regret bound to exact empirical optimization over a binary treatment rule sieve $\calD_n\subseteq\{-1,1\}^{\calX}$, where $\{-1,1\}^{\calX}$ denotes the set of all measurable functions from $\calX$ to $\{-1,1\}$. Exact empirical risk minimization for the cost-sensitive binary classification problem in \eqref{eq:objective-double residual-classification} is combinatorial and typically NP-hard for sufficiently complex sieves. To facilitate computation by controlling sieve complexity, we use the VC dimension as a natural measure of the combinatorial complexity of a binary treatment rule sieve. 

To interpret the VC dimension, consider $r$ covariate values $x_1,\ldots,x_r\in\calX$. The class $\calD_n$ \emph{shatters} these points if it can realize all $2^r$ possible treatment assignments. That is, for each label vector $a=(a_1,\ldots,a_r)\in\{-1,1\}^r$, there exists a rule $d_a\in\calD_n$ such that $d_a(x_j)=a_j$ for all $j\in[r]$. The VC dimension $\operatorname{VC}(\calD_n)$ is the largest $r$ for which such a shattered set exists and is infinite if arbitrarily large finite sets can be shattered. Thus, the VC dimension measures the combinatorial richness of the treatment rule sieve. A larger value permits more distinct treatment assignments on a finite sample and hence greater scope for overfitting. Structured binary classes of this type are also used by \cite{athey2021policy} and \cite{zhou2023offline} for policy learning with doubly robust AIPW risk. 
Let $v_n:=\operatorname{VC}(\calD_n)$, $\bar v_n:=v_n\vee1$, and $q_{n,\mathrm{VC}}(\delta):=
\sqrt{\{\bar v_n+\log(2/\delta)\}/{n_2}}$, where $a\vee b:=\max(a,b)$. 
Theorem \ref{thm:vc-sieve-regret} below provides a value function regret bound for sieves whose complexity is measured by the VC dimension. It can be viewed as a special case of the general bound in Theorem \ref{thm:sieve-regret}, obtained by applying maximal inequalities tailored to VC classes together with McDiarmid's inequality to bound $\Delta_n$.

\begin{theorem}[\emph{VC sieve regret under exact optimization}]\label{thm:vc-sieve-regret}
Suppose that the consistency and ignorability conditions in Assumption \ref{asp:iden} and the weak positivity condition in Assumption \ref{assump:weak-positivity} hold. Let $\calD_n$ be nonempty, have $v_n<\infty$, and be fixed or measurable with respect to $\mathcal F_1$. Let $\widehat d_n\in\calD_n$ satisfy the approximate empirical minimization condition \eqref{eq:sieve-erm}. Assume, for some $M<\infty$, that $|Y|\leq M$ almost surely, $\|\widehat m\|_\infty\leq M$, and $\|\widehat e\|_\infty\leq1$, where $\|f\|_\infty:=\sup_{x}f(x)$. Conditional on the nuisance training split, there is a universal constant $C>0$ such that, for every $\delta\in(0,1)$, with probability at least $1-\delta$,
\begin{align}
 V(d^\ast)-V(\widehat d_n)
 \leq\frac{a_n+C M q_{n,\mathrm{VC}}(\delta)
 +\rho_n+\|\widehat e-e_0\|_2\|\widehat m-m_0\|_2}
 {2\epsilon_n(1-\epsilon_n)}.
 \label{eq:vc-sieve-regret}
\end{align}
Alternatively, a variance-adaptive regret bound is obtained by replacing \(q_{n,\mathrm{VC}}(\delta)\) in \eqref{eq:vc-sieve-regret} with $\{
\bar\omega_n+\|\widehat e-e_0\|_2^2
\}^{1/2}
q_{n,\mathrm{VC}}(\delta)
+q_{n,\mathrm{VC}}^2(\delta)$, where $\bar\omega_n
:=\bE\{\pi(1,\bX)\pi(-1,\bX)\}$.
\end{theorem}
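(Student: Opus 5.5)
The plan is to apply Theorem~\ref{thm:sieve-regret} directly. That result already reduces the value regret to $\{a_n+\Delta_n(\calD_n;\widehat\eta)+\rho_n+\|\widehat e-e_0\|_2\|\widehat m-m_0\|_2\}/\{2\epsilon_n(1-\epsilon_n)\}$. The only remaining task is a conditional high-probability bound
\[
\Delta_n(\calD_n;\widehat\eta)\leq CMq_{n,\mathrm{VC}}(\delta)
\]
given $\mathcal F_1$. Conditioning on $\mathcal F_1$ makes $\widehat\eta$ and $\calD_n$ fixed, so the observations in $\mathcal F_2$ are i.i.d.\ from $P$ and the function class
\[
\mathcal G:=\{\mO\mapsto Z_{\widehat\eta}(\mO)d(\bX):d\in\calD_n\}
\]
is deterministic. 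Under the stated bounds $|A-\widehat e|\leq 2$ and $|Y-\widehat m|\leq 2M$, so $\mathcal G$ has the envelope $F:=|Z_{\widehat\eta}|\leq 4M$.

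\emph{Step 1 (concentration).} Changing one observation changes $\Delta_n$ by at most $2\cdot 4M/n_2$. McDiarmid's inequality then gives, with conditional probability at least $1-\delta/2$,
\[
\Delta_n\leq \bE\Delta_n+8M\sqrt{\log(2/\delta)/(2n_2)}.
\]

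\emph{Step 2 (expected supremum).} Symmetrize to get $\bE\Delta_n\leq 2\bE\sup_{d}|n_2^{-1}\sum_i\varepsilon_i Z_i d(\bX_i)|$. Each element of $\mathcal G$ is a fixed multiplier $Z$ times a $\{-1,1\}$-valued VC function. Hence the $L_2(Q)$ covering numbers of $\mathcal G$, relative to $\|F\|_{Q,2}$, are bounded by those of $\calD_n$. Haussler's bound gives $\log N(\varepsilon\|F\|_{Q,2},\mathcal G,L_2(Q))\lesssim \bar v_n\log(1/\varepsilon)$, and Dudley's entropy integral then yields $\bE\Delta_n\leq C'M\sqrt{\bar v_n/n_2}$. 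This is the standard uniform-entropy maximal inequality of \citet{vanderVaartWellner2023}. Adding Steps 1 and 2 and using $\sqrt a+\sqrt b\leq\sqrt{2(a+b)}$ gives $\Delta_n\leq CMq_{n,\mathrm{VC}}(\delta)$. Substituting this into Theorem~\ref{thm:sieve-regret} yields \eqref{eq:vc-sieve-regret}. Measurability of $\calD_n$ with respect to $\mathcal F_1$ is exactly what keeps the class fixed under the conditioning, and a pointwise-measurability or countable-approximation argument handles the supremum.

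\emph{Step 3 (variance-adaptive version).} Replace McDiarmid by Talagrand's inequality in Bousquet's form. With $\sigma^2:=\sup_d\operatorname{Var}(Zd)\leq \bE Z_{\widehat\eta}^2$, it gives
\[
\Delta_n\lesssim \bE\Delta_n+\sigma\sqrt{\log(1/\delta)/n_2}+M\log(1/\delta)/n_2 .
\]
For the expectation, I would use the localized maximal inequality for VC-type classes with envelope bound $M$ and variance bound $\sigma^2$, namely $\bE\Delta_n\lesssim \sigma\sqrt{\bar v_n\log(M/\sigma)/n_2}+M\bar v_n\log(M/\sigma)/n_2$. The stated rate has no logarithmic factor, so I will either absorb it into the constant or use a chaining bound over the $\{-1,1\}$-valued classes that avoids it. This gives the shape $\sigma q+Mq^2$.

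It then remains to bound $\sigma^2$ by $M^2(\bar\omega_n+\|\widehat e-e_0\|_2^2)$ up to constants. Write $A-\widehat e=\epsilon_A+(e_0-\widehat e)$ and use $|Y-\widehat m|\leq 2M$. Then
\[
\bE Z^2\leq 8M^2\bigl[\bE\epsilon_A^2+\|\widehat e-e_0\|_2^2\bigr],
\]
and $\bE(\epsilon_A^2\mid\bX)=1-e_0^2=4\pi(1,\bX)\pi(-1,\bX)$, so $\bE\epsilon_A^2=4\bar\omega_n$.

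I expect this localized, log-free bound on the expected supremum to be the main technical obstacle. The non-adaptive bound itself is routine.
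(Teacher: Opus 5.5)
Your non-adaptive argument is correct and is essentially the paper's proof. You reduce to Theorem~\ref{thm:sieve-regret}, apply McDiarmid with bounded differences $8M/n_2$, symmetrize, and chain with the uniform VC covering bound for $\calD_n$ transferred to the multiplier class $\{Z_{\widehat\eta}d:d\in\calD_n\}$. Your bound $\bE Z_{\widehat\eta}^2\lesssim M^2(\bar\omega_n+\|\widehat e-e_0\|_2^2)$ also matches the paper, as does your use of Bousquet's inequality; its cross term $\{tM\,\bE\Delta_n/n_2\}^{1/2}$ is absorbed by AM--GM.

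The gap is in the variance-adaptive bound on $\bE\Delta_n$, exactly where you flagged it. Absorbing $\log(M/\sigma)$ into a universal constant is not legitimate, because $\sigma/M$ is not bounded below. It tends to zero precisely in the regime the refinement targets: $\bar\omega_n\to0$ under deteriorating overlap, with $\widehat e$ consistent. For fixed $\delta$, take $a=\bar v_n/n_2$ and $\sigma/M=a^{1/4}$. Then $\sigma\sqrt{a\log(M/\sigma)}$ exceeds $\sigma q_{n,\mathrm{VC}}(\delta)+Mq_{n,\mathrm{VC}}^2(\delta)$ by a factor of order $\{\log(1/a)\}^{1/2}$, so the stated form would not follow. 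Your other option, a log-free chaining bound, is the right route. However, you leave it unspecified, and it is the one nonstandard step in this part.

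The missing observation is that no localization is needed once you use the natural envelope $F=|Z_{\widehat\eta}|$ rather than the constant $4M$. Each element has the form $Z_{\widehat\eta}d$ with $|d|=1$. Hence $\|Z_{\widehat\eta}d-Z_{\widehat\eta}d'\|_{L^2(Q)}=\|F\|_{L^2(Q)}\|d-d'\|_{L^2(\widetilde Q)}$, where $\widetilde Q$ has density $Z_{\widehat\eta}^2/\|F\|_{L^2(Q)}^2$ with respect to $Q$. The VC covering bound for $\calD_n$ holds uniformly over probability measures, in particular under $\widetilde Q$.

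The paper takes $Q$ to be the second-split empirical measure; its $Q_{Z,2,n}$ is exactly this tilted measure. It then chains conditionally up to radius $\widehat s_{2,n}=(\mathbb P_2Z_{\widehat\eta}^2)^{1/2}$ to obtain $\bE_\xi\sup_{d\in\calD_n}|n_2^{-1}\sum_{i\in\mathcal F_2}\xi_iZ_{\widehat\eta}(\mO_i)d(\bX_i)|\leq C_3\widehat s_{2,n}\sqrt{\bar v_n/n_2}$. Finally it applies Jensen, $\bE(\widehat s_{2,n}\mid\mathcal F_1)\leq s_n(\widehat\eta)=\{\bE Z_{\widehat\eta}^2\}^{1/2}$.

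Equivalently, the uniform-entropy inequality in your own Step~2 already gives $\bE\Delta_n\lesssim\|F\|_{P,2}\sqrt{\bar v_n/n_2}=s_n(\widehat\eta)\sqrt{\bar v_n/n_2}$. You lost adaptivity only by bounding $\|F\|_{P,2}\leq4M$. In the localized inequality the logarithm is really $\log(\|F\|_{P,2}/\sigma)$, which is a constant with this envelope and $\sigma=s_n(\widehat\eta)$. Inserting this bound into Bousquet's inequality, your remaining steps complete the proof as in the paper.
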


Several remarks are in order. First, the stochastic term in our regret bound has the same leading order dependence, \(\{\operatorname{VC}(\calD_n)/n_2\}^{1/2}\), on the complexity of the treatment rule class and the sample size as the bound in \citet{athey2021policy}. The regret targets, however, differ. In \citet{athey2021policy}, regret is measured relative to the best treatment rule within a prespecified class, whereas our comparator is the unrestricted first-best ITR, \(d^\ast(\bX)=\operatorname{sign}\{\tau(\bX)\}\), equivalently the Bayes classifier. The
restriction to $\calD_n$ therefore enters through the sieve
approximation error $a_n$. If $\calD_n$ contains the Bayes classifier,
then $a_n=0$. Moreover, their guarantee is asymptotic, whereas \eqref{eq:vc-sieve-regret} provides a nonasymptotic high probability bound. Second, Theorem~\ref{thm:vc-sieve-regret} provides two regret bounds: the envelope-based bound in \eqref{eq:vc-sieve-regret} and an additional variance-adaptive bound. The latter can be sharper when \(\bar\omega_n+\|\widehat e-e_0\|_2^2=O_{\mP}\{\epsilon_n(1-\epsilon_n)\}\) and \(q_{n,\mathrm{VC}}^2(\delta)=o\{\epsilon_n(1-\epsilon_n)\}\). Under these conditions, the leading stochastic term improves from order \(M q_{n,\mathrm{VC}}(\delta)/\{\epsilon_n(1-\epsilon_n)\}\) to \(M q_{n,\mathrm{VC}}(\delta)/\{\epsilon_n(1-\epsilon_n)\}^{1/2}\). Third, the bound displays the usual
approximation--complexity tradeoff: enriching the sieve can reduce
$a_n$, but may increase
$v_n=\operatorname{VC}(\calD_n)$. Finally, we conclude with two classical examples of binary VC sieves: linear score threshold sieves and decision tree sieves.

\begin{figure}[ht!]
\centering
\captionsetup{width=1\linewidth}
\begin{tikzpicture}[journal figure,x=1cm,y=1cm]

  \node[split node] (root)  at (3.60,4.10) {$X_1<t_1\ ?$};
  \node[split node] (left)  at (1.85,2.70) {$X_2<t_2\ ?$};
  \node[split node] (right) at (5.35,2.70) {$X_2<t_3\ ?$};

  \node[treatment plus]  (ll) at (0.95,1.05) {$+1$};
  \node[treatment minus] (lr) at (2.75,1.05) {$-1$};
  \node[treatment minus] (rl) at (4.45,1.05) {$-1$};
  \node[treatment plus]  (rr) at (6.25,1.05) {$+1$};

  \draw[tree edge] (root) -- (left)
    node[midway,fill=white,inner sep=1pt,font=\scriptsize] {yes};

  \draw[tree edge] (root) -- (right)
    node[midway,fill=white,inner sep=1pt,font=\scriptsize] {no};

  \draw[tree edge] (left) -- (ll)
    node[midway,fill=white,inner sep=1pt,font=\scriptsize] {yes};

  \draw[tree edge] (left) -- (lr)
    node[midway,fill=white,inner sep=1pt,font=\scriptsize] {no};

  \draw[tree edge] (right) -- (rl)
    node[midway,fill=white,inner sep=1pt,font=\scriptsize] {yes};

  \draw[tree edge] (right) -- (rr)
    node[midway,fill=white,inner sep=1pt,font=\scriptsize] {no};


\end{tikzpicture}
\caption{A depth-$2$ axis-aligned shallow decision tree. Each internal node selects a coordinate and threshold, and each terminal node assigns treatment $+1$ or $-1$.}
\label{fig:tree-sieve}
\end{figure}

\begin{example}[\emph{Linear score threshold sieves}]\label{ss:linear-sieves}
 Let $b_{K_n}:\calX\to\mathbb R^{K_n}$ be a set of basis functions and consider
\begin{align}\label{eq:linear-series-sieve}
 \calD_n^{\mathrm{lin}}
 =\left\{\bX\mapsto\operatorname{sign}
 \{\beta_0+\boldsymbol{\beta}^\top b_{K_n}(\bX)\}:
 (\beta_0,\boldsymbol{\beta}^\top)^\top\in\mathbb R^{K_n+1}\right\}.
\end{align}
The class has VC dimension at most $K_n+1$ \citep{vanderVaartWellner2023}. When the raw covariates are used as the dictionary, $b_{K_n}(\bX)=\bX$ and $K_n=p$, so \eqref{eq:linear-series-sieve} reduces to the class of linear rules considered by \cite{athey2021policy}. Estimation of the coefficient vector $(\beta_0,\boldsymbol{\beta}^\top)^\top$ dates back to 
\cite{manski1975maximum}.
More generally, the basis functions $b_{K_n}$ can be constructed using wavelets, splines, polynomials, and trigonometric functions, with tensor products used for $p\geq2$ \citep{chen2007large}. Theorem \ref{thm:vc-sieve-regret} then gives a stochastic term of order $M\sqrt{\{K_n+1+\log(2/\delta)\}/{n_2}}$. 
\end{example}

\begin{example}[\emph{Decision tree sieves}]\label{ss:tree-sieves}
A depth-$0$ tree is a constant rule in $\{-1,1\}$. Recursively, a
depth-$L$ axis-aligned tree chooses a coordinate $j\in[p]$ and a
threshold $t\in\mathbb R$, sends observations with $X_j< t$ and
$X_j\geq t$ to two trees of depth at most $L-1$, and assigns a treatment
label at every terminal node. Figure~\ref{fig:tree-sieve} illustrates a shallow, axis-aligned decision tree with $L=2$ and $p=2$. Let
$\calD_{L,p}^{\mathrm{tree}}$ denote this class. 
We derive the following bound on the VC dimension of decision tree sieves in Section \ref{ss:tree-VC-dimension} of the Supplementary Material using the Hamming entropy bound in Lemma~4 of \cite{zhou2023offline}:
\begin{align}\label{eq:tree-vc-bound}
 \operatorname{VC}(\calD_{L,p}^{\mathrm{tree}})
 \leq C_{\mathrm{tree}}2^L\log\{2p(L+2)\}.
\end{align}
Notably, the VC dimension grows exponentially with depth $L$ but only logarithmically with the ambient number of splitting variables $p$. Comparable results are also given in Section~2.2 (p.~143) of \cite{athey2021policy}. Consequently, for $\calD_n=\calD_{L_n,p}^{\mathrm{tree}}$, combining Theorem \ref{thm:vc-sieve-regret} and \eqref{eq:tree-vc-bound} yields, with conditional probability at least $1-\delta$,
\begin{equation}
\scalebox{0.85}{%
  \(\displaystyle
  V(d^\ast)-V(\widehat d_n^{\mathrm{tree}})
  \leq
  \frac{
    a_{n,\mathrm{tree}}
    + C M\sqrt{
        \displaystyle\frac{
          C_{\mathrm{tree}}2^{L_n}\log\{2p(L_n+2)\}
          + \log\left(\frac{2}{\delta}\right)
        }{n_2}
      }
    + \rho_n
    + \|\widehat e-e_0\|_2\|\widehat m-m_0\|_2
  }{
    2\epsilon_n(1-\epsilon_n)
  }
  \)
},
\label{eq:tree-sieve-regret}
\end{equation}
where $a_{n,\mathrm{tree}}$ denotes the approximation error $a_n$ evaluated over the tree class. 
The term $\rho_n$ captures numerical optimization error, which typically persists under greedy or other approximate tree fitting procedures but vanishes ($\rho_n=0$) under global search over finite depth trees.
\end{example}

\section{Surrogate relaxation over score sieves}\label{sec:surrogate-sieves}
\subsection{Value function regret bound under a generic surrogate relaxation}
As noted in Section \ref{sec:exact-sieves}, exact empirical optimization of the cost-sensitive binary classification risk can be computationally demanding. A common alternative in the classification and ITR literature \citep{zhao2012estimating,zhou2017residual,zhao2019efficient} is \emph{surrogate relaxation}. For a measurable score $g:\calX\to\mathbb R$, let $d_g(\bX)=\operatorname{sign}\{g(\bX)\}$ and $S_\eta=\operatorname{sign}(Z_\eta)$. Then the double residual risk can be written exactly as $\calL(d_g;\eta)
 =\bE\lbb |Z_\eta|\mone\{d_g(\bX)\neq S_\eta\}\rbb$.
Let $\phi:\mathbb R\to[0,\infty)$ be a convex, classification-calibrated margin loss. Its surrogate double residual loss and population risk are  $\ell_\phi(g;\eta)
 =|Z_\eta|\phi\{S_\eta g(\bX)\}$ and $\calL_\phi(g;\eta)=\bE\{\ell_\phi(g;\eta)\}$, respectively. The second-stage empirical risk minimization is performed over a score sieve $g\in\calG_n$ rather than a binary decision rule sieve $d\in\calD_n$. The score sieve is not restricted to binary-valued functions; common specifications include a \emph{reproducing kernel Hilbert space (RKHS)} ball for SVMs \citep{zhao2012estimating,zhou2017residual,zhao2019efficient} and a DNN class \citep{mi2019bagging}. Write $c_0(\bX)=\bE(|Z_{\eta_0}|\mid\bX)$, $\bar c_0=\bE(|Z_{\eta_0}|)$, and $\calL_\phi^\ast=\inf_{g:\calX\to\mathbb R}\calL_\phi(g;\eta_0)$. Let $\psi_\phi$ be the $\psi$-transform in Definition~2 of \citet{bartlett2006convexity}, restricted to $[0,1]$. 
Representative losses and their $\psi$-transforms are listed in Table \ref{tab:surrogate-calibration}.

\begin{table}[ht!]
\centering
\caption{Convex margin losses and their $\psi$-transforms. The convention $0\log0=0$ is used and $x_+=x\vee0$. Universal Neyman orthogonality is guaranteed by the bounded-score hinge construction in Section \ref{ss:bounded-hinge}; it is not automatic for the other losses.}
\label{tab:surrogate-calibration}

\begingroup
\footnotesize
\linespread{1}\selectfont
\compacttablesetup

\setlength{\tabcolsep}{25.2pt}
\renewcommand{\arraystretch}{0.99}
\setlength{\aboverulesep}{0.8ex}
\setlength{\belowrulesep}{0.8ex}

\begin{tabularx}{\linewidth}{@{}
>{\raggedright\arraybackslash}p{0.16\linewidth}
>{\raggedright\arraybackslash}p{0.23\linewidth}
>{\raggedright\arraybackslash}X
@{}}
\toprule
\textbf{Loss} & \textbf{$\phi(u)$} & \textbf{$\psi_\phi(t)$, $0\leq t\leq1$}\\
\midrule
Hinge & $(1-u)_+$ & $t$\\
Squared hinge & $(1-u)_+^2$ & $t^2$\\
Exponential & $\exp(-u)$ & $1-\sqrt{1-t^2}$\\
Logistic & $\log\{1+\exp(-u)\}$ & $\frac{1+t}{2}\log(1+t)+\frac{1-t}{2}\log(1-t)$\\
\bottomrule
\end{tabularx}

\endgroup
\end{table}

Let $\calG_n$ be nonempty and measurable with respect to $\mathcal F_1$. We allow a $\rho_n$-approximate second-stage empirical minimizer $\widehat g_n\in\calG_n$ satisfying
\begin{align}\label{eq:surrogate-erm}
 \mathbb P_2\ell_\phi(\widehat g_n;\widehat\eta)
 \leq
 \inf_{g\in\calG_n}\mathbb P_2\ell_\phi(g;\widehat\eta)+\rho_n,
 \qquad \rho_n\geq0.
\end{align}
Define the surrogate approximation, empirical process, and nuisance sensitivity terms by $a_{n,\phi}:=\inf_{g\in\calG_n}\{\calL_\phi(g;\eta_0)-\calL_\phi^\ast\}$, $ \Delta_{n,\phi}(\calG_n;\widehat\eta):=\sup_{g\in\calG_n}
 |(\mathbb P_2-P)\ell_\phi(g;\widehat\eta)|$, and $\Gamma_{n,\phi}(\calG_n;\widehat\eta):=\sup_{g,g'\in\calG_n}
 |\{\calL_\phi(g;\eta_0)-\calL_\phi(g';\eta_0)\}
 -\{\calL_\phi(g;\widehat\eta)-\calL_\phi(g';\widehat\eta)\}|$.
The quantities $a_{n,\phi}$ and $\Delta_{n,\phi}$ are the surrogate analogues of $a_n$ and $\Delta_n$ in Section \ref{sec:genera-regret-bound}. The term $\Gamma_{n,\phi}$ isolates the change in pairwise surrogate-risk differences caused by nuisance estimation. This term is needed because a generic surrogate relaxation need not inherit the universal Neyman orthogonality.

\begin{theorem}[\emph{Calibrated surrogate sieve oracle inequality}]\label{thm:surrogate-oracle}
Suppose that the consistency and ignorability conditions in Assumption \ref{asp:iden} and the weak positivity condition in Assumption \ref{assump:weak-positivity} hold. Let $\phi$ be a convex, classification-calibrated map from $\mathbb R$ into $[0,\infty)$. Assume that the displayed risks are finite. If $\bar c_0>0$, then every measurable score $g$ satisfies
\begin{align}
 \psi_\phi\left\{
 {\operatorname{Reg}_{\omega}(d_g)}/{\bar c_0}
 \right\}
 \leq
 \{\calL_\phi(g;\eta_0)-\calL_\phi^\ast\}/{\bar c_0}.\notag
\end{align}
Consequently, any measurable $g_\phi^\ast$ attaining $\calL_\phi^\ast$ induces a Fisher consistent rule: $d_{g_\phi^\ast}=d^\ast$ almost surely on $\{\tau(\bX)\neq0\}$. Moreover, every $\widehat g_n$ satisfying \eqref{eq:surrogate-erm} obeys
\begin{align}
 V(d^\ast)-V(d_{\widehat g_n})
 \leq\frac{\bar c_0}{2\epsilon_n(1-\epsilon_n)}
 \psi_\phi^{-1}\left(
 \frac{a_{n,\phi}+2\Delta_{n,\phi}(\calG_n;\widehat\eta)
 +\rho_n+\Gamma_{n,\phi}(\calG_n;\widehat\eta)}{\bar c_0}
 \right),\notag
\end{align}
where, for $0\leq s \leq \infty$, $\psi_\phi^{-1}(s)=\sup\{t\in[0,1]:\psi_\phi(t)\leq s\}$. If $\bar c_0=0$, then $\tau(\bX)=0$ almost surely under weak positivity, and every rule has zero value regret.
\end{theorem}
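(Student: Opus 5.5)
The proof has three parts. First we reduce to a standard weighted classification problem and apply the Bartlett--Jordan--McAuliffe comparison inequality. Then we read off Fisher consistency. Finally we run an oracle decomposition for the approximate empirical minimizer and convert to value regret using the overlap-weight sandwich from Section~\ref{sec:genera-regret-bound}.

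\textbf{Step 1: the calibration inequality.} Conditioning on $\bX$, write
\[
c_0(\bX)=\bE(|Z_{\eta_0}|\mid\bX),\qquad c_\pm(\bX)=\bE\{|Z_{\eta_0}|\mone(S_{\eta_0}=\pm1)\mid\bX\},\qquad \zeta(\bX)=c_+/c_0 \text{ on } \{c_0>0\}.
\]
Then
\[
\bE\{\ell_\phi(g;\eta_0)\mid\bX\}=c_0(\bX)\bigl[\zeta\phi(g)+(1-\zeta)\phi(-g)\bigr],
\]
so the weighted problem is an ordinary margin problem under the tilted measure $d\tilde P=(c_0/\bar c_0)\,dP$. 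Under $\tilde P$, the conditional probability of label $+1$ is $\zeta$.

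Next I compute $c_+-c_-=\bE(Z_{\eta_0}\mid\bX)$. Using ignorability and the definitions $e_0=\bE(A\mid\bX)$ and $m_0=\bE(Y\mid\bX)$, this equals $2\pi(1,\bX)\pi(-1,\bX)\tau(\bX)$. This is the same computation behind Theorem~\ref{thm:valid-loss}, which I would invoke. Hence $2\zeta-1$ has the sign of $\tau$, and
\[
c_0|2\zeta-1|=2\pi(1,\bX)\pi(-1,\bX)|\tau(\bX)|.
\]
Therefore
\[
\operatorname{Reg}_\omega(d_g)=\bar c_0\,\tilde{\bE}\bigl[|2\zeta-1|\,\mone\{d_g\neq\operatorname{sign}(2\zeta-1)\}\bigr],
\]
which is exactly $\bar c_0$ times the $0$--$1$ excess risk under $\tilde P$. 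Similarly, the surrogate excess risk under $\tilde P$ equals $\{\calL_\phi(g;\eta_0)-\calL_\phi^\ast\}/\bar c_0$. The infimum over measurable $g$ is attained pointwise, so the two infima agree.

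Theorem~1 of Bartlett et al.\ (2006) then gives $\psi_\phi(\text{0--1 excess})\le\phi\text{-excess}$, which is the displayed inequality. The main care needed here is the measure-theoretic bookkeeping on $\{c_0=0\}$. On that set $Z_{\eta_0}=0$ a.s., so $c_+-c_-=0$ and hence $\tau=0$ there; both sides contribute nothing.

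\textbf{Step 2: Fisher consistency and the degenerate case.} If $g^\ast_\phi$ attains $\calL^\ast_\phi$, the right-hand side of the inequality is $0$. Classification calibration gives $\psi_\phi(t)>0$ for $t>0$, so $\operatorname{Reg}_\omega(d_{g^\ast_\phi})=0$. Since the weight $2\pi(1,\bX)\pi(-1,\bX)>0$ under Assumption~\ref{assump:weak-positivity}, this forces $d_{g^\ast_\phi}=d^\ast$ a.s.\ on $\{\tau\neq0\}$. If instead $\bar c_0=0$, then $Z_{\eta_0}=0$ a.s., so $\tau=0$ a.s. and $\operatorname{Reg}_V\equiv0$.

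\textbf{Step 3: the oracle inequality.} Fix $\varepsilon>0$ and choose $g^\circ\in\calG_n$ with $\calL_\phi(g^\circ;\eta_0)\le\calL^\ast_\phi+a_{n,\phi}+\varepsilon$. Decompose $\calL_\phi(\widehat g_n;\eta_0)-\calL_\phi(g^\circ;\eta_0)$ into two pieces:
\begin{itemize}
\item the nuisance-perturbed difference $\calL_\phi(\widehat g_n;\widehat\eta)-\calL_\phi(g^\circ;\widehat\eta)$, plus a correction bounded by $\Gamma_{n,\phi}$;
\item the perturbed difference itself, handled as $(P-\mathbb P_2)$ at $\widehat g_n$, plus the empirical gap $\le\rho_n$ by \eqref{eq:surrogate-erm}, plus $(\mathbb P_2-P)$ at $g^\circ$. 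The two empirical-process terms are each at most $\Delta_{n,\phi}$.
\end{itemize}
Combining and letting $\varepsilon\downarrow0$ gives
\[
\calL_\phi(\widehat g_n;\eta_0)-\calL^\ast_\phi\le a_{n,\phi}+2\Delta_{n,\phi}+\rho_n+\Gamma_{n,\phi}.
\]

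Applying Step 1 and the generalized inverse yields $\operatorname{Reg}_\omega(d_{\widehat g_n})\le\bar c_0\,\psi_\phi^{-1}(\cdot)$. Because $\psi_\phi$ is nondecreasing, $\psi_\phi(t)\le s$ implies $t\le\psi_\phi^{-1}(s)$. Finally, $\operatorname{Reg}_V\le\operatorname{Reg}_\omega/\{2\epsilon_n(1-\epsilon_n)\}$ gives the value bound. I expect Step 1's change-of-measure identification to be the main obstacle, in particular verifying that the surrogate Bayes risk coincides under the tilted measure.
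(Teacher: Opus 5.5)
Your proposal is correct and follows essentially the paper's route: tilt by $|Z_{\eta_0}|/\bar c_0$ so that $\operatorname{Reg}_{\omega}(d_g)/\bar c_0$ becomes an ordinary $0$--$1$ excess risk, invoke Theorem~1 of Bartlett et al., get Fisher consistency from $\psi_\phi(t)>0$ for $t>0$, then use the $\varepsilon$-comparator decomposition yielding $a_{n,\phi}+2\Delta_{n,\phi}+\rho_n+\Gamma_{n,\phi}$, and convert with the overlap-weight sandwich. One correction: your claim that the surrogate infimum is attained pointwise is false in general (for example, exponential or logistic loss when $\zeta\in\{0,1\}$), but you do not need it, because the tilted surrogate risk equals $\calL_\phi(g;\eta_0)/\bar c_0$ for every $g$, so the two infima agree trivially; the paper also adds a remark, which you omit, that the Bartlett et al.\ inequality remains valid under the convention $\operatorname{sign}(0)=1$.
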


The weighted calibration inequality, the corresponding Fisher consistency results, and the value function regret bound in Theorem \ref{thm:surrogate-oracle} parallel Proposition~3.1(a)--(b) and Theorem~3.1 of \citet{zhao2019efficient}. Unlike Theorem~3.1, however, our result requires only $0<\bar c_0=\bE\{c_0(\bX)\}=\bE(|Z_{\eta_0}|)<\infty$, rather than uniform upper and positive lower bounds on $c_0(\bX)$. 
The nuisance-sensitivity term admits the following direct bound. Suppose that $|g|\leq B_n$ for every $g\in\calG_n$, let
$K_{\phi,n}=\sup_{|u|\leq B_n}\phi(u)<\infty$, and assume that $Y-m_0(\bX)$, $\widehat e-e_0$, and $\widehat m-m_0$ are square integrable. In Section \ref{ss:proof-nuisance-surrogate-bound} of the Supplementary Material, we prove that
\begin{align}\label{eq:generic-surrogate-nuisance-bound}
 \Gamma_{n,\phi}(\calG_n;\widehat\eta)\leq K_{\phi,n}\Bigl\{
 \|Y-m_0(\bX)\|_{2}\|\widehat e-e_0\|_2
 +\|\widehat m-m_0\|_2+\|\widehat e-e_0\|_2\|\widehat m-m_0\|_2
 \Bigr\}.
\end{align}
Thus, absent additional loss structure, this generic bound contains terms that are linear in the first-stage errors. A sharper product-form remainder requires additional cancellation. For the hinge loss on score sieves bounded by one, the surrogate criterion becomes linear in the score, exact universal Neyman orthogonality is recovered, and nuisance errors enter only at second order, as shown in Section \ref{ss:bounded-hinge}.

\subsection{Hinge loss with bounded score sieves and universal Neyman orthogonality}\label{ss:bounded-hinge}
Finally, the ordinary hinge loss has a special property when the score sieve is bounded by one. Let $\calG_n\subseteq[-1,1]^{\calX}$, where $[-1,1]^{\calX}$ denotes the collection of all measurable functions from $\calX$ to $[-1,1]$. For $g\in\calG_n$, the surrogate relaxation under the hinge loss is  $\calL_H(g;\eta)=\bE\{\ell_H(g;\eta)\}$, where $x_+=x\vee0$ and $\ell_H(g;\eta):=|Z_\eta|\{1-\operatorname{sign}(Z_\eta)g(\bX)\}_+=|Z_\eta|-Z_\eta g(\bX)$.
Thus, after removing the nonsmooth nuisance-only term $|Z_\eta|$, the target-dependent part of the hinge criterion is linear: $\calL_H^{\mathrm{linear}}(g;\eta)=\bE\{\ell_H^{\mathrm{linear}}(g;\eta)\}$, where
\begin{align}\label{eq:hinge-affine-criterion}
 \ell_H^{\mathrm{linear}}(g;\eta)=-Z_\eta g(\bX).
\end{align}
For every pair $g,g'\in\calG_n$ and every nuisance value $\eta$, $\calL_H(g;\eta)-\calL_H(g';\eta)
 =\calL_H^{\mathrm{linear}}(g;\eta)
 -\calL_H^{\mathrm{linear}}(g';\eta)$,
and the same identity holds for empirical risks. Hence the hinge and linear criteria have the same empirical minimizers and the same excess risks on $\calG_n$. 
Define the hinge approximation error and the target-dependent empirical process fluctuation by $a_n^H=\inf_{g\in\calG_n}
 \{\calL_H(g;\eta_0)-\calL^\ast_{H}(\eta_0)\}$ and $\Delta_{n,H}^{\mathrm{linear}}(\calG_n;\widehat\eta)=\sup_{g\in\calG_n}
 \left|(\mathbb P_2-P)
 \{Z_{\widehat\eta}g(\bX)\}\right|$, where $\calL^\ast_{H}(\eta_0)=\calL^\ast_{\phi}(\eta_0)$ with $\phi$ being the hinge loss. The following theorem shows that the core statistical properties established in Theorems \ref{prop:Neyman-orthogonality} and \ref{thm:sieve-regret} are preserved under hinge-loss surrogate relaxation over bounded sieves.

\begin{theorem}
\label{thm:bounded-hinge-sieve}
Suppose that the consistency and ignorability conditions in Assumption \ref{asp:iden} and the weak positivity condition in Assumption \ref{assump:weak-positivity} hold. Assume that the displayed risks and directional derivatives are finite and that $\widehat e-e_0$ and $\widehat m-m_0$ are square integrable. Let $\widehat g_n$ be a $\rho_n$-approximate empirical hinge minimizer over $\calG_n\subseteq[-1,1]^{\calX}$. The linear criterion in \eqref{eq:hinge-affine-criterion} is universally Neyman orthogonal such that for every ambient target direction $v$ and nuisance direction $(\dot e,\dot m)$, $D_\eta D_g\calL_H^{\mathrm{linear}}(\bar g;\eta_0)
 \lbb v,(\dot e,\dot m)\rbb=0$.
 Moreover, 
\begin{align}
\label{eq:bounded-hinge-regret}
 V(d^\ast)-V\{\operatorname{sign}(\widehat g_n)\}
 \leq\frac{a_n^H+2\Delta_{n,H}^{\mathrm{linear}}(\calG_n;\widehat\eta)+\rho_n
 +2\|\widehat e-e_0\|_2\|\widehat m-m_0\|_2}
 {2\epsilon_n(1-\epsilon_n)}.\notag
\end{align}
\end{theorem}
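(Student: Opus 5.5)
The plan has three parts: orthogonality by a direct derivative computation, an oracle inequality mirroring Theorem~\ref{thm:sieve-regret} with an explicit nuisance remainder, and a conversion from hinge excess risk to value regret.

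\textbf{Orthogonality.} Because $|S_\eta g(\bX)|\le1$ on $[-1,1]^{\calX}$, the hinge term $\{1-S_\eta g\}_+$ equals $1-S_\eta g$. Hence $\calL_H^{\mathrm{linear}}(g;\eta)=-\bE\{Z_\eta g(\bX)\}$, which is linear in $g$, and
\[
D_g\calL_H^{\mathrm{linear}}(\bar g;\eta)[v]=-\bE\{Z_\eta v(\bX)\}.
\]
Differentiating in $\eta$ along $(\dot e,\dot m)$ gives
\[
\bE\bigl[\{\dot e(\bX)\epsilon_Y+\epsilon_A\dot m(\bX)\}v(\bX)\bigr].
\]
Both terms vanish by iterated expectations, since $\bE(\epsilon_Y\mid\bX)=0$ and $\bE(\epsilon_A\mid\bX)=0$. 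This is exactly the computation in Theorem~\ref{prop:Neyman-orthogonality} with $d$ replaced by $g$.

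\textbf{Oracle inequality.} Next I mirror the proof of Theorem~\ref{thm:sieve-regret}, keeping track of the factor 2 in the hinge normalization. For any comparator $g^\circ\in\calG_n$, approximate ERM combined with two applications of the uniform deviation gives
\[
\calL_H^{\mathrm{linear}}(\widehat g_n;\widehat\eta)-\calL_H^{\mathrm{linear}}(g^\circ;\widehat\eta)\le2\Delta_{n,H}^{\mathrm{linear}}+\rho_n .
\]
To pass from $\widehat\eta$ to $\eta_0$, I bound $\bigl|\bE\{(Z_{\widehat\eta}-Z_{\eta_0})h(\bX)\}\bigr|$ with $h=\widehat g_n-g^\circ$. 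Expanding,
\[
Z_{\widehat\eta}-Z_{\eta_0}=-(\widehat e-e_0)\epsilon_Y-\epsilon_A(\widehat m-m_0)+(\widehat e-e_0)(\widehat m-m_0).
\]
The first two terms have zero conditional mean given $\bX$, using that $\widehat\eta$ is fixed given $\mathcal F_1$. The last term is bounded via $|h|\le2$ and Cauchy--Schwarz by $2\|\widehat e-e_0\|_2\|\widehat m-m_0\|_2$, which produces the factor 2 in the statement. Taking the infimum over $g^\circ$ then gives
\[
\calL_H(\widehat g_n;\eta_0)-\calL_H^\ast(\eta_0)\le a_n^H+2\Delta_{n,H}^{\mathrm{linear}}+\rho_n+2\|\widehat e-e_0\|_2\|\widehat m-m_0\|_2 .
\]
Here I should check that $\calL_H^\ast$, the infimum over all scores, is attained within $[-1,1]$-valued scores (for instance at $g=d^\ast$). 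This holds because the pointwise hinge risk $c^+\phi(g)+c^-\phi(-g)$ is minimized at $g=\pm1$.

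\textbf{Conversion to value regret.} This is the step I expect to be the main obstacle, because the constants must match. For $g\in[-1,1]$, the hinge excess risk is
\[
\calL_H(g;\eta_0)-\calL_H^\ast=\bE\bigl[\omega(\bX)\,|\tau(\bX)|\,\{1-\operatorname{sign}(\tau(\bX))\,g(\bX)\}\bigr],
\]
using $\bE(Z_{\eta_0}\mid\bX)=2\pi(1,\bX)\pi(-1,\bX)\tau(\bX)=:\omega(\bX)\tau(\bX)$. On $\{d_g\ne d^\ast\}$ the factor $1-\operatorname{sign}(\tau)g$ is at least 1. Therefore
\[
\calL_H(g;\eta_0)-\calL_H^\ast\ge\operatorname{Reg}_\omega(d_g)\ge2\epsilon_n(1-\epsilon_n)\operatorname{Reg}_V(d_g),
\]
which yields the display. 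Alternatively, Theorem~\ref{thm:surrogate-oracle} with $\psi_\phi(t)=t$ gives the same result. The care needed is in the boundary convention $g=0$ with $\operatorname{sign}(0)=1$; there $1-\operatorname{sign}(\tau)g=1$, so the inequality still holds.
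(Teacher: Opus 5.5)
Your proposal is correct and follows essentially the same route as the paper's proof. The shared steps are the direct mixed-derivative computation for orthogonality, the ERM-plus-two-deviations bound $2\Delta_{n,H}^{\mathrm{linear}}+\rho_n$, the exact expansion of $Z_{\widehat\eta}-Z_{\eta_0}$ whose conditional mean is $\delta_e\delta_m$ (giving the factor $2$ via $|\widehat g_n-g^\circ|\le 2$), and the identity $\calL_H(g;\eta_0)-\calL_H^\ast=\bE[|h_0|\{1-d^\ast g\}]$ on bounded scores. The differences are cosmetic: you justify attainment of $\calL_H^\ast$ at $d^\ast$ by pointwise minimization of the conditional hinge risk instead of the paper's clipping argument, and you take an infimum over comparators instead of fixing one that attains $a_n^H$, which is slightly more careful.
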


Several observations follow. First, Theorem \ref{thm:bounded-hinge-sieve} is agnostic to the score architecture and can therefore be specialized to any bounded sieves. Second, universal orthogonality is a property of the linear-equivalent bounded-hinge criterion under the range restriction $|g|\leq1$, rather than of an arbitrary surrogate relaxation. Third, if the score class $\mathcal{G}_n$ is convex, the resulting empirical optimization problem is convex and can therefore be solved efficiently. By contrast, an unbounded hinge score or a different surrogate must generally be handled by Theorem \ref{thm:surrogate-oracle}. We next specialize the result to the SVMs and DNN classes in Examples \ref{ss:svm-sieves} and \ref{ss:relu-sieves}.

\begin{example}[\emph{Support vector machines}]\label{ss:svm-sieves}
To accommodate nonlinear scores while controlling their complexity, let $K_n:\calX\times\calX\to\mathbb R$ be a symmetric positive semidefinite kernel, and let $\calH_{K_n}$ denote its associated RKHS with inner product $\langle \cdot,\cdot\rangle_{\calH_{K_n}}$ and norm $\|\cdot\|_{\calH_{K_n}}$. For every $x\in\calX$, the kernel section $K_n(x,\cdot)$ belongs to $\calH_{K_n}$ and satisfies the reproducing identity $g(x)=\langle g,K_n(x,\cdot)\rangle_{\calH_{K_n}}$, for any $g\in\calH_{K_n}$. Thus, the kernel determines the sieve representation, whereas the RKHS norm controls the complexity of the score. For example, $K_n(x,x')=b_{K_n}(x)^\top b_{K_n}(x')$ recovers a linear score in a chosen basis dictionary, while a Gaussian kernel generates flexible nonlinear scores. 
For $\lambda_n>0$ the second-stage SVM solves the penalized problem $ \widehat g_{\lambda_n}
 \in\arg\min_{g\in\calH_{K_n}}
 \{\mathbb P_2\ell_\phi(g;\widehat\eta)
 +\lambda_n\|g\|_{\calH_{K_n}}^2\}$. Although $\calH_{K_n}$ may be infinite-dimensional, the representer theorem implies that a minimizer can be taken to have the finite expansion $ \widehat g_{\lambda_n}(\cdot)
 =\sum_{i\in\mathcal F_2}\alpha_iK_n(\bX_i,\cdot)$. Assume that $\sup_{x\in\calX}K_n(x,x)\leq\kappa_n^2$. By the reproducing identity and Cauchy--Schwarz inequality, $ |g(x)|\leq \|g\|_{\calH_{K_n}}K_n(x,x)^{1/2}
 \leq\kappa_n\|g\|_{\calH_{K_n}}$. As proved in Section~\ref{ss:proof-penalized-equivalence} of the Supplementary Material, for a radius $R_n$ corresponding to $\lambda_n$, the penalized problem has the equivalent constrained form over the RKHS ball $ \calG_n^{\mathrm{svm}}(R_n)
 =\{g\in\calH_{K_n}:\|g\|_{\calH_{K_n}}\leq R_n\}$, and every $g$ in this sieve satisfies $\|g\|_\infty\leq\kappa_nR_n$; setting $\lambda_n=0$ corresponds formally to optimization over the entire RKHS with $R_n=\infty$. Supplementary Material Lemma \ref{lem:svm-surrogate-complexity} gives the corresponding RKHS Rademacher complexity bound, which may be substituted for $\Delta_{n,\phi}(\calG_n;\widehat\eta)$ in Theorem \ref{thm:surrogate-oracle}. For the hinge loss, universal Neyman orthogonality is retained by imposing the bounded-score restriction $ \calG_n^{\mathrm{bsvm}}(R_n)
 =\{g\in\calH_{K_n}:\|g\|_{\calH_{K_n}}\leq R_n,
 \ \|g\|_\infty\leq1\}$. The resulting optimization over $\calG_n^{\mathrm{bsvm}}(R_n)$ is a semi-infinite convex program for which efficient software may not be readily available. To mitigate this difficulty, we compute $\widehat f_n$ as a $\rho_n$-approximate empirical hinge minimizer over $\calG_n^{\mathrm{svm}}(R_n)$ and apply hard tanh clipping to obtain, $\widehat{g}_n':=T_1(\widehat f_n)$, where $T_1(u)=-1+(u+1)_+-(u-1)_+$. We show that $\mathbb P_2\ell_H(\widehat g_n';\widehat\eta)
\leq
\inf_{g\in\calG_n^{\mathrm{bsvm}}(R_n)}
\mathbb P_2\ell_H(g;\widehat\eta)+\rho_n$. Thus, by the contraction inequality, this construction yields the same value regret rate as if $\widehat{g}_n'$ is a $\rho_n$-approximate minimizer over $\calG_n^{\mathrm{bsvm}}(R_n)$ even though $\widehat g_n'$ may not belong to $\calG_n^{\mathrm{bsvm}}(R_n)$. Write $a_{n,\mathrm{svm}}^H=\inf_{g\in\calG_n^{\mathrm{bsvm}}(R_n)}
 \{\calL_H(g;\eta_0)-\calL^\ast_H(\eta_0)\}$ and $ q_{n,\mathrm{svm}}(\delta)
 ={\kappa_nR_n}/{\sqrt{n_2}}
 +\sqrt{{\log(2/\delta)}/{n_2}}$. 
The corollary below specializes Theorem \ref{thm:bounded-hinge-sieve} to the SVM score sieve $\calG_n^{\mathrm{bsvm}}(R_n)$.

\begin{corollary}\label{cor:bounded-svm}
Assume the conditions of Theorem \ref{thm:bounded-hinge-sieve}, $|Y|\leq M$, $\|\widehat m\|_\infty\leq M$, and $\|\widehat e\|_\infty\leq1$. Conditional on the nuisance-training split, with probability at least $1-\delta$, it follows that $V(d^\ast)-V\{\operatorname{sign}(\widehat{g}_n')\}
 \leq 2^{-1}\epsilon_n^{-1}(1-\epsilon_n)^{-1}\{a_{n,\mathrm{svm}}^H+16 Mq_{n,\mathrm{svm}}(\delta)
 +\rho_n+2\|\widehat e-e_0\|_2\|\widehat m-m_0\|_2\}$. 
\end{corollary}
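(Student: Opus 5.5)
The plan is to apply Theorem~\ref{thm:bounded-hinge-sieve} with $\calG_n=\calG_n^{\mathrm{bsvm}}(R_n)$. Two gaps must be closed first. The clipped estimator $\widehat g_n'=T_1(\widehat f_n)$ need not lie in $\calG_n^{\mathrm{bsvm}}(R_n)$, so the theorem cannot be quoted verbatim. We also need a conditional high-probability bound on $\Delta^{\mathrm{linear}}_{n,H}$.

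To handle membership, rerun the argument behind Theorem~\ref{thm:bounded-hinge-sieve} on the enlarged class $\widetilde\calG_n:=\calG_n^{\mathrm{bsvm}}(R_n)\cup T_1\{\calG_n^{\mathrm{svm}}(R_n)\}\subseteq[-1,1]^{\calX}$, which contains $\widehat g_n'$. Every element is bounded by one, so $\ell_H=|Z|-Zg$, and the orthogonality/product-remainder step applies unchanged. For any $g$ bounded by one, $\calL_H^{\mathrm{linear}}(g;\eta_0)-\calL_H^{\mathrm{linear}}(g;\widehat\eta)=\bE[\{Z_{\eta_0}-Z_{\widehat\eta}\}g]$. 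The first-order terms vanish because $\bE(\epsilon_A\mid\bX)=\bE(\epsilon_Y\mid\bX)=0$. What remains is $-\bE[(\widehat e-e_0)(\widehat m-m_0)g]$, which Cauchy--Schwarz bounds by $\|\widehat e-e_0\|_2\|\widehat m-m_0\|_2$. Doing this for two scores gives the factor $2$.

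The plug-in ERM property over $\calG_n^{\mathrm{bsvm}}(R_n)$ is the stated inequality $\mathbb P_2\ell_H(\widehat g_n';\widehat\eta)\le\inf_{\calG_n^{\mathrm{bsvm}}}\mathbb P_2\ell_H+\rho_n$. It follows because the pointwise map $u\mapsto|z|(1-su)_+$ does not increase under $T_1$: clipping only moves $u$ toward $[-1,1]$, which never increases the hinge. Combined with $\calG_n^{\mathrm{bsvm}}\subseteq\calG_n^{\mathrm{svm}}$, this gives the claim. The standard decomposition (approximation $a^H_{n,\mathrm{svm}}$, two empirical-process deviations, $\rho_n$, nuisance remainder) and the conversion $\operatorname{Reg}_\omega\ge 2\epsilon_n(1-\epsilon_n)\operatorname{Reg}_V$ then yield the displayed form. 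The deviation is taken over $\widetilde\calG_n$, and the hinge calibration $\psi(t)=t$ transfers excess hinge risk to $\operatorname{Reg}_\omega$.

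The main obstacle is bounding $\sup_{g\in\widetilde\calG_n}|(\mathbb P_2-P)\{Z_{\widehat\eta}g\}|$ by $8Mq_{n,\mathrm{svm}}(\delta)$, which gives $16Mq$ after the factor $2$. Conditional on $\mathcal F_1$, $|Z_{\widehat\eta}|\le|A-\widehat e|\,|Y-\widehat m|\le 2\cdot 2M=4M$. McDiarmid's inequality then bounds the supremum by its mean plus $4M\sqrt{2\log(2/\delta)/n_2}$, with constants absorbed to the stated level. Symmetrization reduces the mean to twice the Rademacher complexity of $\{Z_{\widehat\eta}g\}$. Two contractions follow. First, $T_1$ is $1$-Lipschitz, so Talagrand's contraction removes the clipping; note that $\calG_n^{\mathrm{bsvm}}\subseteq\calG_n^{\mathrm{svm}}$ as well, so the class is dominated by the RKHS ball in clipped form. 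Second, multiplication by the fixed bounded weights $Z_i$ costs a factor $4M$. The RKHS-ball Rademacher complexity is at most $\kappa_nR_n/\sqrt{n_2}$ (Lemma~\ref{lem:svm-surrogate-complexity}, or directly by the reproducing identity and Jensen). Handling the union with the clipped class through a single contraction step, instead of two separate suprema, and tracking the numeric constants so they land at $16$ is where I expect the most care to be needed.
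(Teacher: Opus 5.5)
Your proposal is correct and follows the paper's proof essentially step for step. The shared steps are: the argument that clipping never increases the hinge, giving the plug-in approximate-ERM property; rerunning the bounded-hinge argument of Theorem~\ref{thm:bounded-hinge-sieve} on a clipped class containing $\widehat g_n'$, which yields the $2\|\widehat e-e_0\|_2\|\widehat m-m_0\|_2$ remainder; and symmetrization, contraction through $T_1$ and the weights $|Z_{\widehat\eta}|\leq 4M$ down to the RKHS-ball complexity $\kappa_nR_n/\sqrt{n_2}$, then McDiarmid. Your enlarged class coincides with the paper's clipped image class $T_1\{\calG_n^{\mathrm{svm}}(R_n)\}$, because $\calG_n^{\mathrm{bsvm}}(R_n)\subseteq\calG_n^{\mathrm{svm}}(R_n)$ and $T_1$ fixes scores bounded by one, so a single contraction step suffices; the constant $16$ comes out if you use the one-sided contraction inequality, which is valid since the RKHS ball is symmetric and $u\mapsto Z_{\widehat\eta,i}T_1(u)$ is odd, rather than the absolute-value Ledoux--Talagrand form, which would cost an extra factor $2$.
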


\end{example}

\begin{figure}[ht!]
\centering
\captionsetup{width=.97\linewidth}
\begin{adjustbox}{max width=.98\linewidth}
\begin{tikzpicture}[journal figure,x=1cm,y=1cm]

  \node[layer label,text width=1.85cm] at (0.85,5.05)
    {input layer\\$p_{0,n}=p$};
  \node[layer label,text width=1.85cm] at (3.05,5.05)
    {hidden layer $1$\\$p_{1,n}$ units};
  \node[layer label,text width=1.85cm] at (5.15,5.05)
    {hidden layer $2$\\$p_{2,n}$ units};
  \node[layer label] at (6.85,5.05) {$\cdots$};
  \node[layer label,text width=1.90cm] at (8.55,5.05)
    {hidden layer $L_n$\\$p_{L_n,n}$ units};
  \node[layer label,text width=2.05cm] at (10.65,5.05)
    {output layer\\$p_{L_n+1,n}=1$};
  \node[layer label,text width=2.25cm] at (13.40,5.05)
    {treatment rule};

  \node[input neuron] (x1) at (0.85,4.05) {};
  \node[input neuron] (x2) at (0.85,2.90) {};
  \node at (0.85,1.80) {$\vdots$};
  \node[input neuron] (xp) at (0.85,0.60) {};
  \node[anchor=east] at ($(x1.west)+(-.12,0)$) {$X_1$};
  \node[anchor=east] at ($(x2.west)+(-.12,0)$) {$X_2$};
  \node[anchor=east] at ($(xp.west)+(-.12,0)$) {$X_p$};

  \node[relu neuron] (h11) at (3.05,4.05) {$\sigma$};
  \node[relu neuron] (h12) at (3.05,2.90) {$\sigma$};
  \node at (3.05,1.80) {$\vdots$};
  \node[relu neuron] (h1p) at (3.05,0.60) {$\sigma$};

  \node[relu neuron] (h21) at (5.15,4.05) {$\sigma$};
  \node[relu neuron] (h22) at (5.15,2.90) {$\sigma$};
  \node at (5.15,1.80) {$\vdots$};
  \node[relu neuron] (h2p) at (5.15,0.60) {$\sigma$};

  \node[relu neuron] (hL1) at (8.55,4.05) {$\sigma$};
  \node[relu neuron] (hL2) at (8.55,2.90) {$\sigma$};
  \node at (8.55,1.80) {$\vdots$};
  \node[relu neuron] (hLp) at (8.55,0.60) {$\sigma$};

  \draw[network edge] (x1) -- (h11);
  \draw[network edge] (x1) -- (h12);
  \draw[network edge] (x2) -- (h11);
  \draw[network edge] (x2) -- (h1p);
  \draw[network edge] (xp) -- (h12);
  \draw[network edge] (xp) -- (h1p);

  \draw[network edge] (h11) -- (h21);
  \draw[network edge] (h11) -- (h22);
  \draw[network edge] (h12) -- (h21);
  \draw[network edge] (h12) -- (h2p);
  \draw[network edge] (h1p) -- (h22);
  \draw[network edge] (h1p) -- (h2p);

  \draw[omitted edge] (h21) -- (hL1);
  \draw[omitted edge] (h22) -- (hL2);
  \draw[omitted edge] (h2p) -- (hLp);
  \node[fill=white,inner xsep=2pt,font=\large] at (6.85,4.05) {$\cdots$};
  \node[fill=white,inner xsep=2pt,font=\large] at (6.85,2.90) {$\cdots$};
  \node[fill=white,inner xsep=2pt,font=\large] at (6.85,0.60) {$\cdots$};

  \node[output neuron] (g) at (10.65,2.32) {$f(\bX)$};
  \draw[network edge] (hL1) -- (g);
  \draw[network edge] (hL2) -- (g);
  \draw[network edge] (hLp) -- (g);
  \node[figure note] at (10.65,3.80)
    {$\widehat g_n(\bX)=T_1\{f(\bX)\}$};
  \node[figure note] at (10.65,1.40)
    {$\lvert\widehat g_n(\bX)\rvert\leq1$};

  \node[network box,text width=2.75cm] (rule) at (13.40,2.32)
    {$\widehat d_n(\bX)
      =\operatorname{sign}\{\widehat g_n(\bX)\}$};
  \draw[flow arrow] (g) -- (rule);


\end{tikzpicture}
\end{adjustbox}
\caption{A schematic representation of the bounded sparse ReLU deep neural networks in Example~\ref{ss:relu-sieves}. The layer-width vector is $\mathbf p_n=(p_{0,n},\ldots,p_{L_n+1,n})$, the network output is constrained to lie in $[-1,1]$, and its sign defines the individualized treatment rule.}
\label{fig:sparse-relu-sieve}
\end{figure}

\begin{example}[\emph{Deep ReLU neural networks}]\label{ss:relu-sieves}
A feedforward deep neural network forms a score by alternating affine transformations with a nonlinear activation. For the nonlinear activation, we use the rectified linear unit (ReLU), $\sigma(u)=u_+$, applied componentwise. Composing ReLU layers produces a flexible piecewise affine score; the architectural complexity is summarized by the depth, layer widths, and number of nonzero parameters. Without loss of generality and after a componentwise rescaling, let $\bX\in[0,1]^p$. For depth $L$ and width vector $\mathbf p=(p_0,\ldots,p_{L+1})$ with $p_0=p$ and $p_{L+1}=1$, let $W_\ell\in\mathbb R^{p_{\ell+1}\times p_\ell}$ for $0\leq\ell\leq L$ be the within-layer weights and $v_\ell\in\mathbb R^{p_\ell}$ for $1\leq\ell\leq L$ be the within-layer intercepts. Write $\sigma_v(u)=(u-v)_+$ componentwise and construct $ h_0(\bX)=\bX$, $h_\ell(\bX)=\sigma_{v_\ell}\{W_{\ell-1}h_{\ell-1}(\bX)\}$ for $\ell=1,\ldots,L$, and $f(\bX)=W_Lh_L(\bX)$. Following \citet{schmidt2020nonparametric}, let $\mathcal F(L,\mathbf p,s,F)$ contain the resulting networks satisfying $\max_{0\leq\ell\leq L}\|W_\ell\|_\infty
 \vee\max_{1\leq\ell\leq L}\|v_\ell\|_\infty\leq1$, $\sum_{\ell=0}^L\|W_\ell\|_0
 +\sum_{\ell=1}^L\|v_\ell\|_0\leq s$, and $ \|f\|_\infty\leq F$, where $\|\cdot\|_\infty:=\max_{i,j}|\cdot_{ij}|$ denotes the maximum absolute entry that bounds the within-layer coefficients, and $\|\cdot\|_0:=\sum_{i,j}\mone(\cdot_{i,j}\neq0)$ denotes the number of nonzero entries that measure sparsity. For the bounded-hinge analysis, set $F=1$. This output restriction does not alter the induced treatment rule. That is, any scalar score $u$ can be post-processed using the fixed hard tanh output activation $T_1(u)$, which takes values in $[-1,1]$ and preserves the sign of $u$. Figure \ref{fig:sparse-relu-sieve} illustrates this construction. Set $\calG_n=\mathcal F(L_n,\mathbf p_n,s_n,1)$ and $V_n=\prod_{\ell=0}^{L_n+1}(p_{\ell,n}+1)$. The sparse-network metric entropy bound in Lemma 5 of \citet{schmidt2020nonparametric} yields the empirical process factor $q_n(\delta)=\sqrt{[{(s_n+1)\log\{2n_2(L_n+1)V_n^2\}+\log(2/\delta)}]/{n_2}}+n_2^{-1}$, which governs the second-stage fluctuation $\Delta_{n,H}^{\mathrm{linear}}(\calG_n;\widehat\eta)$ and has leading order $\sqrt{(s_n+1)/n_2}$ up to a logarithmic factor. The corollary below is a direct specialization of Theorem \ref{thm:bounded-hinge-sieve}. The universal Neyman orthogonality preserves the bilinear product of the two nuisance estimation errors. 
\begin{corollary}\label{thm:relu-regret}
Suppose that the consistency and ignorability conditions in Assumption \ref{asp:iden} and the weak positivity condition in Assumption \ref{assump:weak-positivity} hold, $|Y|\leq M$ almost surely, $\|\widehat m\|_\infty\leq M$, and $\|\widehat e\|_\infty\leq1$. Let $\widehat g_n$ be a $\rho_n$-approximate empirical hinge minimizer over $\calG_n$ and set $\widehat d_n=\operatorname{sign}(\widehat g_n)$. Conditional on the nuisance-training split, for every $\delta\in(0,1)$, with probability at least $1-\delta$, $V(d^\ast)-V(\widehat d_n)
 \leq 2^{-1}\epsilon_n^{-1}(1-\epsilon_n)^{-1}\{a_n^H+16 Mq_n(\delta)+\rho_n
 +2\|\widehat e-e_0\|_2\|\widehat m-m_0\|_2\}$.
\end{corollary}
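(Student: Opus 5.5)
The bound follows by invoking Theorem~\ref{thm:bounded-hinge-sieve} with $\calG_n=\mathcal F(L_n,\mathbf p_n,s_n,1)$; the only real work is a conditional high-probability bound on the fluctuation term, namely
\[
2\Delta_{n,H}^{\mathrm{linear}}(\calG_n;\widehat\eta)\leq 16Mq_n(\delta).
\]
Theorem~\ref{thm:bounded-hinge-sieve} applies directly, since every $g\in\calG_n$ satisfies $|g|\leq1$. It yields the stated regret bound with $2\Delta_{n,H}^{\mathrm{linear}}$ in place of $16Mq_n(\delta)$, and the remaining terms $a_n^H$, $\rho_n$ and $2\|\widehat e-e_0\|_2\|\widehat m-m_0\|_2$ appear exactly as claimed. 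Everything below is conditional on $\mathcal F_1$, so $\widehat\eta$ is fixed and the observations in $\mathcal F_2$ are i.i.d.

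\textbf{Envelope and concentration.} Since $|A|\leq1$ and $\|\widehat e\|_\infty\leq1$, we have $|A-\widehat e|\leq2$. Since $|Y|\leq M$ and $\|\widehat m\|_\infty\leq M$, we have $|Y-\widehat m|\leq 2M$. Hence $|Z_{\widehat\eta}g(\bX)|\leq 4M$ for all $g\in\calG_n$.

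Changing one observation changes $\Delta_{n,H}^{\mathrm{linear}}$ by at most $8M/n_2$. McDiarmid's inequality then gives, with probability at least $1-\delta$,
\[
\Delta_{n,H}^{\mathrm{linear}}\leq \bE\Delta_{n,H}^{\mathrm{linear}}+4M\sqrt{2\log(1/\delta)/n_2}.
\]
I use $\log(2/\delta)$ in this step to leave slack for the constants.

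\textbf{Bounding the expectation.} Symmetrization gives
\[
\bE\Delta_{n,H}^{\mathrm{linear}}\leq 2\bE\sup_{g\in\calG_n}\Bigl|n_2^{-1}\textstyle\sum_i\varepsilon_iZ_ig(\bX_i)\Bigr|.
\]
I would handle the random multipliers $Z_i$ without a contraction lemma, by discretizing $\calG_n$ directly. Take a $\gamma$-cover of $\calG_n$ in sup norm with $\gamma=1/n_2$.

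\emph{Discretization error.} Replacing each $g$ by its nearest cover element costs at most $4M\gamma=4M/n_2$ in the supremum. This produces the $n_2^{-1}$ term in $q_n(\delta)$.

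\emph{Finite supremum.} On the cover, each summand is bounded by $4M$, so Hoeffding's inequality with a maximal inequality bounds the expected maximum by $4M\sqrt{2\log(2N)/n_2}$, where $N$ is the covering number.

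\emph{Covering number.} Lemma~5 of \citet{schmidt2020nonparametric} gives
\[
\log N(\gamma,\mathcal F(L,\mathbf p,s,\infty),\|\cdot\|_\infty)\leq (s+1)\log\{2\gamma^{-1}(L+1)V^2\}.
\]
This also covers $\mathcal F(L,\mathbf p,s,1)$. The cover elements may fall outside $\calG_n$, so I project them onto it; alternatively, I apply $T_1$, which is 1-Lipschitz and does not increase sup-norm distances to functions in $\calG_n$. With $\gamma=1/n_2$, this gives exactly $(s_n+1)\log\{2n_2(L_n+1)V_n^2\}$.

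\textbf{Assembling and the main obstacle.} Combining the three pieces gives
\[
\Delta_{n,H}^{\mathrm{linear}}\leq c\,M\Bigl[\sqrt{\{(s_n+1)\log\{2n_2(L_n+1)V_n^2\}+\log(2/\delta)\}/n_2}+n_2^{-1}\Bigr]
\]
for an explicit $c$. The main obstacle is bookkeeping: making $c\leq 8$ so that $2\Delta_{n,H}^{\mathrm{linear}}\leq16Mq_n(\delta)$. The plan is as follows:
\begin{itemize}
\item use $\sqrt a+\sqrt b\leq\sqrt{2(a+b)}$ to merge the concentration and entropy square roots;
\item absorb the factor $\log 2$ from $\log(2N)$ into the entropy term;
\item keep the symmetrization factor $2$ and the envelope $4M$ separate.
\end{itemize}
If the constants overshoot, I would instead apply Talagrand-type contraction conditionally on the $Z_i$, exploiting that the map $g\mapsto Z_ig$ is $4M$-Lipschitz.
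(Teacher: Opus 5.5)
Your reduction is right and matches the paper. Since $\calG_n\subseteq[-1,1]^{\calX}$, Theorem~\ref{thm:bounded-hinge-sieve} gives the claim once $\Delta_{n,H}^{\mathrm{linear}}(\calG_n;\widehat\eta)\leq 8Mq_n(\delta)$. Your ingredients are exactly the paper's: the envelope $|Z_{\widehat\eta}|\leq 4M$, the sup-norm $1/n_2$-cover from Schmidt-Hieber's Lemma~5, and $T_1$-clipping of the centers.

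\textbf{The gap.} The step you call bookkeeping is where the proof breaks: the McDiarmid-plus-symmetrization route cannot deliver $c\leq 8$. Tracking your own constants, symmetrization doubles the Rademacher bound, so
\[
\Delta_{n,H}^{\mathrm{linear}}\leq 8\sqrt{2}\,M\sqrt{\log(2N)/n_2}+4\sqrt{2}\,M\sqrt{\log(1/\delta)/n_2}+8M/n_2 .
\]
The leading coefficient $8\sqrt2>8$ cannot be absorbed. The only control you have on $\log N$ is the entropy bound $(s_n+1)\log\{2n_2(L_n+1)V_n^2\}$ itself, which leaves no factor-$2$ slack inside the square root. Merging the square roots then gives a constant strictly larger than $16$, about $8\sqrt{10}\,Mq_n(\delta)$, so you prove the rate but not the stated bound.

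Your fallback does not help. A contraction step, or the contraction principle for the fixed multipliers $Z_i/(4M)$, still sits on top of the symmetrization factor $2$. It also still uses the same $\sqrt{2\log(2N)}$ maximal inequality.

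\textbf{The fix.} No symmetrization or McDiarmid step is needed at all. Conditional on $\mathcal F_1$, the clipped centers $\bar h_1,\ldots,\bar h_N$ are deterministic and $Z_{\widehat\eta}\bar h_j(\bX)\in[-4M,4M]$. Hoeffding therefore applies directly to each centered average:
\[
\Pr\bigl\{|(\mathbb P_2-P)(Z_{\widehat\eta}\bar h_j)|\geq u\mid\mathcal F_1\bigr\}\leq 2\exp\{-n_2u^2/(32M^2)\}.
\]
A union bound over the $N$ centers gives, with conditional probability $1-\delta$,
\[
\max_j|(\mathbb P_2-P)(Z_{\widehat\eta}\bar h_j)|\leq 4\sqrt2\,M\{(\log N+\log(2/\delta))/n_2\}^{1/2}.
\]
Discretizing the centered process directly costs $4M/n_2$ from $\mathbb P_2$ plus $4M/n_2$ from $P$. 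Since $4\sqrt2<8$, this gives $\Delta_{n,H}^{\mathrm{linear}}\leq 8Mq_n(\delta)$ and hence the constant $16$. This is the paper's proof.
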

\end{example}

\section{Simulation experiments}\label{sec:simulation}

\begin{table}[ht!]
\caption{Core features of the simulation scenarios. The upper panel summarizes the decision boundaries and degree of overlap under the true data generating processes. Limited overlap indicates that some realized propensity scores may be close to $0$ or $1$. The lower panel reports the methods used by the direct methods to estimate the corresponding nuisance functions. ``SL'' denotes Super Learner \citep{vanderlaan2007super}.}
\label{tab:simulation-features}
\centering

\begingroup
\footnotesize
\linespread{1}\selectfont
\compacttablesetup

\setlength{\tabcolsep}{4.2pt}
\renewcommand{\arraystretch}{1.08}
\renewcommand{\tabularxcolumn}[1]{m{#1}}

\begin{tabularx}{\linewidth}{@{}>{\raggedright\arraybackslash\small}m{0.215\linewidth}*{4}{>{\centering\arraybackslash}X}@{}}
\toprule
\textbf{Feature} & \textbf{DGP 1} & \textbf{DGP 2} & \textbf{DGP 3} & \textbf{DGP 4} \\[0.30em]
\midrule
\addlinespace[0.35em]
Decision boundary & \shortstack[c]{Nonlinear\\polyhedral} & Depth-2 tree & Linear & Depth-2 tree \\[0.65em]
Overlap & Sufficient & Limited & Sufficient & Sufficient \\
\addlinespace[0.30em]
\midrule
\addlinespace[0.35em]
$\widehat{\pi}$ & SL & SL & \shortstack[c]{Parametric\\correct} & \shortstack[c]{Parametric\\misspecified} \\[0.65em]
$\widehat{m}$ & SL & SL & \shortstack[c]{Parametric\\correct} & \shortstack[c]{Parametric\\correct} \\[0.65em]
$\widehat{\mu}$ & SL & SL & \shortstack[c]{Parametric\\correct} & \shortstack[c]{Parametric\\correct} \\
\addlinespace[0.25em]
\bottomrule
\end{tabularx}

\endgroup
\end{table}

We evaluate ODRL under four data generating processes (DGPs) that vary in decision boundary complexity, degree of overlap, and nuisance estimation methods. Each Monte Carlo sample contains \(n=1000\) observations. As summarized in Table~\ref{tab:simulation-features}, DGP~1 has a complex nonlinear polyhedral decision boundary under sufficient overlap; DGP~2 combines a depth-2 axis-aligned tree boundary with limited overlap; DGP~3 is a fully parametric, correctly specified linear benchmark; and DGP~4 has a depth-2 tree boundary and sufficient true overlap but uses a misspecified logistic propensity score working model that can yield fitted probabilities close to one. 
Complete data generating mechanisms are given in Supplementary Section~\ref{ss:supporting-info-simulation}.


We consider five ODRL implementations: exact optimization over a binary VC sieve, using the linear-rule sieve in Example~\ref{ss:linear-sieves} for DGP~3 and the decision tree sieve in Example~\ref{ss:tree-sieves} for the remaining DGPs; a bounded SVM with hinge loss (Example~\ref{ss:svm-sieves}); an SVM with logistic loss (Example~\ref{ss:svm-sieves}); a ReLU network with hinge loss (Example~\ref{ss:relu-sieves}); and a ReLU network with logistic loss (Example~\ref{ss:relu-sieves}). We compare these implementations with the nine representative direct and indirect methods listed in Table~\ref{tab:dgp1234-results}, together with an additional indirect method proposed by \citet{wallace2015doubly}. All direct methods use common five-fold cross-fitted nuisance predictions. In particular, the direct ODRL implementation and the methods of \citet{athey2021policy} and \citet{kallus2021more} use the same policy class and optimization routine within each DGP, so the comparison isolates differences in their policy learning objectives rather than in flexibility of the rule class. Nuisance functions
are estimated by Super Learner \citep{vanderlaan2007super} in DGPs~1 and~2, by correctly specified
parametric models in DGP~3, and by a misspecified propensity model
together with correctly specified outcome models in DGP~4. Across 500
replications, we evaluate regret, misclassification rate (MCR), and
runtime. Further implementation and tuning details are provided in
Supplementary Section~\ref{ss:supporting-info-simulation}. 


Table~\ref{tab:dgp1234-results} shows that the ODRL implementations are either the best-performing methods or strongly competitive across all four settings when paired with a suitable second-stage sieve. In DGP~1, the ODRL ReLU-logistic and ReLU-hinge learners achieve the two smallest regrets and MCRs. Notably, the bounded-score ReLU-hinge learner nearly matches its logistic-loss counterpart, although logistic loss does not preserve universal Neyman orthogonality. The ODRL SVM-logistic and SVM-hinge learners also outperform existing direct methods based on surrogate relaxation, including efficient augmentation and relaxation learning (EARL) \citep{zhao2019efficient}, RWL, and OWL, while requiring less computation. RWL is the most computationally intensive method, requiring more than $2.7$ hours, approximately 5,000 times the runtime of the ODRL SVM-hinge learner. By contrast, all three decision-tree-based methods have MCRs near 49\%, reflecting the nonvanishing approximation error induced by imposing a shallow-tree policy class on the complex decision boundary. Thus, flexible ReLU and SVM sieves provide practical surrogate-based approaches to learning highly nonlinear ITRs.

DGP~3 gives the parametric benchmark. When the nuisance
functions and optimal boundary are all linear and correctly specified,
All direct methods perform similarly and generally slightly underperform the indirect methods. This modest advantage of the indirect methods is consistent with the efficiency gain available under a correctly specified outcome or blip model, a pattern also observed in \cite{mo2022efficient}. Computation likewise favors the model-based procedures in this setting. In particular, exact optimization over linear rules requires substantially more computation. In this benchmark, direct learning entails some loss of parametric efficiency and computational convenience in exchange for avoiding restrictive model assumptions.

In DGP~2, where the true propensity score has support approximately \([0.004,0.996]\) and the depth-2 tree class is correctly specified, direct ODRL attains the smallest regret and MCR, \(0.0201\) and \(1.61\%\), respectively, compared with \(0.0275\) and \(2.20\%\) for the method of \citet{kallus2021more} and \(0.0823\) and \(6.58\%\) for that of \citet{athey2021policy}. Under surrogate relaxation, the ODRL-based methods also outperform EARL, RWL, and OWL, whose criteria contain inverse propensity score factors and yield substantially larger regrets. DGP~4 represents a plausible practical setting in which true overlap is sufficient, but a misspecified logistic-linear propensity score model yields fitted probabilities as high as \(0.992\). Direct ODRL again performs best, with a regret of \(0.0159\) and an MCR of \(0.75\%\), whereas, for example, the corresponding regrets are \(0.0860\) for the method of \cite{athey2021policy}, and \(0.1428\) for the method of \citet{kallus2021more}. 

\begin{sidewaystable}[p]
\caption{Monte Carlo performance under DGPs~1--4 over 500 replications.
Entries are mean regret and MCR, with Monte Carlo standard errors in
parentheses, and mean runtime in seconds; MCR is reported on the
$\times100\%$ scale. Lower regret and MCR are better; bold indicates the best-performing method within each DGP.}
\label{tab:dgp1234-results}
\centering

\begingroup
\small
\linespread{1}\selectfont
\compacttablesetup

\newcommand{\rescell}[2]{\shortstack{#1\\(#2)}}
\newcommand{\bestrescell}[2]{\shortstack{\textbf{#1}\\\textbf{(#2)}}}
\newcommand{\phcell}{--}

\setlength{\tabcolsep}{2.1pt}
\renewcommand{\arraystretch}{1.06}
\renewcommand{\tabularxcolumn}[1]{m{#1}}

\begin{tabularx}{\textheight}{@{}>{\raggedright\arraybackslash}m{0.215\textheight}*{12}{>{\centering\arraybackslash}X}@{}}
\toprule
\addlinespace[0.25em]
\textbf{Method} & \multicolumn{3}{c}{\textbf{DGP 1}} & \multicolumn{3}{c}{\textbf{DGP 2}} & \multicolumn{3}{c}{\textbf{DGP 3}} & \multicolumn{3}{c}{\textbf{DGP 4}} \\
\cmidrule(lr){2-4}\cmidrule(lr){5-7}\cmidrule(lr){8-10}\cmidrule(lr){11-13}
& \shortstack{Regret\\(MCSE)} & \shortstack{MCR\\(MCSE)} & \shortstack{Runtime\\(s)}
& \shortstack{Regret\\(MCSE)} & \shortstack{MCR\\(MCSE)} & \shortstack{Runtime\\(s)}
& \shortstack{Regret\\(MCSE)} & \shortstack{MCR\\(MCSE)} & \shortstack{Runtime\\(s)}
& \shortstack{Regret\\(MCSE)} & \shortstack{MCR\\(MCSE)} & \shortstack{Runtime\\(s)} \\[0.25em]
\midrule
\addlinespace[0.30em]

ODRL (ReLU hinge)
& \rescell{0.6768}{0.0032} & \rescell{30.42}{0.10} & 460.3
& \rescell{0.2808}{0.0022} & \rescell{22.46}{0.17} & 37.8
& \rescell{0.0609}{0.0070} & \rescell{9.42}{0.32} & 1.9
& \rescell{0.2971}{0.0015} & \rescell{18.26}{0.13} & 37.6 \\

\addlinespace[0.40em]
ODRL (ReLU logistic)
& \bestrescell{0.6754}{0.0033} & \bestrescell{30.37}{0.10} & 314.0
& \rescell{0.2847}{0.0022} & \rescell{22.77}{0.18} & 38.2
& \rescell{0.0430}{0.0056} & \rescell{7.98}{0.26} & 2.1
& \rescell{0.3005}{0.0014} & \rescell{18.57}{0.12} & 38.7 \\

\addlinespace[0.40em]
ODRL (SVM logistic)
& \rescell{0.7987}{0.0040} & \rescell{34.25}{0.12} & 40.5
& \rescell{0.2616}{0.0018} & \rescell{20.93}{0.15} & 13.4
& \rescell{0.0059}{0.0002} & \rescell{3.46}{0.05} & $<0.1$
& \rescell{0.2533}{0.0012} & \rescell{14.89}{0.10} & 7.5 \\

\addlinespace[0.40em]
ODRL (SVM hinge)
& \rescell{0.8974}{0.0029} & \rescell{37.10}{0.09} & 2.0
& \rescell{0.2956}{0.0022} & \rescell{23.65}{0.17} & 7.5
& \rescell{0.0085}{0.0002} & \rescell{4.23}{0.05} & 1.1
& \rescell{0.3352}{0.0027} & \rescell{22.27}{0.24} & 5.8 \\

\addlinespace[0.40em]
ODRL (VC)
& \rescell{1.3126}{0.0013} & \rescell{49.08}{0.04} & 0.5
& \bestrescell{0.0201}{0.0008} & \bestrescell{1.61}{0.06} & 0.5
& \rescell{0.0234}{0.0005} & \rescell{6.98}{0.07} & 2{,}400.4
& \bestrescell{0.0159}{0.0012} & \bestrescell{0.75}{0.05} & 4.3 \\

\addlinespace[0.15em]
\midrule
\addlinespace[0.30em]

EARL \citep{zhao2019efficient}
& \rescell{0.9365}{0.0098} & \rescell{38.23}{0.28} & 359.1
& \rescell{0.4148}{0.0046} & \rescell{33.19}{0.37} & 142.2
& \rescell{0.0052}{0.0001} & \rescell{3.28}{0.04} & 0.2
& \rescell{0.3892}{0.0069} & \rescell{27.15}{0.55} & 105.1 \\

\addlinespace[0.40em]
RWL \citep{zhou2017residual}
& \rescell{1.1302}{0.0059} & \rescell{43.83}{0.17} & 10{,}608.4
& \rescell{0.4145}{0.0064} & \rescell{33.16}{0.51} & 9{,}145.0
& \rescell{0.0099}{0.0003} & \rescell{4.50}{0.06} & 0.8
& \rescell{0.5377}{0.0087} & \rescell{38.85}{0.72} & 676.7 \\

\addlinespace[0.40em]
OWL \citep{zhao2012estimating}
& \rescell{1.2792}{0.0030} & \rescell{48.13}{0.08} & 27.9
& \rescell{0.6112}{0.0016} & \rescell{48.90}{0.13} & 23.2
& \rescell{0.0193}{0.0005} & \rescell{6.34}{0.08} & 2.1
& \rescell{0.6624}{0.0014} & \rescell{49.05}{0.12} & 31.3 \\

\addlinespace[0.40em]
CAIPWL \citep{athey2021policy}
& \rescell{1.3125}{0.0014} & \rescell{49.08}{0.04} & 0.5
& \rescell{0.0823}{0.0049} & \rescell{6.58}{0.39} & 0.5
& \rescell{0.0240}{0.0005} & \rescell{7.07}{0.07} & 2{,}400.3
& \rescell{0.0860}{0.0020} & \rescell{3.08}{0.12} & 4.2 \\

\addlinespace[0.40em]
Kallus \citep{kallus2021more}
& \rescell{1.3136}{0.0013} & \rescell{49.11}{0.04} & 0.5
& \rescell{0.0275}{0.0011} & \rescell{2.20}{0.09} & 0.5
& \rescell{0.0231}{0.0005} & \rescell{6.92}{0.07} & 2{,}400.1
& \rescell{0.1428}{0.0022} & \rescell{5.18}{0.09} & 4.2 \\

\addlinespace[0.15em]
\midrule
\addlinespace[0.30em]

Q-learning \citep{qian2011performance}
& \rescell{1.3354}{0.0010} & \rescell{49.76}{0.03} & 0.1
& \rescell{0.5658}{0.0022} & \rescell{45.26}{0.18} & 0.3
& \rescell{0.0026}{0.0001} & \rescell{2.32}{0.03} & 0.4
& \rescell{0.2030}{0.0011} & \rescell{8.97}{0.10} & 0.9 \\

\addlinespace[0.40em]
dWOLS \citep{wallace2015doubly}
& \rescell{1.3088}{0.0017} & \rescell{48.92}{0.05} & 0.1
& \rescell{0.4200}{0.0024} & \rescell{33.60}{0.20} & 0.1
& \rescell{0.0037}{0.0001} & \rescell{2.78}{0.03} & 0.1
& \rescell{0.2892}{0.0014} & \rescell{17.30}{0.11} & 0.2 \\

\addlinespace[0.40em]
A-learning \citep{schulte2015q}
& \rescell{1.3243}{0.0013} & \rescell{49.41}{0.04} & 0.2
& \rescell{0.4445}{0.0047} & \rescell{35.56}{0.37} & 0.3
& \bestrescell{0.0019}{0.0001} & \bestrescell{1.93}{0.03} & 0.1
& \rescell{0.2018}{0.0013} & \rescell{9.17}{0.12} & 0.3 \\

\addlinespace[0.40em]
E-Learning \citep{mo2022efficient}
& \rescell{1.3129}{0.0016} & \rescell{49.05}{0.05} & 4.7
& \rescell{0.4073}{0.0046} & \rescell{32.58}{0.37} & 5.3
& \rescell{0.0019}{0.0001} & \rescell{1.94}{0.04} & 0.3
& \rescell{0.4563}{0.0103} & \rescell{31.38}{0.88} & 6.6 \\

\addlinespace[0.40em]
D-Learning \citep{qi2018d}
& \rescell{1.3178}{0.0015} & \rescell{49.16}{0.05} & $<0.1$
& \rescell{0.5018}{0.0050} & \rescell{40.14}{0.40} & 0.1
& \rescell{0.0062}{0.0003} & \rescell{3.39}{0.08} & $<0.1$
& \rescell{0.6642}{0.0024} & \rescell{49.21}{0.20} & 0.3 \\

\addlinespace[0.25em]
\bottomrule
\end{tabularx}

\endgroup
\end{sidewaystable}


Overall, the experiments show that ODRL can accommodate multiple representations of the optimal ITR. First, flexible ReLU or SVM sieves are valuable for complex boundaries, whereas exact VC optimization is especially accurate and inexpensive when a shallow tree is appropriate. Across both true and fitted poor overlap regimes, the ODRL-based methods provide the most stable performance.

\section{Empirical analysis of Right Heart Catheterization}
\label{sec:data-rhc}

We apply ODRL to observational data on critically ill adults from the Right Heart Catheterization (RHC) study \citep{hirano2001rhc} in the main manuscript and to data from a randomized multinational field experiment on climate policy in Section~\ref{sec:data-application} of the Supplementary Material. We code RHC within 24 hours of study entry as \(A=1\) and no RHC as \(A=-1\). The reward \(Y\) is survival through 30 days. The analytic sample includes \(5735\) patients, of whom \(2184\) received RHC and \(3551\) did not. 
The published analysis estimated an average treatment effect of \(-6.2\) percentage points on 30-day survival for RHC relative to no RHC. This negative average treatment effect motivated our exploratory analysis of individualized recommendations for RHC. The estimated propensity scores ranged from \(0.001\) to \(0.960\), with \(2.58\%\) below \(0.05\), indicating limited overlap, particularly in the lower tail. A graphical assessment of overlap across treatment groups is presented in Figure~\ref{fig:rhc-propensity-overlap} of the Supplementary Material.

We adjust for 28 baseline covariates that may serve as potential effect modifiers or confounders and span demographic, functional, diagnostic, prognostic, and physiological domains. Consistent with the simulation experiments, we consider five ODRL implementations: bounded ReLU with hinge loss (Example~\ref{ss:relu-sieves}), ReLU with logistic loss (Example~\ref{ss:relu-sieves}), Gaussian RKHS with hinge loss (Example~\ref{ss:svm-sieves}), Gaussian RKHS with logistic loss (Example~\ref{ss:svm-sieves}), and exact optimization over depth-2 decision trees (Example~\ref{ss:tree-sieves}). We additionally consider an Athey--Wager depth-2 policy tree implemented using \texttt{policytree} and Q-learning implemented using \texttt{DynTxRegime}. We obtain five-fold cross-fitted nuisance estimates using Super Learner \citep{vanderlaan2007super} with a library comprising \texttt{SL.mean}, \texttt{SL.glm}, \texttt{SL.glmnet}, \texttt{SL.earth}, and \texttt{SL.gam}. 

Table~\ref{tab:rhc-full-policy} reports pairwise agreement among the optimal ITRs estimated by the five ODRL learners and the two comparison methods in the RHC study. The proportions recommended to receive RHC range from 12.9\% to 43.6\%, with pairwise agreement rates ranging from 58.9\% to 87.6\%. Figure~\ref{fig:rhc-fit-the-fit} presents a depth-2 decision tree obtained by exact optimization of the ODRL objective. The rule recommends RHC for patients with a PaO$_2$/FiO$_2$ ratio \(\leq188.31\) whose APACHE score is \(\leq47.00\), and for patients with a higher PaO$_2$/FiO$_2$ ratio whose baseline temperature is \(\leq36.50\,^{\circ}\mathrm{C}\). Clinically, these splits potentially suggest that the estimated benefit of RHC may depend on hypoxemic respiratory impairment, overall illness severity, and hypothermia as an indicator of physiological instability. Finally, post-learning value estimation using ten-fold cross-fitting yields estimated survival values of \(68.52\%\), \(68.21\%\), \(68.81\%\), \(68.96\%\), \(67.78\%\), \(67.63\%\), and \(68.84\%\) under the estimated optimal ITRs obtained using ReLU-H, ReLU-L, RBF-H, RBF-L, the ODRL decision tree, the Athey--Wager CAIPWL decision tree, and Q-learning, respectively. 
Relative to assigning RHC to every patient, the corresponding improvements in survival are \(3.75\), \(3.43\), \(4.03\), \(4.19\), \(3.01\), \(2.86\), and \(4.07\) percentage points, respectively.

\begin{figure}[ht!]
\centering
\includegraphics[width=0.96\textwidth]
{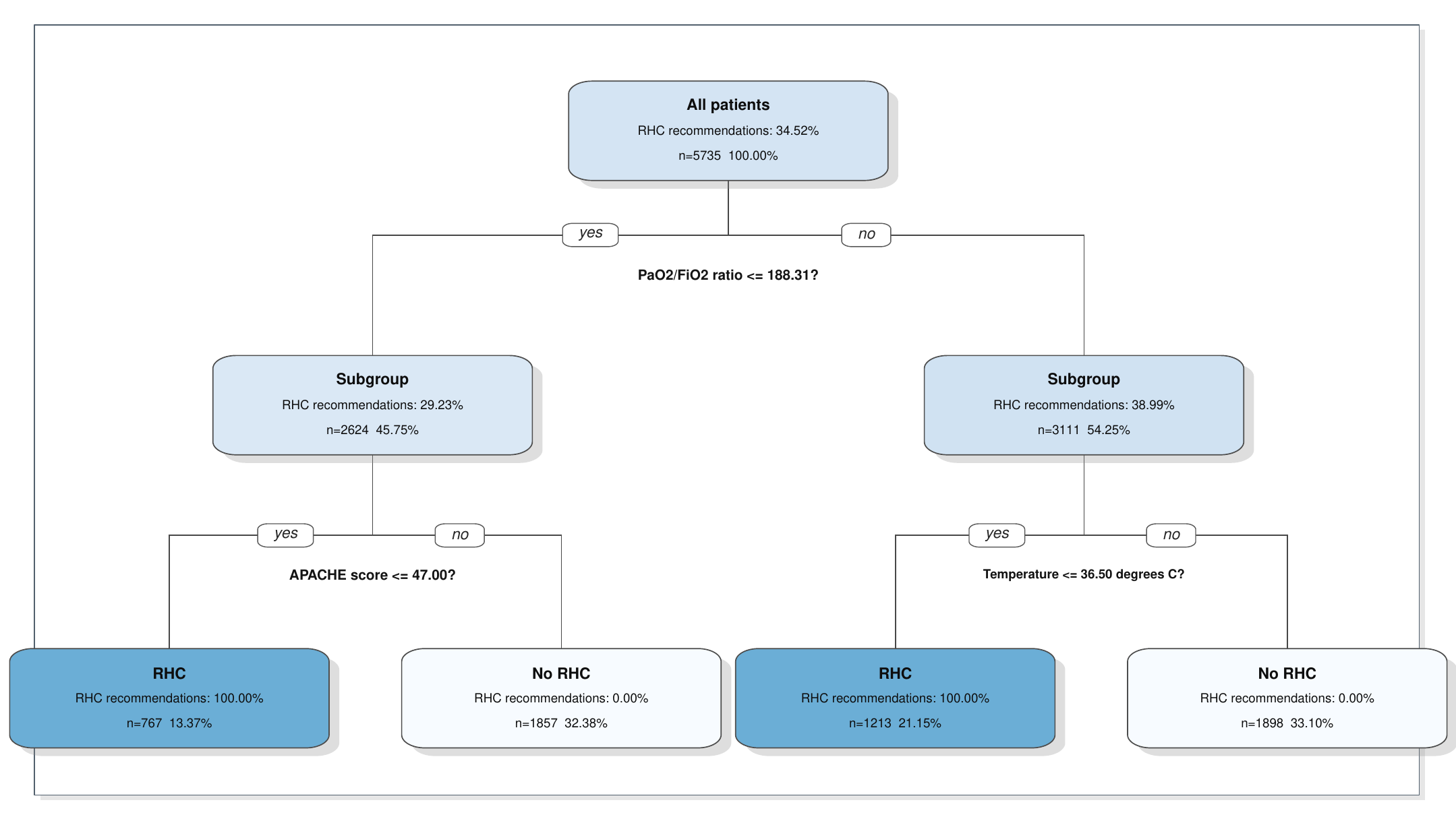}
\caption{Depth-2 axis-aligned decision tree fitted to the RHC data using ODRL. Terminal nodes report the recommended treatment; boxes report the percentage of patients for whom the estimated optimal ITR recommends RHC, the node size, and the corresponding sample proportion. }
\label{fig:rhc-fit-the-fit}
\end{figure}

\begin{table}[ht!]
\centering
\caption{Pairwise agreement among the five fitted ODRL rules and two comparator rules in the RHC study. The row labeled Observed reports agreement with observed RHC receipt. ReLU-H and ReLU-L denote ReLU with hinge and logistic losses, respectively; RBF-H and RBF-L denote Gaussian RKHS with hinge and logistic losses, respectively; ODRL-T denotes exact optimization over depth-2 decision trees; CAIPWL denotes the Athey--Wager depth-2 policy tree.}
\label{tab:rhc-full-policy}

\begingroup
\footnotesize
\linespread{1}\selectfont
\compacttablesetup

\setlength{\tabcolsep}{4.0pt}
\renewcommand{\arraystretch}{0.86}
\setlength{\aboverulesep}{0.15ex}
\setlength{\belowrulesep}{0.9ex}

\begin{tabular}{@{}lccccccc@{}}
\toprule
& ReLU-H & ReLU-L & RBF-H & RBF-L & ODRL-T & CAIPWL & Q-Learning \\
\midrule
ReLU-H & 1 & 0.768 & 0.730 & 0.764 & 0.597 & 0.616 & 0.705 \\[1ex]
ReLU-L &   & 1     & 0.682 & 0.717 & 0.590 & 0.589 & 0.639 \\[1ex]
RBF-H  &   &       & 1     & 0.876 & 0.684 & 0.690 & 0.817 \\[1ex]
RBF-L  &   &       &       & 1     & 0.666 & 0.674 & 0.790 \\[1ex]
ODRL-T &   &       &       &       & 1     & 0.667 & 0.665 \\[1ex]
CAIPWL   &   &       &       &       &       & 1     & 0.681 \\[1ex]
Q-Learning&   &       &       &       &       &       & 1     \\[1ex]
Observed $A$ & 0.490 & 0.481 & 0.543 & 0.525 & 0.515 & 0.537 & 0.561 \\
\midrule
RHC, $n$ (\%) &
2188 (38.2) & 2503 (43.6) & 1253 (21.8) & 1636 (28.5) &
1804 (31.5) & 1617 (28.2) & 740 (12.9) \\
\bottomrule
\end{tabular}

\endgroup
\end{table}

\section{Concluding remarks}\label{sec:conclusion}

We propose orthogonal double residual learning, a two-stage, cross-fitted framework that directly learns optimal individualized treatment rules through a cost-sensitive classification loss based on treatment and outcome residuals. The loss is Fisher consistent and universally Neyman orthogonal, requires no restrictive model assumptions, and contains no inverse propensity score weighting components. Consequently, the proposed method accommodates complex decision boundaries and is robust to limited overlap and nuisance estimation error. We establish nonasymptotic value function regret bounds for exact optimization over binary VC sieves and calibrated surrogate learning over flexible score sieves. As a new insight, we show that hinge loss over bounded-score sieves preserves universal Neyman orthogonality, whereas generic surrogate losses generally do not.

In practice, nuisance functions may be estimated flexibly using cross-fitting, and the second-stage sieve may be selected based on the anticipated complexity of the decision boundary and the desired degree of interpretability. Exact optimization over linear rules or shallow decision trees is attractive when such structures are plausible and interpretability is a priority. For complex nonlinear boundaries, bounded-score hinge learning over SVM or ReLU sieves provides a principled default option, whereas other calibrated surrogates, such as logistic loss, may improve approximation or optimization but introduce first-order sensitivity to nuisance estimation error. The simulations support this recommendation, since flexible SVM and ReLU implementations perform well under complex decision boundaries, exact tree-based ODRL performs best when a shallow tree is correctly specified, and ODRL remains stable under limited overlap. In the correctly specified linear benchmark, model-based indirect methods retain a modest efficiency advantage. For practical implementation of our proposed methods, we also provide the \texttt{odrlITR} R package, available at \url{https://github.com/deckardt98/odrlITR}.

Directions for future research include data-adaptive selection of rule sieves and surrogate losses with corresponding regret guarantees, as well as post-learning inference for the value of learned rules. Extensions to multiarm and multistage treatments and resource-constrained decision rule classes would further broaden the scope of ODRL, and will be left for future research.


\bibliographystyle{chicago}
\bibliography{ref}

@article{nie2021quasi,
  title={Quasi-oracle estimation of heterogeneous treatment effects},
  author={Nie, Xinkun and Wager, Stefan},
  journal={Biometrika},
  volume={108},
  number={2},
  pages={299--319},
  year={2021},
  publisher={Oxford University Press}
}

@article{robinson1988root,
  title={Root-N-consistent semiparametric regression},
  author={Robinson, Peter M},
  journal={Econometrica},
  pages={931--954},
  year={1988},
  publisher={JSTOR}
}

@article{foster2023orthogonal,
  title={Orthogonal statistical learning},
  author={Foster, Dylan J and Syrgkanis, Vasilis},
  journal={The Annals of Statistics},
  volume={51},
  number={3},
  pages={879--908},
  year={2023},
  publisher={Institute of Mathematical Statistics}
}

@article{chen2007large,
  title={Large sample sieve estimation of semi-nonparametric models},
  author={Chen, Xiaohong},
  journal={Handbook of Econometrics},
  volume={6},
  pages={5549--5632},
  year={2007},
  publisher={Elsevier}
}

@article{dml,
    author = {Chernozhukov, Victor and Chetverikov, Denis and Demirer, Mert and Duflo, Esther and Hansen, Christian and Newey, Whitney and Robins, James},
    title = "{Double/debiased machine learning for treatment and structural parameters}",
    journal = {The Econometrics Journal},
    volume = {21},
    number = {1},
    pages = {C1-C68},
    year = {2018},
    month = {01},
    issn = {1368-4221},
    doi = {10.1111/ectj.12097},
    url = {https://doi.org/10.1111/ectj.12097},
    eprint = {https://academic.oup.com/ectj/article-pdf/21/1/C1/27684918/ectj00c1.pdf},
}

@article{morzywolek2023weighted,
  title={On weighted orthogonal learners for heterogeneous treatment effects},
  author={Morzywolek, Pawel and Decruyenaere, Johan and Vansteelandt, Stijn},
  journal={Statistical Science},
  year={2025}
}

@article{li2018balancing,
  title={Balancing covariates via propensity score weighting},
  author={Li, Fan and Morgan, Kari Lock and Zaslavsky, Alan M},
  journal={Journal of the American Statistical Association},
  volume={113},
  number={521},
  pages={390--400},
  year={2018},
  publisher={Taylor \& Francis}
}

@article{qian2011performance,
  title={Performance guarantees for individualized treatment rules},
  author={Qian, Min and Murphy, Susan A},
  journal={Annals of Statistics},
  volume={39},
  number={2},
  pages={1180},
  year={2011}
}

@article{rubin1974estimating,
  title={Estimating causal effects of treatments in randomized and nonrandomized studies.},
  author={Rubin, Donald B},
  journal={Journal of Educational Psychology},
  volume={66},
  number={5},
  pages={688},
  year={1974},
  publisher={American Psychological Association}
}

@article{athey2021policy,
  title={Policy learning with observational data},
  author={Athey, Susan and Wager, Stefan},
  journal={Econometrica},
  volume={89},
  number={1},
  pages={133--161},
  year={2021},
  publisher={Wiley Online Library}
}

@article{chernozhukov2024applied,
  title={Applied causal inference powered by ML and AI},
  author={Chernozhukov, Victor and Hansen, Christian and Kallus, Nathan and Spindler, Martin and Syrgkanis, Vasilis},
  journal={arXiv preprint arXiv:2403.02467},
  year={2024}
}

@article{zhao2019efficient,
  title={Efficient augmentation and relaxation learning for individualized treatment rules using observational data},
  author={Zhao, Ying-Qi and Laber, Eric B and Ning, Yang and Saha, Sumona and Sands, Bruce E},
  journal={Journal of Machine Learning Research},
  volume={20},
  number={48},
  pages={1--23},
  year={2019}
}

@article{zhou2017residual,
  title={Residual weighted learning for estimating individualized treatment rules},
  author={Zhou, Xin and Mayer-Hamblett, Nicole and Khan, Umer and Kosorok, Michael R},
  journal={Journal of the American Statistical Association},
  volume={112},
  number={517},
  pages={169--187},
  year={2017},
  publisher={Taylor \& Francis}
}

@article{zhao2012estimating,
  title={Estimating individualized treatment rules using outcome weighted learning},
  author={Zhao, Yingqi and Zeng, Donglin and Rush, A John and Kosorok, Michael R},
  journal={Journal of the American Statistical Association},
  volume={107},
  number={499},
  pages={1106--1118},
  year={2012},
  publisher={Taylor \& Francis}
}

@article{zhao2009reinforcement,
  title={Reinforcement learning design for cancer clinical trials},
  author={Zhao, Yufan and Kosorok, Michael R and Zeng, Donglin},
  journal={Statistics in Medicine},
  volume={28},
  number={26},
  pages={3294--3315},
  year={2009},
  publisher={Wiley Online Library}
}

@article{zhou2023offline,
  title={Offline multi-action policy learning: Generalization and optimization},
  author={Zhou, Zhengyuan and Athey, Susan and Wager, Stefan},
  journal={Operations Research},
  volume={71},
  number={1},
  pages={148--183},
  year={2023},
  publisher={INFORMS}
}

@article{schulte2015q,
  title={Q-and A-learning methods for estimating optimal dynamic treatment regimes},
  author={Schulte, Phillip J and Tsiatis, Anastasios A and Laber, Eric B and Davidian, Marie},
  journal={Statistical Science},
  volume={29},
  number={4},
  pages={640},
  year={2015}
}

@article{kallus2021more,
  title={More efficient policy learning via optimal retargeting},
  author={Kallus, Nathan},
  journal={Journal of the American Statistical Association},
  volume={116},
  number={534},
  pages={646--658},
  year={2021},
  publisher={Taylor \& Francis}
}

@article{schmidt2020nonparametric,
  title={Nonparametric regression using deep neural networks with {ReLU} activation function},
  author={Schmidt-Hieber, Johannes},
  journal={The Annals of Statistics},
  volume={48},
  number={4},
  pages={1875--1897},
  year={2020}
}

@book{vanderVaartWellner2023,
  author    = {van der Vaart, A. W. and Wellner, Jon A.},
  title     = {Weak Convergence and Empirical Processes:
               With Applications to Statistics},
  edition   = {2},
  series    = {Springer Series in Statistics},
  publisher = {Springer},
  address   = {Cham},
  year      = {2023},
  doi       = {10.1007/978-3-031-29040-4},
  isbn      = {978-3-031-29040-4}
}

@article{mi2019bagging,
  title={Bagging and deep learning in optimal individualized treatment rules},
  author={Mi, Xinlei and Zou, Fei and Zhu, Ruoqing},
  journal={Biometrics},
  volume={75},
  number={2},
  pages={674--684},
  year={2019},
  publisher={Oxford University Press}
}

@article{manski1975maximum,
  title={Maximum score estimation of the stochastic utility model of choice},
  author={Manski, Charles F},
  journal={Journal of Econometrics},
  volume={3},
  number={3},
  pages={205--228},
  year={1975},
  publisher={Elsevier}
}

@article{mcdiarmid1989method,
  title={On the method of bounded differences},
  author={McDiarmid, Colin and others},
  journal={Surveys in Combinatorics},
  volume={141},
  number={1},
  pages={148--188},
  year={1989},
  publisher={Norwich}
}

@article{mo2022efficient,
  title={Efficient learning of optimal individualized treatment rules for heteroscedastic or misspecified treatment-free effect models},
  author={Mo, Weibin and Liu, Yufeng},
  journal={Journal of the Royal Statistical Society Series B: Statistical Methodology},
  volume={84},
  number={2},
  pages={440--472},
  year={2022},
  publisher={Oxford University Press}
}

@article{wallace2015doubly,
  title={Doubly-robust dynamic treatment regimen estimation via weighted least squares},
  author={Wallace, Michael P and Moodie, Erica EM},
  journal={Biometrics},
  volume={71},
  number={3},
  pages={636--644},
  year={2015},
  publisher={Oxford University Press}
}

@article{qi2018d,
author = {Zhengling Qi and Yufeng Liu},
title = {{D-learning to estimate optimal individual treatment rules}},
volume = {12},
journal = {Electronic Journal of Statistics},
number = {2},
publisher = {Institute of Mathematical Statistics and Bernoulli Society},
pages = {3601 -- 3638},
year = {2018},
doi = {10.1214/18-EJS1480},
URL = {https://doi.org/10.1214/18-EJS1480}
}

@article{bartlett2006convexity,
  title={Convexity, classification, and risk bounds},
  author={Bartlett, Peter L and Jordan, Michael I and McAuliffe, Jon D},
  journal={Journal of the American Statistical Association},
  volume={101},
  number={473},
  pages={138--156},
  year={2006},
  publisher={Taylor \& Francis}
}

@book{luenberger1969optimization,
  author    = {Luenberger, David G.},
  title     = {Optimization by Vector Space Methods},
  publisher = {John Wiley \& Sons},
  address   = {New York},
  year      = {1969}
}

@book{bauschke2017convex,
  author    = {Bauschke, Heinz H. and Combettes, Patrick L.},
  title     = {Convex Analysis and Monotone Operator Theory
               in Hilbert Spaces},
  edition   = {2},
  publisher = {Springer},
  address   = {Cham},
  year      = {2017},
  doi       = {10.1007/978-3-319-48311-5}
}

@article{vanderlaan2007super,
  title = {Super Learner},
  author = {van der Laan, Mark J. and Polley, Eric C. and Hubbard, Alan E.},
  journal = {Statistical Applications in Genetics and Molecular Biology},
  year = {2007},
  volume = {6},
  number = {1},
  pages = {Article 25},
  doi = {10.2202/1544-6115.1309}
}

@book{ledoux1991probability,
  author    = {Ledoux, Michel and Talagrand, Michel},
  title     = {Probability in Banach Spaces:
               Isoperimetry and Processes},
  publisher = {Springer-Verlag},
  address   = {Berlin},
  year      = {1991},
  doi       = {10.1007/978-3-642-20212-4}
}

@article{pereira2025climate,
  author  = {Pereira, Miguel M. and Giger, Nathalie and Perez, Maria D. and Axelsson, Kaya},
  title   = {Encouraging Politicians to Act on Climate: A Field Experiment with Local Officials in Six Countries},
  journal = {American Journal of Political Science},
  year    = {2025},
  volume  = {69},
  number  = {1},
  pages   = {148--163},
  doi     = {10.1111/ajps.12841}
}

@book{scholkopf2002learning,
  title     = {Learning with Kernels: Support Vector Machines,
               Regularization, Optimization, and Beyond},
  author    = {Sch{\"o}lkopf, Bernhard and Smola, Alexander J.},
  year      = {2002},
  publisher = {MIT Press},
  address   = {Cambridge, MA}
}

@article{hirano2001rhc,
  author  = {Hirano, Keisuke and Imbens, Guido W.},
  title   = {Estimation of Causal Effects Using Propensity Score Weighting:
             An Application to Data on Right Heart Catheterization},
  journal = {Health Services and Outcomes Research Methodology},
  year    = {2001},
  volume  = {2},
  pages   = {259--278},
  doi     = {10.1023/A:1020371312283}
}

@inproceedings{lee2024effective,
  author    = {Lee, Joowon and Huling, Jared D. and Chen, Guanhua},
  title     = {An Effective Framework for Estimating Individualized Treatment Rules},
  booktitle = {Advances in Neural Information Processing Systems},
  volume    = {37},
  year      = {2024},
  doi       = {10.52202/079017-0270}
}

@article{bousquet2002bennett,
  title={A Bennett concentration inequality and its application to suprema of empirical processes},
  author={Bousquet, Olivier},
  journal={Comptes Rendus Mathematique},
  volume={334},
  number={6},
  pages={495--500},
  year={2002},
  publisher={Elsevier}
}

\clearpage
\pdfbookmark[0]{Supplementary Material}{supplementary-material}

\setcounter{page}{1}
\renewcommand{\thepage}{S\arabic{page}}
\setcounter{equation}{0}
\renewcommand{\theequation}{S\arabic{equation}}
\setcounter{figure}{0}
\renewcommand{\thefigure}{S\arabic{figure}}
\setcounter{table}{0}
\renewcommand{\thetable}{S\arabic{table}}
\setcounter{lemma}{0}
\renewcommand{\thelemma}{S\arabic{lemma}}

\renewcommand{\theHsection}{supp.\Alph{section}}
\renewcommand{\theHsubsection}{\theHsection.\arabic{subsection}}
\renewcommand{\theHsubsubsection}{\theHsubsection.\arabic{subsubsection}}
\renewcommand{\theHequation}{supp.\arabic{equation}}
\renewcommand{\theHfigure}{supp.\arabic{figure}}
\renewcommand{\theHtable}{supp.\arabic{table}}
\makeatletter
\@ifundefined{theHlemma}
  {\newcommand{\theHlemma}{supp.\arabic{lemma}}}
  {\renewcommand{\theHlemma}{supp.\arabic{lemma}}}
\makeatother

\allowdisplaybreaks
\spacingset{1.5}

\if1\blind
{
\begin{center}
  {\large\bfseries Supplementary Material for ``Orthogonal double residual learning for optimal individualized treatment rules''\par}
  \vspace{0.55em}
  {\normalsize Jiaqi Tong$^{1}$ and Fan Li$^{1,*}$\par}
  \vspace{0.15em}
  {\small $^{1}$Department of Biostatistics, Yale School of Public Health, New Haven, CT, USA\par}
  {\small $^{*}$\emph{email}: fan.f.li@yale.edu\par}
\end{center}
}
\fi

\if0\blind
{
  \bigskip
  \bigskip
  \bigskip
  \begin{center}
    {\Large\bf Supplementary Material for\\ ``Orthogonal double residual learning for optimal individualized treatment rules''}
  \end{center}
  \medskip
}
\fi

\smallskip

\section*{Supplementary Material}
\label{SM}
This supplementary material is organized as follows. Appendix A contains proofs of the technical results in the main manuscript. Appendix B provides supporting information for the simulation experiments and data application. Throughout the supplementary material, M.$(v)$ denotes equation $(v)$ in the main manuscript.

\hypersetup{linkcolor=black}
\setcounter{tocdepth}{2}
\etocsettocstyle{\section*{\contentsname}}{}
\etocsetlocaltop{part}
\footnotesize
\localtableofcontents
\normalsize
\hypersetup{linkcolor=red}
\clearpage


\appendix

\setcounter{section}{1}\setcounter{subsection}{0}
\section*{Appendix A: Proofs of technical results}
\addcontentsline{toc}{section}{Appendix A: Proofs of technical results}

\subsection{Proof of Theorem \ref{thm:valid-loss}}\label{ss:proof-fisher-consistency}
\begin{proof}
At the true nuisance functions, write $Z_0=(A-e_0(\bX))(Y-m_0(\bX))$. The part of the loss that depends on $d$ is
\begin{align*}
 -\frac{1}{2}\bE\{Z_0d(\bX)\}.
\end{align*}
By iterated expectation, we obtain
\begin{align*}
 &h_0(\bX):=\bE(Z_0\mid\bX)
 =\bE(AY\mid\bX)-\bE(A\mid\bX)\bE(Y\mid\bX)\\
 =& \pi(1,\bX)\mu(1,\bX)-\pi(-1,\bX)\mu(-1,\bX)-\lb \pi(1,\bX)-\pi(-1,\bX)\rb \lb \pi(1,\bX)m(1,\bX)+\pi(-1,\bX)m(-1,\bX)\rb\\
 =&2\pi(1,\bX)\pi(-1,\bX)\{\mu(1,\bX)-\mu(-1,\bX)\}\\
 =&2\pi(1,\bX)\pi(-1,\bX)\tau(\bX).
\end{align*}
Let $d^\ast=\operatorname{sign}(\tau)=\operatorname{sign}(h_0)$, using weak positivity. For every binary rule $d$ and fixed $\bX=\bx$, we observe the following pointwise identity:
\begin{align*}
    \frac{1}{2}h_0(\bx)\{d^\ast(\bx)-d(\bx)\}=|h_0(\bx)|\mone\{d(\bx)\neq d^\ast(\bx)\},
\end{align*}
which further implies 
\begin{align*}
 \calL(d;\eta_0)-\calL(d^\ast;\eta_0)
 =&\frac{1}{2}\bE\lbb h_0(\bX)\{d^\ast(\bX)-d(\bX)\}\rbb\\
 =&\bE\lbb |h_0(\bX)|\mone\{d(\bX)\neq d^\ast(\bX)\}\rbb\geq0.
\end{align*}
The difference is zero only if $d=d^\ast$ almost surely on $\{h_0\neq0\}=\{\tau\neq0\}$. On $\{\tau=0\}$ either treatment has the same loss and value. This proves all claims.
\end{proof}

\subsection{Proof of Theorem \ref{prop:Neyman-orthogonality}}
\begin{proof}
For an ambient target direction $v$,
\begin{align*}
 D_d\calL(\bar d;\eta)[v]
 =-\frac{1}{2}\bE\lbb Z_\eta(\mO)v(\bX)\rbb.
\end{align*}
Consider the nuisance path $e_t=e_0+t\dot e$ and $m_t=m_0+t\dot m$. Differentiating at $t=0$ gives
\begin{align*}
 D_\eta D_d\calL(\bar d;\eta_0)\lbb v,(\dot e,\dot m)\rbb
    =&\frac{\partial}{\partial t}\bE\lbb \frac{-\lb A-\lsb e_0+t\dot{e}\rsb\rb\lb Y-\lsb m_0+t\dot{m}\rsb\rb v}{2}\rbb|_{t=0}\\
 =&\frac{1}{2}\bE\lbb v(\bX)\{\dot e(\bX)(Y-m_0(\bX))
 +(A-e_0(\bX))\dot m(\bX)\}\rbb\\
 =&0,
\end{align*}
where differentiation uses the chain rule and the final equality follows from iterated expectation. The expression does not depend on $\bar d$, which proves universal Neyman orthogonality.
\end{proof}

\subsection{Proof of Theorem \ref{thm:sieve-regret}}\label{ss:proof-of-theorem{thm:sieve-regret}}
\begin{proof}
Let $\calD_n\subseteq\{-1,1\}^{\calX}$ be a fixed sieve or a sieve measurable with respect to the first split $\mathcal{F}_1$. Therefore, conditional on $\mathcal{F}_1$, both $\widehat\eta$ and $\calD_n$ are fixed. 
Moreover, fix $d\in\calD_n$ and a nuisance direction
$\dot\eta=(\dot e,\dot m)$. Consider the one-dimensional nuisance path
\[
\eta_t=(e_t,m_t):=(e+t\dot e,m+t\dot m),
\qquad t\in\mathbb R,
\]
and write $r(t):=\mathcal R(d;\eta_t)$. By the definition of the
target-dependent linear risk,
\begin{align*}
r(t)
=&-\frac{1}{2}
P\lbb
d(\bX)
\{A-e(\bX)-t\dot e(\bX)\}
\{Y-m(\bX)-t\dot m(\bX)\}
\rbb.
\end{align*}
The product of the two perturbed residuals has the exact expansion
\begin{align*}
&\{A-e(\bX)-t\dot e(\bX)\}
 \{Y-m(\bX)-t\dot m(\bX)\} \\
=&
\{A-e(\bX)\}\{Y-m(\bX)\}
-t\dot e(\bX)\{Y-m(\bX)\} -t\{A-e(\bX)\}\dot m(\bX)
+t^2\dot e(\bX)\dot m(\bX).
\end{align*}
Consequently,
\begin{align*}
r(t)
=&\mathcal R(d;\eta)
+\frac{t}{2}P\lbb
d(\bX)
\big\{
\dot e(\bX)(Y-m(\bX))
+(A-e(\bX))\dot m(\bX)
\big\}
\rbb -\\
&\frac{t^2}{2}
P\{d(\bX)\dot e(\bX)\dot m(\bX)\}.
\end{align*}
Thus, (i) using 
\begin{align*}
    D_\eta\mathcal R(d;\eta)
\lbb\dot\eta\rbb
=
\left.\frac{\partial}{\partial t}
\mathcal R(d;\eta_t)\right|_{t=0},
\end{align*}
we obtain
\begin{align}
 D_\eta\mathcal R(d;\eta)\lbb\dot\eta\rbb
 =&\frac{1}{2}P\lbb d(\bX)\{\dot e(\bX)(Y-m(\bX))
 +(A-e(\bX))\dot m(\bX)\}\rbb;
 \label{eq:affine-risk-first-derivative}    
\end{align}
and (ii) using
\begin{equation*}
D_\eta^2\mathcal R(d;\eta)
\lbb\dot\eta,\dot\eta\rbb
=
\left.\frac{\partial^2}{\partial t^2}
\mathcal R(d;\eta_t)\right|_{t=0},    
\end{equation*}
we obtain 
\begin{align}\label{eq:affine-risk-second-derivative} 
    D_\eta^2\mathcal R(d;\eta)
\lbb\dot\eta,\dot\eta\rbb=-P\{d(\bX)\dot e(\bX)\dot m(\bX)\}
\end{align}
and 
\begin{align}\label{eq:affine-risk-nuisance-curvature}
\left|
D_\eta^2\mathcal R(d;\eta)
\lbb\dot\eta,\dot\eta\rbb
\right|
&\leq
P\{|\dot e(\bX)\dot m(\bX)|\}
\notag\\
&\leq
\{P\dot e(\bX)^2\}^{1/2}
\{P\dot m(\bX)^2\}^{1/2}
\notag\\
=&\|\dot e\|_2\|\dot m\|_2,
\end{align}
where the first inequality uses
$|d(\bX)|=1$ for $d\in\calD_n\subseteq\{-1,1\}^{\calX}$, and the
second follows from the Cauchy--Schwarz inequality. 

Write $\delta_\eta=(\delta_e,\delta_m):=(\widehat e-e_0,\widehat m-m_0)$.
We first prove the generic oracle inequality: 
\begin{align}\label{eq:generic-orthogonal-sieve-bound}
 {\calL(\widehat d_n;\eta_0)-\calL(d_n^\circ;\eta_0)}
 {\leq r_{2,n}+\|\widehat e-e_0\|_2\|\widehat m-m_0\|_2.}
\end{align}
Suppose that \(\widehat d_n\) and \(d_n^\circ\in\calD_n\) satisfy M.\eqref{eq:generic-second-stage-rate}. Adding and subtracting the plug-in risks gives
\begin{align}\label{eq:foster-style-decomposition}
 &\mathcal R(\widehat d_n;\eta_0)-\mathcal R(d_n^\circ;\eta_0)\notag\\
 =&
 \lbb\mathcal R(\widehat d_n;\eta_0)-\mathcal R(\widehat d_n;\widehat\eta)\rbb
 +\lbb\mathcal R(d_n^\circ;\widehat\eta)-\mathcal R(d_n^\circ;\eta_0)\rbb
 +\lbb\mathcal R(\widehat d_n;\widehat\eta)-\mathcal R(d_n^\circ;\widehat\eta)\rbb\notag\\
 \leq&\lbb\mathcal R(\widehat d_n;\eta_0)-\mathcal R(\widehat d_n;\widehat\eta)\rbb
 +\lbb\mathcal R(d_n^\circ;\widehat\eta)-\mathcal R(d_n^\circ;\eta_0)\rbb
 +r_{2,n}.
\end{align}


Apply a second-order Taylor expansion with respect to $\eta$ to the two bracketed terms in \eqref{eq:foster-style-decomposition}. There exist nuisance values $\bar\eta_1,\bar\eta_2$ on the line segment joining $\eta_0$ and $\widehat\eta$ such that their sum equals
\begin{align}\label{eq:nuisance-taylor-pair}
 &-D_\eta\mathcal R(\widehat d_n;\eta_0)\lbb\delta_\eta\rbb
 +D_\eta\mathcal R(d_n^\circ;\eta_0)\lbb\delta_\eta\rbb-\frac{1}{2}D_\eta^2\mathcal R(\widehat d_n;\bar\eta_1)
 \lbb\delta_\eta,\delta_\eta\rbb
 +\frac{1}{2}D_\eta^2\mathcal R(d_n^\circ;\bar\eta_2)
 \lbb\delta_\eta,\delta_\eta\rbb.
\end{align}
The first-order terms cancel. To make this explicit, consider the
line segment in the ambient function space $\mathcal F$,
\begin{align*}
    d_t
    :=d_n^\circ+t(\widehat d_n-d_n^\circ),
    \qquad t\in[0,1].
\end{align*}
Although the intermediate functions $d_t$ need not be binary-valued,
the path is well defined in the ambient space $\mathcal F$. Since
$d_0=d_n^\circ$, $d_1=\widehat d_n$, and
$\dot d_t:=\partial d_t/\partial t=\widehat d_n-d_n^\circ$, the fundamental theorem of
calculus and the chain rule give
\begin{align*}
&D_\eta\mathcal R(\widehat d_n;\eta_0)
 \lbb\delta_\eta\rbb
-D_\eta\mathcal R(d_n^\circ;\eta_0)
 \lbb\delta_\eta\rbb
\\
=&
\int_0^1
\frac{\mathrm d}{\mathrm dt}
D_\eta\mathcal R(d_t;\eta_0)
\lbb\delta_\eta\rbb\,\mathrm dt
\\
=&
\int_0^1
D_dD_\eta\mathcal R(d_t;\eta_0)
\lbb\widehat d_n-d_n^\circ,\delta_\eta\rbb
\,\mathrm dt.
\end{align*}
Because $\mathcal R$ is linear in the target rule $d$, $ D_d^2D_\eta\mathcal R(d;\eta_0)\equiv0$. Consequently, $D_dD_\eta\mathcal R(d;\eta_0)$ does not depend on
the base point $d$. Hence, for any ambient $\bar d\in\mathcal F$,
\begin{align*}
&D_\eta\mathcal R(\widehat d_n;\eta_0)
 \lbb\delta_\eta\rbb
-D_\eta\mathcal R(d_n^\circ;\eta_0)
 \lbb\delta_\eta\rbb
\\
=&
D_dD_\eta\mathcal R(\bar d;\eta_0)
\lbb\widehat d_n-d_n^\circ,\delta_\eta\rbb
\\
=&
D_\eta D_d\mathcal R(\bar d;\eta_0)
\lbb\widehat d_n-d_n^\circ,\delta_\eta\rbb
\\
=&0.
\end{align*}
The mixed derivatives commute because $\mathcal R$ is polynomial in
$(d,\eta)$. The final equality follows from universal Neyman
orthogonality in Theorem~\ref{prop:Neyman-orthogonality}, noting that
the target derivatives of $\mathcal R$ and $\calL$ coincide.

Using \eqref{eq:affine-risk-second-derivative}, the remaining second-order terms in \eqref{eq:nuisance-taylor-pair} reduce exactly to
\begin{align*}
 \frac{1}{2}P\lbb\delta_e(\bX)\delta_m(\bX)
 \{\widehat d_n(\bX)-d_n^\circ(\bX)\}\rbb
 \leq\|\delta_e\|_{2}\|\delta_m\|_{2}=\|\widehat{e}-e_0\|_{2}\|\widehat{m}-m_0\|_{2},
\end{align*}
where $|\widehat d_n-d_n^\circ|\leq2$ and Cauchy--Schwarz were used. Substitution into \eqref{eq:foster-style-decomposition}, followed by the excess risk equivalence $\calL(d;\eta)-\calL(d';\eta)
 =\mathcal R(d;\eta)-\mathcal R(d';\eta)$, proves \eqref{eq:generic-orthogonal-sieve-bound}.

We next verify {M.\eqref{eq:generic-second-stage-rate} with \(r_{2,n}=\Delta_n(\calD_n;\widehat\eta)+\rho_n\)} for the approximate empirical minimizer. For any $d_n^\circ\in\calD_n$, M.\eqref{eq:sieve-erm} implies
\begin{align*}
 &\mathcal R(\widehat d_n;\widehat\eta)-\mathcal R(d_n^\circ;\widehat\eta)\\
 =&[\{P-\mathbb P_2\}\ell^{\mathrm{linear}}(\widehat d_n;\widehat\eta)]+[\{\mathbb P_2-P\}\ell^{\mathrm{linear}}(d_n^\circ;\widehat\eta)]+[\mathbb P_2\ell^{\mathrm{linear}}(\widehat d_n;\widehat\eta)-\mathbb P_2 \ell^{\mathrm{linear}}(d_n^\circ;\widehat\eta)]\\
 \leq&\left|\{P-\mathbb P_2\}\ell^{\mathrm{linear}}(\widehat d_n;\widehat\eta)\right|
 +\left|\{\mathbb P_2-P\}\ell^{\mathrm{linear}}(d_n^\circ;\widehat\eta)\right|+\rho_n\\
 \leq&\Delta_n(\calD_n;\widehat\eta)+\rho_n,
\end{align*}
where the first inequality follows from 
\begin{equation*}
\mathbb P_2\ell^{\mathrm{linear}}(\widehat d_n;\widehat\eta)
 \leq\inf_{d\in\calD_n}\mathbb P_2\ell^{\mathrm{linear}}(d;\widehat\eta)+\rho_n\leq \mathbb P_2\ell^{\mathrm{linear}}(d_n^\circ;\widehat\eta)+\rho_n,    
\end{equation*}
and the last inequality follows from
\begin{align*}
\{\mathbb P_2-P\}\ell^{\mathrm{linear}}(d_n^\circ;\widehat\eta)
=&(\mathbb P_2-P)\lb -2^{-1}Z_{\widehat\eta}(\mO)d_n^\circ(\bX)\rb\\
\leq& 2^{-1}\sup_{d\in\calD_n}
 \left|(\mathbb P_2-P)\lb Z_{\widehat\eta}(\mO)d(\bX)\rb\right|\\
 =&\Delta_n/2    
\end{align*}
and similarly, $\{P-\mathbb P_2\}\ell^{\mathrm{linear}}(\widehat d_n;\widehat\eta)\leq \Delta_n/2$. This proves  M.\eqref{eq:generic-second-stage-rate}.

The preceding empirical process argument holds simultaneously for every
$d_n^\circ\in\mathcal D_n$. Hence, by \eqref{eq:generic-orthogonal-sieve-bound},
for every $d_n^\circ\in\mathcal D_n$,
\begin{align*}
\mathcal L(\widehat d_n;\eta_0)
-\mathcal L(d_n^\circ;\eta_0)
\leq
\Delta_n(\mathcal D_n;\widehat\eta)+\rho_n
+\|\widehat e-e_0\|_2\|\widehat m-m_0\|_2.
\end{align*}
Adding and subtracting $\mathcal L(d_n^\circ;\eta_0)$ therefore gives
\begin{align*}
\mathcal L(\widehat d_n;\eta_0)-\mathcal L(d^\ast;\eta_0)
\leq\Delta_n(\mathcal D_n;\widehat\eta)+\rho_n
+\|\widehat e-e_0\|_2\|\widehat m-m_0\|_2 +\mathcal L(d_n^\circ;\eta_0)-\mathcal L(d^\ast;\eta_0).
\end{align*}
Taking the infimum over $d_n^\circ\in\mathcal D_n$ and using the
definition of $a_n$ yields
\begin{align}\label{eq:excess-risk-double-residual}
\mathcal L(\widehat d_n;\eta_0)-\mathcal L(d^\ast;\eta_0)
\leq
a_n+\Delta_n(\mathcal D_n;\widehat\eta)+\rho_n
+\|\widehat e-e_0\|_2\|\widehat m-m_0\|_2.
\end{align}


It remains to translate excess risk into value regret. By the calculation in the proof of Theorem \ref{thm:valid-loss}, for every binary rule $d$,
\begin{align*}
 \calL(d;\eta_0)-\calL(d^\ast;\eta_0)
 =&\bE\lbb2\pi(1,\bX)\pi(-1,\bX)|\tau(\bX)|
 \mone\{d(\bX)\neq d^\ast(\bX)\}\rbb\\
 &\geq2\epsilon_n(1-\epsilon_n)\{V(d^\ast)-V(d)\},
\end{align*}
where the last inequality follows from the fact that the minimum of $2x(1-x)$ over $x\in[\epsilon_n,1-\epsilon_n]$ is $2\epsilon_n(1-\epsilon_n)$. Combining this display with \eqref{eq:excess-risk-double-residual} yields the value function regret bound.
\end{proof}

\subsection{Proof of Theorem \ref{thm:vc-sieve-regret}}
\begin{proof}
Because $|Y|\leq M$ almost surely, $\|\widehat{m}\|_{\infty}\leq M$, and $\|\widehat{e}\|_\infty\leq 1$, we obtain 
\begin{align}\label{eq:bound-Zeta-4M}
 |Z_{\widehat\eta}|
 =|A-\widehat e(\bX)|\,|Y-\widehat m(\bX)|
 \leq (1+1)\times(M+M)=4M.
\end{align}
Consider the following class of functions:
\begin{align*}
 \mathcal F_n
 =&\{f_d(\mO)=Z_{\widehat\eta}(\mO)d(\bX):d\in\calD_n\},\\
 \mathcal F_n^{\pm}
 =&\mathcal F_n\cup(-\mathcal F_n)\cup\{0\}.
\end{align*}
Conditional on $\mathcal{F}_1$, the nuisance predictions $\widehat{\eta}$ and the sieve $\mathcal{D}_n$ are fixed, implying that $\mathcal F_n$ and $\mathcal F_n^{\pm}$ are nonrandom classes of functions.

Let $\{\xi_i:i\in\mathcal F_2\}$ be independent Rademacher random variables, obtained from an i.i.d. sequence $\{\xi_i\}_{i=1}^n$ that is independent of the data and satisfies $\Pr(\xi_i=1)=\Pr(\xi_i=-1)=1/2$. The enlargement by signs and the zero function is useful. To see this, we let 
\begin{equation*}
T(f):=\frac{1}{n_2}\sum_{i\in\mathcal F_2}\xi_i f(\mO_i)  
\end{equation*}
and obtain
\begin{align}
\sup_{f\in\mathcal F_n^{\pm}}
 \frac{1}{n_2}\sum_{i\in\mathcal F_2}\xi_i f(\mO_i)
 =\sup_{f\in\mathcal F_n^{\pm}}T(f)
=&
\max\lbb
    \sup_{d\in\mathcal D_n}T(f_d),\,
    \sup_{d\in\mathcal D_n}\bigl\{-T(f_d)\bigr\},\,
    0
\rbb
\notag\\
=&
\sup_{d\in\mathcal D_n}
\max\left\{T(f_d),-T(f_d),0\right\}
\notag\\
=&
\sup_{d\in\mathcal D_n}\lvert T(f_d)\rvert
\notag\\
=&
\sup_{d\in\calD_n}
 \left|\frac{1}{n_2}\sum_{i\in\mathcal F_2}
 \xi_i f_d(\mO_i)\right|.
\label{eq:symmetrization-centered}
\end{align}
It also avoids a degeneracy when $\calD_n$ is a singleton and $v_n=0$.

Fix an arbitrary probability measure \(Q\) on the observation space
\(\mathcal O\), and let \(Q_{\bX}\) denote its \(\bX\)-marginal.
For each \(d\in\mathcal D_n\), define $C_d:=\{\bx\in\mathcal X:d(\bx)=1\}\subseteq\mathcal X$ and $
\mathcal C_n:=\{C_d:d\in\mathcal D_n\}$. The class \(\mathcal C_n\) has VC dimension \(v_n\). Under the
VC-index convention of \citet{vanderVaartWellner2023}, $V(\mathcal C_n)=v_n+1
\leq 2\bar v_n$, where $\bar v_n:=v_n\vee1$. Define $\mone_{C_d}(\bx)
:=
\mone(\bx\in C_d)
=
\mone\{d(\bx)=1\}$. Moreover, because $d(\bx)=2\mone_{C_d}(\bx)-1$,
we have
\begin{align}
\|d-d'\|_{L^2({Q_{\bX}})}
=&
2\|\mone_{C_d}-\mone_{C_{d'}}\|_{L^2({Q_{\bX}})}.
\label{eq:pseudometric-relationship}
\end{align}
Let $N\{u,\mathcal D_n,L^2({Q_{\bX}})\}$ denote the covering number of \(\mathcal D_n\) by
\(L^2({Q_{\bX}})\)-balls of radius \(u\). It follows from
\eqref{eq:pseudometric-relationship} that
\[
N\{u,\mathcal D_n,L^2({Q_{\bX}})\}
\leq
N\{u/2,\mathcal C_n,L^2({Q_{\bX}})\},
\]
because a \(u/2\)-radius cover for \(\mathcal C_n\) induces a
\(u\)-radius cover for \(\mathcal D_n\).

Therefore, Theorem~2.6.4 of
\citet{vanderVaartWellner2023}, applied with \(r=2\),
\(\varepsilon=u/2\), and the arbitrary probability measure
\(Q_{\bX}\), gives, for some universal constant \(K\),
\[
\begin{aligned}
N\{u,\mathcal D_n,L^2({Q_{\bX}})\}
&\leq
K V(\mathcal C_n)(4e)^{V(\mathcal C_n)}
\left(\frac{2}{u}\right)^{2V(\mathcal C_n)}.
\end{aligned}
\]
Replacing \(K\) by \(K\vee1\) if necessary, taking logarithms, and
using $\log V(\mathcal C_n)\leq V(\mathcal C_n)$, $
V(\mathcal C_n)\leq2\bar v_n$, and $
\bar v_n\geq1$, we define $C_D
:=
4\vee
\left\{
\log K+2+2\log(4e)+4\log2
\right\}$ and obtain, for \(0<u<1\),
\begin{align}
\log N\{u,\mathcal D_n,L^2({Q_{\bX}})\}
&\leq
C_D\bar v_n\{1+\log(1/u)\}.
\label{eq:covering-number-Dn}
\end{align}

Next, by $f_d(\mO)=Z_{\widehat\eta}(\mO)d(\bX)$ and \eqref{eq:bound-Zeta-4M}, we obtain
\[
\begin{aligned}
\|f_d-f_{d'}\|_{L^2(Q)}^2
=&
\int
Z_{\widehat\eta}(o)^2
\{d(x)-d'(x)\}^2\,dQ(o)
\\
&\leq
16M^2
\int
\{d(x)-d'(x)\}^2\,dQ(o)
\\
=&
16M^2
\|d-d'\|_{L^2({Q_{\bX}})}^2.
\end{aligned}
\]
Consequently, $N\{4Mu,\mathcal F_n,L^2(Q)\}
\leq
N\{u,\mathcal D_n,L^2({Q_{\bX}})\}$. Moreover, because $\mathcal F_n^\pm
=
\mathcal F_n\cup(-\mathcal F_n)\cup\{0\}$, we have
\[
\begin{aligned}
N\{4Mu,\mathcal F_n^\pm,L^2(Q)\}
&\leq
2N\{4Mu,\mathcal F_n,L^2(Q)\}+1
\\
&\leq
2N\{u,\mathcal D_n,L^2({Q_{\bX}})\}+1
\\
&\leq
3N\{u,\mathcal D_n,L^2({Q_{\bX}})\}.
\end{aligned}
\]
Taking logarithms and applying
\eqref{eq:covering-number-Dn}, we obtain, for \(0<u<1\),
\begin{align}
\log N\{4Mu,\mathcal F_n^\pm,L^2(Q)\}
&\leq
\log3+
C_D\bar v_n\{1+\log(1/u)\}
\notag\\
&\leq
C_1\bar v_n\{1+\log(1/u)\},
\label{eq:weighted-vc-entropy-proof}
\end{align}
where $C_1:=C_D+\log3$. Indeed, $\bar v_n\{1+\log(1/u)\}\geq1$, so the term \(\log3\) is absorbed by \(C_1\). Because \(Q\) was arbitrary,
\eqref{eq:weighted-vc-entropy-proof} holds for every probability
measure \(Q\) on \(\mathcal O\).

Let
\[
Q_{2,n}
:=
\frac{1}{n_2}
\sum_{i\in\mathcal F_2}\delta_{\mO_i}
\]
denote the empirical measure of the second-split observations, and
define the conditionally centered Rademacher process by
\[
\mathbb R_{2,n}^{\xi}f
:=
\frac{1}{\sqrt{n_2}}
\sum_{i\in\mathcal F_2}\xi_i f(\mO_i),
\qquad
f\in\mathcal F_n^{\pm}.
\]
Conditional on \(\{\mO_i:i\in\mathcal F_2\}\), for every
\(f,g\in\mathcal F_n^{\pm}\) and every \(\lambda\in\mathbb R\),
\begin{align*}
&\bE_\xi\left[
 \exp\left\{
 \lambda\bigl(
 \mathbb R_{2,n}^{\xi}f
 -
 \mathbb R_{2,n}^{\xi}g
 \bigr)
 \right\}
 \biggm|
 \{\mO_i:i\in\mathcal F_2\}
 \right]
\\
=&
\prod_{i\in\mathcal F_2}
\cosh\left\{
\frac{\lambda
\{f(\mO_i)-g(\mO_i)\}}
{\sqrt{n_2}}
\right\}
\\
\leq&
\exp\left\{
\frac{\lambda^2}{2n_2}
\sum_{i\in\mathcal F_2}
\{f(\mO_i)-g(\mO_i)\}^2
\right\}
\\
=&
\exp\left\{
\frac{\lambda^2}{2}
\|f-g\|_{L^2(Q_{2,n})}^2
\right\},
\end{align*}
where the inequality uses
\(\cosh(x)\leq\exp(x^2/2)\) and
\[
\|f-g\|_{L^2(Q_{2,n})}
:=
\left[
\frac{1}{n_2}
\sum_{i\in\mathcal F_2}
\{f(\mO_i)-g(\mO_i)\}^2
\right]^{1/2}.
\]
Thus, conditional on the second-split observations,
\(\{\mathbb R_{2,n}^{\xi}f:f\in\mathcal F_n^{\pm}\}\)
is a sub-Gaussian process under the semimetric
\[
\rho_{2,n}(f,g)
:=
\|f-g\|_{L^2(Q_{2,n})}.
\]
Therefore, because
\[
\rho_{2,n}(f,0)
=
\|f\|_{L^2(Q_{2,n})}
\leq4M
\]
for every \(f\in\mathcal F_n^\pm\), we obtain
\[
\sup_{f\in\mathcal F_n^\pm}
|\mathbb R_{2,n}^{\xi}f|
=
\sup_{f\in\mathcal F_n^\pm}
|\mathbb R_{2,n}^{\xi}f-\mathbb R_{2,n}^{\xi}0|
\leq
\sup_{\rho_{2,n}(f,g)\leq4M}
|\mathbb R_{2,n}^{\xi}f-\mathbb R_{2,n}^{\xi}g|.
\]
Corollary~2.2.9 of
\citet{vanderVaartWellner2023}, applied conditionally with
\(T=\mathcal F_n^\pm\), \(d=\rho_{2,n}\), and \(\delta=4M\),
gives
\begin{align*}
&\bE_\xi\left[
\sup_{f\in\mathcal F_n^\pm}
|\mathbb R_{2,n}^{\xi}f|
\biggm|
\{\mO_i:i\in\mathcal F_2\}
\right]
\\
\leq&
\bE_\xi\left[
\sup_{\rho_{2,n}(f,g)\leq4M}
|\mathbb R_{2,n}^{\xi}f-\mathbb R_{2,n}^{\xi}g|
\biggm|
\{\mO_i:i\in\mathcal F_2\}
\right]
\\
\leq&
C_F\int_0^{4M}
\lb
\log D(
\varepsilon,\mathcal F_n^\pm,\rho_{2,n}
)
\rb^{1/2}
\,d\varepsilon
\\
\leq&
C_F\int_0^{4M}
\lb
\log N(
\varepsilon/2,\mathcal F_n^\pm,\rho_{2,n}
)
\rb^{1/2}
\,d\varepsilon,
\end{align*}
where \(C_F\) is a universal constant, \(D\) denotes the packing
number, and the last inequality uses the standard packing-covering
relation
\[
D(
\varepsilon,\mathcal F_n^\pm,\rho_{2,n}
)
\leq
N(
\varepsilon/2,\mathcal F_n^\pm,\rho_{2,n}
).
\]

Because \eqref{eq:weighted-vc-entropy-proof} holds for every
probability measure \(Q\), it can be applied, conditional on the
second-split observations, with \(Q=Q_{2,n}\). Moreover, by the
definition of \(\rho_{2,n}\),
\[
N\{
r,\mathcal F_n^\pm,\rho_{2,n}
\}
=
N\{
r,\mathcal F_n^\pm,L^2(Q_{2,n})
\}
\]
for every \(r>0\). Therefore, taking
\[
u=\frac{\varepsilon}{8M},
\qquad
4Mu=\frac{\varepsilon}{2},
\]
\eqref{eq:weighted-vc-entropy-proof} gives, for
\(0<\varepsilon<4M\),
\begin{align}
\log N\{
\varepsilon/2,\mathcal F_n^\pm,\rho_{2,n}
\}
=&
\log N\{
\varepsilon/2,\mathcal F_n^\pm,L^2(Q_{2,n})
\}
\notag\\
\leq&
C_1\bar v_n
\left\{
1+\log\left(\frac{8M}{\varepsilon}\right)
\right\}.\label{eq:bound-final-covering-number}
\end{align}

Using \eqref{eq:symmetrization-centered}, we obtain
\begin{align*}
&\bE_\xi\left[
 \sup_{d\in\calD_n}
 \left|
 \frac{1}{n_2}
 \sum_{i\in\mathcal F_2}
 \xi_i f_d(\mO_i)
 \right|
 \biggm|
 \{\mO_i:i\in\mathcal F_2\}
 \right]
\\
=&
\frac{1}{\sqrt{n_2}}
\bE_\xi\left[
 \sup_{f\in\mathcal F_n^{\pm}}
 \mathbb R_{2,n}^{\xi}f
 \biggm|
 \{\mO_i:i\in\mathcal F_2\}
 \right]
\\
\leq&
\frac{C_F}{\sqrt{n_2}}
\int_0^{4M}
\lb
\log N(
\varepsilon/2,
\mathcal F_n^{\pm},
\rho_{2,n}
)
\rb^{1/2}
\,d\varepsilon
\\
\leq&
8C_F\sqrt{C_1}M\sqrt{\frac{\bar v_n}{n_2}}
\int_0^1
\{1+\log(1/u)\}^{1/2}
\,du
\\
\leq&
C_2M\sqrt{\frac{\bar v_n}{n_2}},
\end{align*}
where $C_2:=8C_F\sqrt{C_1}\int_0^1
\{1+\log(1/u)\}^{1/2}
\,du$ and the penultimate inequality follows from
\eqref{eq:bound-final-covering-number} with
an extension of the upper integration limit from \(1/2\) to \(1\). Finally, symmetrization implies 
\begin{equation}\label{eq:symmetrization-bound}
    \bE\{
\Delta_n(\mathcal D_n;\widehat\eta)
\mid\mathcal F_1
\}
\leq
2 C_2M\sqrt{\frac{\bar v_n}{n_2}}.
\end{equation}

Changing one observation in the second split changes the supremum
defining \(\Delta_n\) by at most \(8M/n_2\). Indeed, let
\(\mathbb P_2^{(j)}\) denote the empirical measure obtained by
replacing \(\mO_j\) by \(\mO_j'\), and let $\Delta_n^{(j)}
:=
\sup_{d\in\mathcal D_n}
\left|
(\mathbb P_2^{(j)}-P)f_d
\right|$. Then
\[
\begin{aligned}
\left|\Delta_n-\Delta_n^{(j)}\right|
\leq&
\sup_{d\in\mathcal D_n}
\left|
(\mathbb P_2-\mathbb P_2^{(j)})f_d
\right|
\\
=&
\frac1{n_2}
\sup_{d\in\mathcal D_n}
\left|
f_d(\mO_j)-f_d(\mO_j')
\right|
\\
\leq&
\frac{8M}{n_2},
\end{aligned}
\]
where the last inequality follows from
\eqref{eq:bound-Zeta-4M}. Therefore, the conditional McDiarmid's inequality \citep{mcdiarmid1989method} gives
\begin{align*}
\Pr\left[
\left|
\Delta_n(\mathcal D_n;\widehat\eta)
-
\bE\{
\Delta_n(\mathcal D_n;\widehat\eta)
\mid\mathcal F_1
\}
\right|
\geq t
\biggm|
\mathcal F_1
\right]\leq
2\exp\left\{
-\frac{2t^2}
{n_2(8M/n_2)^2}
\right\}
=
2\exp\left\{
-\frac{n_2t^2}{32M^2}
\right\}.
\end{align*}

Taking $t
=
4\sqrt{2}\,M
\sqrt{{\log(2/\delta)}/{n_2}}$
and combining the above inequality with \eqref{eq:symmetrization-bound}, we obtain, with conditional probability at least \(1-\delta\),
\begin{align}\label{eq:vc-empirical-process-proof}
 \Delta_n(\calD_n;\widehat\eta)
 \leq&2 C_2M\sqrt{\frac{\bar v_n}{n_2}}+
4\sqrt{2}\,M
\sqrt{\frac{\log(2/\delta)}{n_2}}\notag\\
 \leq&
 CM
 \sqrt{
 \frac{\bar v_n+\log(2/\delta)}{n_2}
 }
 =
 CMq_{n,\mathrm{VC}}(\delta),
\end{align}
where $C/\sqrt{2}=(2C_2)\vee (4\sqrt{2})$ and the second inequality uses
\(\sqrt a+\sqrt b\leq\sqrt{2(a+b)}\). Substituting \eqref{eq:vc-empirical-process-proof} into
Theorem \ref{thm:sieve-regret} proves M.\eqref{eq:vc-sieve-regret}.

We next prove the variance adaptive refinement. Define $\widehat s_{2,n}^2
:=\mathbb P_2 Z_{\widehat\eta}^2$.
If $\widehat s_{2,n}=0$, then
$Z_{\widehat\eta,i}=0$ for every $i\in\mathcal F_2$. Hence every
function in $\mathcal F_n^{\pm}$ vanishes on the second-stage sample,
and the corresponding conditional Rademacher process is identically
zero. Otherwise, define a weighted empirical probability measure on the
measurable space \((\mathcal X,\mathscr X)\) by, for $B\in\mathscr X$,
\begin{align*}
Q_{Z,2,n}(B):=
\frac{
\mathbb P_2\!\left\{
Z_{\widehat\eta}^2\mone(\bX\in B)
\right\}
}{
\widehat s_{2,n}^2
}.
\end{align*}
Equivalently, for every measurable \(Q_{Z,2,n}\)-integrable function
\(h\), write
\begin{align*}
Q_{Z,2,n}h
:=
\int h\,dQ_{Z,2,n}
=
\frac{\mathbb P_2\{Z_{\widehat\eta}^2h(\bX)\}}
{\widehat s_{2,n}^2}.
\end{align*}
Observe that for $d,d'\in\calD_n$, $\|f_d-f_{d'}\|_{L^2(Q_{2,n})}
=
\widehat s_{2,n}
\|d-d'\|_{L^2(Q_{Z,2,n})}$.
Moreover, $\|f\|_{L^2(Q_{2,n})}\leq\widehat s_{2,n}$ for every $f\in\mathcal F_n^{\pm}$. Applying \eqref{eq:covering-number-Dn} with $Q_{\bX}=Q_{Z,2,n}$ and repeating the sign enlargement argument in \eqref{eq:weighted-vc-entropy-proof} gives, for $0<u<1$,
\begin{align}
\log N\{\widehat s_{2,n}u,\mathcal F_n^{\pm},L^2(Q_{2,n})\}
\leq
C_1\bar v_n\{1+\log(1/u)\}.
\label{eq:variance-vc-entropy-proof}
\end{align}
Repeating the conditional chaining calculation above with radius $\widehat s_{2,n}$ and using \eqref{eq:variance-vc-entropy-proof}, we obtain a universal constant $C_3$ such that
\begin{align*}
&\bE_\xi\left[
 \sup_{d\in\calD_n}
 \left|
 \frac{1}{n_2}\sum_{i\in\mathcal F_2}
 \xi_i f_d(\mO_i)
 \right|
 \biggm|
 \{\mO_i:i\in\mathcal F_2\}
 \right]
\leq
C_3\widehat s_{2,n}
\sqrt{\frac{\bar v_n}{n_2}}.
\end{align*}
Similarly, conditional symmetrization and Jensen's inequality therefore yield
\begin{align}
\bE\{\Delta_n(\calD_n;\widehat\eta)\mid\mathcal F_1\}
\leq&
2C_3\bE(\widehat s_{2,n}\mid\mathcal F_1)
\sqrt{\frac{\bar v_n}{n_2}}
\notag\\
\leq&
2C_3s_n(\widehat\eta)
\sqrt{\frac{\bar v_n}{n_2}},
\label{eq:variance-sensitive-mean-bound}
\end{align}
where $s_n^2(\widehat\eta)=\bE(Z_{\widehat\eta}^2)$.

To convert the bound in \eqref{eq:variance-sensitive-mean-bound} into a high probability bound, define $\mathcal G_n^0
:=
\{
 f_d-Pf_d,Pf_d-f_d:
 d\in\calD_n
\}$ and observe $\Delta_n(\calD_n;\widehat\eta)
=
\sup_{g\in\mathcal G_n^0}\mathbb P_2 g$. By \eqref{eq:bound-Zeta-4M}, every $g\in\mathcal G_n^0$ is bounded above by $8M$. By the fact that $d^2=1$, 
\begin{align*}
\sup_{g\in\mathcal G_n^0}Pg^2
=
\sup_{d\in\calD_n}
P\left[
\{f_d-Pf_d\}^2
\right]
\leq
\sup_{d\in\calD_n}Pf_d^2
=
PZ_{\widehat\eta}^2
=s_n^2(\widehat\eta).
\end{align*}
Applying Theorem~2.3 of \citet{bousquet2002bennett}
conditionally on $\mathcal{F}_1$ to the rescaled centered
class \((8M)^{-1}\mathcal G_n^0\), for every \(t>0\), with conditional
probability at least \(1-e^{-t}\),
\begin{align*}
\Delta_n(\calD_n;\widehat\eta)
\leq
\bE\{\Delta_n(\calD_n;\widehat\eta)\mid\mathcal F_1\}+
\left[
\frac{2t}{n_2}
\left\{
 s_n^2(\widehat\eta)
 +16M\bE\{\Delta_n(\calD_n;\widehat\eta)\mid\mathcal F_1\}
\right\}
\right]^{1/2}
+
\frac{8Mt}{3n_2}.
\end{align*}
Set $t=\log(1/\delta)$ and use \eqref{eq:variance-sensitive-mean-bound}. Since $\sqrt{\bar v_n/n_2}\leq q_{n,\mathrm{VC}}(\delta)$ and $\sqrt{t/n_2}\leq q_{n,\mathrm{VC}}(\delta)$, the inequalities $2\sqrt{ab}\leq a+b$ and $\sqrt{a+b}\leq \sqrt{a}+\sqrt{b}$ gives, for some sufficiently large universal constant $C$,
\begin{align}
\Delta_n(\calD_n;\widehat\eta)
\leq
C\left\{
 s_n(\widehat\eta)q_{n,\mathrm{VC}}(\delta)
 +Mq_{n,\mathrm{VC}}^2(\delta)
\right\}.
\label{eq:vc-empirical-process-variance-proof}
\end{align}

It remains to bound $s_n^2(\widehat\eta)$. Conditional on $\mathcal F_1$, the fitted nuisance functions are fixed. Using $|Y-\widehat m(\bX)|\leq2M$ and the conditional mean $e_0(\bX)=\bE(A\mid\bX)$, we obtain
\begin{align*}
s_n^2(\widehat\eta)
=&
\bE\left[
\{A-\widehat e(\bX)\}^2
\{Y-\widehat m(\bX)\}^2
\right]
\\
\leq&
4M^2\bE\{A-\widehat e(\bX)\}^2
\\
=&
4M^2\bE\left[
\operatorname{Var}(A\mid\bX)
+\{e_0(\bX)-\widehat e(\bX)\}^2
\right]
\\
=&
4M^2\left[
4\bE\{\pi(1,\bX)\pi(-1,\bX)\}
+\|\widehat e-e_0\|_2^2
\right]
\\
=&
4M^2\left\{
4\bar\omega_n
+\|\widehat e-e_0\|_2^2
\right\}.
\end{align*}
Finally, combining the above with \eqref{eq:vc-empirical-process-variance-proof} and Theorem~\ref{thm:sieve-regret} proves the variance-adaptive regret bound.

\end{proof}

\subsection{Proof of the decision tree VC bound M.\eqref{eq:tree-vc-bound}}\label{ss:tree-VC-dimension}
\begin{proof}
We first introduce the Hamming covering numbers used below. For a
binary rule class
\(\calD\subseteq\{-1,1\}^{\calX}\), a positive integer \(m\), and a
finite covariate sample
\(x_{1:m}=(x_1,\ldots,x_m)\in\calX^m\), define the normalized empirical
Hamming pseudometric by
\begin{align*}
 H_{x_{1:m}}(d,d')
 :=
 \frac{1}{m}\sum_{i=1}^m
 \mone\{d(x_i)\neq d'(x_i)\},
 \qquad d,d'\in\calD.
\end{align*}
It is a pseudometric rather than necessarily a metric because two distinct rules may agree at all sample points, so that $H_{x_{1:m}}(d,d')=0$ may hold even when $d\neq d'$. The sample specific Hamming covering number is
\begin{align*}
 &N_H(\varepsilon,\calD;x_{1:m})\\
 :=&
 \min\Bigl\{
 |\mathcal V|:\,
 \mathcal V\subseteq\calD,\ 
 \text{for every }d\in\calD
 \text{ there exists }\widetilde d\in\mathcal V
 \text{ such that }
 H_{x_{1:m}}(d,\widetilde d)\leq\varepsilon
 \Bigr\}.
\end{align*}
The corresponding uniform Hamming covering number is
\begin{align}\label{eq:uniform-hamming-definition}
 N_H(\varepsilon,\calD)
 :=
 \sup_{m\geq1}\ 
 \sup_{x_{1:m}\in\calX^m}
 N_H(\varepsilon,\calD;x_{1:m}).
\end{align}
Thus, ``uniform'' means that the required number of covering rules is
bounded uniformly over every finite sample size and every configuration
of covariate values. It does not require one fixed collection of
covering rules to work for all samples: the covering set
\(\mathcal V\) may depend on \(x_{1:m}\), but its cardinality is
controlled uniformly.

We now apply the decision tree covering number bound of
\citet[Lemma~4]{zhou2023offline}. Relabeling the terminal treatments
\(\{-1,1\}\) as \(\{1,2\}\) does not change any Hamming distance.
Moreover, Zhou et al.\ count the total number of tree layers, whereas
our convention assigns depth \(0\) to a single leaf. Hence, our
depth-\(L\) class corresponds to their class \(\Pi_{L+1}\). Trees of
depth strictly less than \(L\) can be represented by padding them with
redundant splits whose descendant leaves have the same treatment label. Consequently, Lemma~4, specialized to two
terminal labels with $d=2$, gives, for every \(0<\varepsilon<1\),
\begin{align}\label{eq:zhou-tree-covering}
 \log N_H(\varepsilon,\calD_{L,p}^{\mathrm{tree}})
 \leq&
 (2^{L+1}-1)\log p
 +2^{L+1}\log 2+
 (2^{L+1}-1)
 \log\left(
 \frac{L+1}{\varepsilon}+1
 \right).
\end{align}

It remains to relate the uniform Hamming covering number to VC
dimension. Let \(v\geq 1\), and suppose that
\(\calD\) shatters the points \(x_1,\ldots,x_v\). Define the
restriction map $\rho_{x_{1:v}}:\calD\longrightarrow\{-1,1\}^{v}$, where $\rho_{x_{1:v}}(d)
 :=
 \bigl(d(x_1),\ldots,d(x_v)\bigr)$. Because \(x_1,\ldots,x_v\) are shattered, every assignment of binary
labels to these points is realized by some rule in \(\calD\). Hence
\begin{align}\label{eq:shattered-cube}
 \rho_{x_{1:v}}(\calD)=\{-1,1\}^{v}.
\end{align}
Thus, when restricted to the shattered sample, the rule class is the
entire binary cube $\{-1,1\}^{v}$, containing \(2^v\) distinct label vectors.

For \(a,b\in\{-1,1\}^{v}\), write
\begin{align*}
 H_v(a,b)
 :=
 \frac{1}{v}\sum_{i=1}^{v}\mone\{a_i\neq b_i\}
\end{align*}
for their normalized Hamming distance. Fix a center
\(a\in\{-1,1\}^{v}\). A vector \(b\) lies in the Hamming ball of
radius \(1/4\) around \(a\) precisely when $H_v(a,b)\leq 1/4$, or, equivalently, when \(a\) and \(b\) differ in at most
\(\lfloor v/4\rfloor\) coordinates.

For each \(j\), there are exactly \(\binom{v}{j}\) vectors that differ
from \(a\) in exactly \(j\) coordinates: one first chooses the \(j\)
coordinates on which the vectors differ, and, because the labels are
binary, the values of \(b\) on those coordinates are then uniquely
determined as the opposites of the corresponding values of \(a\).
Consequently, every normalized Hamming ball of radius \(1/4\) in
\(\{-1,1\}^{v}\) contains exactly $B_v
 :=
 \sum_{j=0}^{\lfloor v/4\rfloor}\binom{v}{j}$ vertices.

We next bound \(B_v\). Let \(q=1/4\), and define the binary entropy
function, using natural logarithms, by $ h(q):=-q\log q-(1-q)\log(1-q)$. For \(0\leq j\leq qv\), the quantity
\(q^j(1-q)^{v-j}\) is decreasing in \(j\), since
\(q/(1-q)<1\). Therefore, 
\begin{equation*}
q^j(1-q)^{v-j}\geq
 q^{qv}(1-q)^{(1-q)v}
 =
 \exp\{-v h(q)\}.    
\end{equation*}
Applying this observation to the binomial expansion
\begin{equation*}
1
 =
 \{q+(1-q)\}^{v}
 =
 \sum_{j=0}^{v}
 \binom{v}{j}q^j(1-q)^{v-j},    
\end{equation*}
we obtain
\begin{align*}
 1\geq
 \sum_{j=0}^{\lfloor qv\rfloor}
 \binom{v}{j}q^j(1-q)^{v-j}\geq
 \exp\{-v h(q)\}
 \sum_{j=0}^{\lfloor qv\rfloor}\binom{v}{j}.
\end{align*}
Setting \(q=1/4\) and rearranging gives the standard entropy bound
\begin{align}\label{eq:hamming-ball-entropy}
 B_v
 =
 \sum_{j=0}^{\lfloor v/4\rfloor}\binom{v}{j}
 \leq
 \exp\{v h(1/4)\}.
\end{align}

Now let \(\mathcal V\subseteq\calD\) be any \(1/4\)-Hamming cover of
\(\calD\) on the sample \(x_{1:v}\). By
\eqref{eq:shattered-cube}, every vector
\(a\in\{-1,1\}^{v}\) corresponds to some rule \(d\in\calD\) such that $\rho_{x_{1:v}}(d)=a$.
Because \(\mathcal V\) is a cover, there must be some
\(\widetilde d\in\mathcal V\) such that 
\begin{equation*}
H_v\bigl(
 \rho_{x_{1:v}}(d),
 \rho_{x_{1:v}}(\widetilde d)
 \bigr)
 \leq 1/4.    
\end{equation*}
Hence, the union of the Hamming balls of radius \(1/4\) centered at \(\rho_{x_{1:v}}(\widetilde d)\), \(\widetilde d\in\mathcal V\), covers the entire cube \(\{-1,1\}^{v}\). Recall that each such ball contains at most \(B_v\) vertices. Even though the
balls may overlap, their union therefore contains at most
\(|\mathcal V|B_v\) vertices. Since this union must contain all
\(2^v\) vertices of the cube, $2^v
 \leq
 |\mathcal V|B_v$. Because this inequality holds for every \(1/4\)-cover
\(\mathcal V\), it follows that
\begin{align*}
 N_H(1/4,\calD;x_{1:v})
 \geq
 \frac{2^v}{B_v}.
\end{align*}
The uniform covering number is the supremum over all finite samples,
so
\begin{align*}
 N_H(1/4,\calD)\geq
 N_H(1/4,\calD;x_{1:v})\geq
 \frac{2^v}{B_v}\geq
 \exp\left[
 v\{\log 2-h(1/4)\}
 \right],
\end{align*}
where the last inequality follows from
\eqref{eq:hamming-ball-entropy}. Taking logarithms yields
\begin{align*}
 v
 \leq
 \frac{\log N_H(1/4,\calD)}
 {\log 2-h(1/4)}.
\end{align*}
Notice that the denominator is strictly positive: $ \log 2-h(1/4)
 =
 3/4\log 3-\log 2
 >0$. Since the displayed inequality holds for every integer \(v\) such
that some set of \(v\) points is shattered by \(\calD\), we conclude
that
\begin{align}\label{eq:vc-from-hamming-cover}
 \operatorname{VC}(\calD)
 \leq
 C_H\log N_H(1/4,\calD),
\end{align}
where $C_H:=
 \{\log 2-h(1/4)\}^{-1}$. The particular radius \(1/4\) is chosen only for convenience. The
same argument works with any fixed radius \(q<1/2\), with
\(\{\log 2-h(q)\}^{-1}\) replacing \(C_H\).

Combining \eqref{eq:zhou-tree-covering} at
\(\varepsilon=1/4\) with \eqref{eq:vc-from-hamming-cover} gives
\begin{align*}
 \operatorname{VC}(\calD_{L,p}^{\mathrm{tree}})
 \leq C_H\Bigl[(2^{L+1}-1)\log p
 +2^{L+1}\log 2+
 (2^{L+1}-1)\log\{4(L+1)+1\}
 \Bigr].
\end{align*}
Since \(2^{L+1}-1\leq2^{L+1}\) and
\(4(L+1)+1\leq4(L+2)\),
\begin{align*}
 \operatorname{VC}(\calD_{L,p}^{\mathrm{tree}})\leq
 C_H2^{L+1}
 \left[
 \log p+\log2+\log\{4(L+2)\}
 \right]=
 2C_H2^L\log\{8p(L+2)\}.
\end{align*}
Finally, \(2p(L+2)\geq4\) for \(p\geq1\) and \(L\geq0\), so
\begin{equation*}
\log\{8p(L+2)\}
 =
 \log\!\left[4\{2p(L+2)\}\right]
 \leq
 2\log\{2p(L+2)\}.    
\end{equation*}
Thus, with $C_{\mathrm{tree}}:=4C_H$, $\operatorname{VC}(\calD_{L,p}^{\mathrm{tree}})
 \leq
 C_{\mathrm{tree}}2^L\log\{2p(L+2)\}$, which proves M.\eqref{eq:tree-vc-bound}. Substitution into
Theorem~\ref{thm:vc-sieve-regret} gives
M.\eqref{eq:tree-sieve-regret}.
\end{proof}

\subsection{Proof of Theorem \ref{thm:surrogate-oracle}}
\begin{proof}
We condition throughout on the nuisance training split $\mathcal F_1$, so that $\widehat\eta$ and any $\mathcal F_1$-measurable score sieve $\calG_n$ are fixed. Write $Z_0=Z_{\eta_0}$ and $S_0=\operatorname{sign}(Z_0)$. 

We first prove the population calibration inequality below:
\begin{align}\label{eq:weighted-surrogate-calibration}
 \psi_\phi\left\{
 \frac{\operatorname{Reg}_{\omega}(d_g)}{\bar c_0}
 \right\}
 \leq
 \frac{\calL_\phi(g;\eta_0)-\calL_\phi^\ast}{\bar c_0}.
\end{align}
This inequality holds when $\bar c_0=\bE(|Z_0|)>0$. 
Consider the following change of measure: define a probability measure $Q$ on the observed data space by
\begin{align}\label{eq:surrogate-tilted-measure}
 \frac{dQ}{dP}(\mO)=\frac{|Z_0|}{\bar c_0}.
\end{align}
Under $Q$, with $\bX$ as the feature and $S_0\in\{-1,1\}$ as the class label, the weighted $0$--$1$ classification problem under $P$, $\calL(d_g;\eta_0)=\bE\lbb |Z_0|\mone\{d_g(\bX)\neq S_0\}\rbb$, reduces to the corresponding unweighted problem. 

For a measurable score $g$, define its misclassification probability or $0$--$1$ classification risk and population surrogate risk under $Q$ by 
\begin{equation*}
R_Q(g)=Q\{d_g(\bX)\neq S_0\}=\bE_{Q}[\mone \{d_g(\bX)\neq S_0\}]    
\end{equation*}
and $R_{\phi,Q}(g)=\bE_Q\lbb\phi\{S_0g(\bX)\}\rbb$, respectively. Let 
\begin{equation*}
  R_Q^\ast=\inf_{g:\calX\to\mathbb R}R_Q(g)\qquad  \text{and}\qquad R_{\phi,Q}^\ast=\inf_{g:\calX\to\mathbb R}R_{\phi,Q}(g).
\end{equation*}
By \eqref{eq:surrogate-tilted-measure} and the identities $\ell_\phi(g;\eta)
 =|Z_\eta|\phi\{S_\eta g(\bX)\}$ and $\calL_\phi(g;\eta)=\bE\{\ell_\phi(g;\eta)\}$, it follows that 
\begin{align}\label{eq:tilted-surrogate-risk}
 R_{\phi,Q}(g)=\frac{\calL_\phi(g;\eta_0)}{\bar c_0},
 \qquad
 R_{\phi,Q}^\ast=\frac{\calL_\phi^\ast}{\bar c_0}.
\end{align}

To identify the Bayes classifier under $Q$, let $w_+(\bX)=\bE\{(Z_0)_+\mid\bX\}$ and $
 w_-(\bX)=\bE\{(-Z_0)_+\mid\bX\}$. Since $z_++(-z)_+=|z|$ and $z_+-(-z)_+=z$,
\begin{align}\label{eq:surrogate-w-identities}
 w_+(\bX)+w_-(\bX)=&c_0(\bX),\notag\\
 w_+(\bX)-w_-(\bX)
 =&\bE(Z_0\mid\bX)
 =h_0(\bX)
 =2\pi(1,\bX)\pi(-1,\bX)\tau(\bX),
\end{align}
where the last equality follows from the proof in Section
\ref{ss:proof-fisher-consistency}. On the event $\{c_0>0\}$, define $\zeta(\bX)=w_+(\bX)/c_0(\bX)$. Its value on $\{\bX:c_0(\bX)=0\}$ may be chosen arbitrarily, because
$dQ_{\bX}/dP_{\bX}=c_0/\bar c_0$ and hence
\begin{equation*}
Q_{\bX}\{c_0(\bX)=0\}
=
\bar c_0^{-1}\bE_P[
c_0(\bX)\mone\{c_0(\bX)=0\}]
=0,    
\end{equation*}
where $Q_{\bX}$ and $P_{\bX}$ are the corresponding $\bX$-marginal measures.

We then observe
\begin{align}\label{eq:tilted-label-probability}
Q(S_0=1\mid\bX)=&
\frac{
 \bE_P\!\lb
 |Z_0|\mone(S_0=1)\mid\bX
 \rb
}{
 \bE_P\!\lsb
 |Z_0|\mid\bX
 \rsb
}
=
\frac{
 \bE_P\{(Z_0)_+\mid\bX\}
}{
 c_0(\bX)
}
=
\frac{w_+(\bX)}{c_0(\bX)}
=
\zeta(\bX),\notag\\
 2\zeta(\bX)-1=&\frac{2w_{+}(\bX)-c_0(\bX)}{c_0(\bX)}
 =\frac{w_{+}(\bX)-w_{-}(\bX)}{c_0(\bX)}
 =\frac{h_0(\bX)}{c_0(\bX)}.
\end{align}
Weak positivity and \eqref{eq:surrogate-w-identities} imply that
\begin{equation*}
 \operatorname{sign}\{2\zeta(\bX)-1\}=\operatorname{sign}\{\tau(\bX)\}=d^\ast(\bX)   
\end{equation*}
on $\{\tau\neq0\}$. On $\{\tau=0\}$, either treatment label is Bayes optimal. Hence $d^\ast$ is a Bayes classifier under $Q$. For a fixed covariate value $\bX=\bx$, the conditional $0$--$1$ risk of a decision $d\in\{-1,1\}$ is
\begin{equation*}
Q\{d\neq S_0\mid \bX=\bx\}=\zeta(\bx)\mone(d=-1)+\{1-\zeta(\bx)\}\mone(d=1).    
\end{equation*}
Subtracting its minimum, 
\begin{equation*}
 Q\{\operatorname{sign}\{2\zeta(\bx)-1\}\neq S_0\mid \bX=\bx\}=\min\{\zeta(\bx),1-\zeta(\bx)\},   
\end{equation*}
gives the pointwise excess risk
\begin{equation*}
|2\zeta(\bx)-1|
 \mone\lbb d\neq\operatorname{sign}\{2\zeta(\bx)-1\}\rbb.    
\end{equation*}
Integrating this identity under $Q_{\bX}$ and using \eqref{eq:tilted-label-probability} yields
\begin{align}
 R_Q(g)-R_Q^\ast
 =&\bE_Q\lbb
 |2\zeta(\bX)-1|
 \mone\{d_g(\bX)\neq d^\ast(\bX)\}
 \rbb\notag\\
 =&\frac{1}{\bar c_0}
 \bE\lbb
 |h_0(\bX)|
 \mone\{d_g(\bX)\neq d^\ast(\bX)\}
 \rbb\notag\\
 =&\frac{\operatorname{Reg}_{\omega}(d_g)}{\bar c_0}.
 \label{eq:tilted-classification-regret}
\end{align}
The last equality follows from the definition of $\operatorname{Reg}_{\omega}$ in the main manuscript. Applying Theorem~1, part~1, of \citet{bartlett2006convexity} to the classification problem under $Q$ gives
\begin{align*}
 \psi_\phi\{R_Q(g)-R_Q^\ast\}
 \leq R_{\phi,Q}(g)-R_{\phi,Q}^\ast.
\end{align*}
Although \citet{bartlett2006convexity} define $\operatorname{sign}(0)=0$, their pointwise argument remains valid under $\operatorname{sign}(0)=1$. Indeed, when $2\zeta-1>0$, an incorrectly classified score satisfies $g<0$, and continuity of the convex loss makes the infimum over $g<0$ equal to the infimum over $g\leq0$ used in their Definition~2; when $2\zeta-1<0$, an error satisfies $g\geq0$ directly. Combining \eqref{eq:tilted-surrogate-risk} and \eqref{eq:tilted-classification-regret} proves \eqref{eq:weighted-surrogate-calibration}.

If $g_\phi^\ast$ attains $\mathcal{L}^\ast_{\phi}$, substituting $g=g_\phi^\ast$ into the right-hand side of \eqref{eq:weighted-surrogate-calibration} yields zero. By Lemma~2, part~9, of \citet{bartlett2006convexity}, classification calibration implies $\psi_\phi(t)>0$ for every $t\in(0,1]$, and therefore $\operatorname{Reg}_{\omega}(d_{g_\phi^\ast})=0$. Define the following event as the intersection of $\{d_{g_\phi^\ast}(\bX)\neq d^\ast(\bX)\}$ and $\{\tau(\bX)\neq0\}$:
\[
B
=
\left\{
d_{g_\phi^\ast}(\bX)\neq d^\ast(\bX),
\ \tau(\bX)\neq0
\right\}.
\]
Weak positivity implies $|h_0(\bX)|>0$ on $\{\tau(\bX)\neq0\}$. Therefore,
\begin{align*}
0
=&
\operatorname{Reg}_{\omega}(d_{g_\phi^\ast})\\
=&
\bE\!\left[
|h_0(\bX)|
\mone\{d_{g_\phi^\ast}(\bX)\neq d^\ast(\bX)\}
\right]\\
&\geq
\bE\!\left[
|h_0(\bX)|\mone(B)
\right],
\end{align*}
where the last inequality follows from $\mone\{d_{g_\phi^\ast}(\bX)\neq d^\ast(\bX)\}\geq \mone(B)$ as
\begin{equation*}
B=\{d_{g_\phi^\ast}(\bX)\neq d^\ast(\bX)\}\cap \{\tau(\bX)\neq 0\}\subseteq \{d_{g_\phi^\ast}(\bX)\neq d^\ast(\bX)\}.    
\end{equation*}
The last integrand is nonnegative and is strictly positive on $B$.
A nonnegative random variable has expectation zero only if it is zero
almost surely. It follows that
\[
P(B)
=
P\!\left\{
d_{g_\phi^\ast}(\bX)\neq d^\ast(\bX),
\ \tau(\bX)\neq0
\right\}
=0.
\]
Thus, the events $\{d_{g_\phi^\ast}(\bX)\neq d^\ast(\bX)\}$ and $\{\tau(\bX)\neq 0\}$ cannot hold simultaneously almost surely; in other words, it follows that 
\[
d_{g_\phi^\ast}(\bX)=d^\ast(\bX)
\qquad\text{almost surely on }
\{\tau(\bX)\neq0\}.
\]
On $\{\tau(\bX)=0\}$, the two treatments have the same conditional
mean outcome, so either treatment label is optimal. Consequently,
Fisher consistency holds on the event $\{\tau(\bX)\neq 0\}$.

We next establish the following bound on the surrogate excess risk of the empirical minimizer:
\begin{align}\label{eq:surrogate-value-regret}
 V(d^\ast)-V(d_{\widehat g_n})
 \leq\frac{\bar c_0}{2\epsilon_n(1-\epsilon_n)}
 \psi_\phi^{-1}\left(
 \frac{a_{n,\phi}+2\Delta_{n,\phi}(\calG_n;\widehat\eta)
 +\rho_n+\Gamma_{n,\phi}(\calG_n;\widehat\eta)}{\bar c_0}
 \right),
\end{align}
If $a_{n,\phi}=\infty$, \eqref{eq:surrogate-value-regret} is immediate. To see this, for every rule $d$, 
\begin{equation*}
 \operatorname{Reg}_{\omega}(d)
\leq
\bE|h_0(\bX)|
\leq
\bE\{\bE(|Z_{\eta_0}|\mid\bX)\}
=
\bar c_0.   
\end{equation*}
Together with $2\epsilon_n(1-\epsilon_n)
\{V(d^\ast)-V(d)\}
\leq
\operatorname{Reg}_{\omega}(d)$, this proves \eqref{eq:surrogate-value-regret}. 

Otherwise, fix $\varepsilon>0$ and choose $g_{n,\varepsilon}^\circ\in\calG_n$ such that
\begin{align}\label{eq:surrogate-epsilon-comparator}
 \calL_\phi(g_{n,\varepsilon}^\circ;\eta_0)-\calL_\phi^\ast
 \leq a_{n,\phi}+\varepsilon.
\end{align}
Recall 
\begin{align}
 \Gamma_{n,\phi}(\calG_n;\widehat\eta)
 &:=\sup_{g,g'\in\calG_n}
 \left|\{\calL_\phi(g;\eta_0)-\calL_\phi(g';\eta_0)\}
 -\{\calL_\phi(g;\widehat\eta)-\calL_\phi(g';\widehat\eta)\}\right|.
 \label{eq:surrogate-nuisance-sensitivity}
\end{align}
By \eqref{eq:surrogate-nuisance-sensitivity},
\begin{align}
 &\calL_\phi(\widehat g_n;\eta_0)
 -\calL_\phi(g_{n,\varepsilon}^\circ;\eta_0)\notag\\
 =&
 \calL_\phi(\widehat g_n;\widehat\eta)
 -\calL_\phi(g_{n,\varepsilon}^\circ;\widehat\eta)
 +[\calL_\phi(\widehat g_n;\eta_0)
 -\calL_\phi(g_{n,\varepsilon}^\circ;\eta_0)-\{\calL_\phi(\widehat g_n;\widehat\eta)
 -\calL_\phi(g_{n,\varepsilon}^\circ;\widehat\eta)\}]\notag\\
 \leq&
 \calL_\phi(\widehat g_n;\widehat\eta)
 -\calL_\phi(g_{n,\varepsilon}^\circ;\widehat\eta)
 +\Gamma_{n,\phi}(\calG_n;\widehat\eta).
 \label{eq:surrogate-change-nuisance}
\end{align}
For the first difference on the right-hand side, add and subtract the two empirical risks:
\begin{align*}
 &\calL_\phi(\widehat g_n;\widehat\eta)
 -\calL_\phi(g_{n,\varepsilon}^\circ;\widehat\eta)\\
 =&\{P-\mathbb P_2\}\ell_\phi(\widehat g_n;\widehat\eta)
 +\mathbb P_2\{\ell_\phi(\widehat g_n;\widehat\eta)
 -\ell_\phi(g_{n,\varepsilon}^\circ;\widehat\eta)\}+\{\mathbb P_2-P\}\ell_\phi(g_{n,\varepsilon}^\circ;\widehat\eta)\\
 \leq&2\Delta_{n,\phi}(\calG_n;\widehat\eta)+\rho_n,
\end{align*}
where the last line follows from M.\eqref{eq:surrogate-erm} and \eqref{eq:surrogate-nuisance-sensitivity} using arguments similar to those used in Section \ref{ss:proof-of-theorem{thm:sieve-regret}}. Combining this inequality with \eqref{eq:surrogate-epsilon-comparator} and \eqref{eq:surrogate-change-nuisance} gives
\begin{align}\label{eq:surrogate-excess-proof}
 \calL_\phi(\widehat g_n;\eta_0)-\calL_\phi^\ast
 \leq a_{n,\phi}+\varepsilon
 +2\Delta_{n,\phi}(\calG_n;\widehat\eta)+\rho_n
 +\Gamma_{n,\phi}(\calG_n;\widehat\eta).
\end{align}

Apply \eqref{eq:weighted-surrogate-calibration} with
$g=\widehat g_n$, and define $r_n
=
{\operatorname{Reg}_{\omega}
(d_{\widehat g_n})}/{\bar c_0}$ and 
\[
B_{n,\varepsilon}
=
\frac{
a_{n,\phi}+\varepsilon
+2\Delta_{n,\phi}(\calG_n;\widehat\eta)
+\rho_n
+\Gamma_{n,\phi}(\calG_n;\widehat\eta)
}{\bar c_0}.
\]
Since
\begin{equation*}
 0\leq
\operatorname{Reg}_{\omega}(d)
\leq
\bE|h_0(\bX)|
\leq
\bE\{\bE(|Z_{\eta_0}|\mid\bX)\}
=
\bar c_0,   
\end{equation*}
we have $r_n\in[0,1]$. Combining
\eqref{eq:weighted-surrogate-calibration} with
\eqref{eq:surrogate-excess-proof} gives $\psi_\phi(r_n)\leq B_{n,\varepsilon}$. By the definition $\psi_\phi^{-1}(s)
=
\sup\{u\in[0,1]:\psi_\phi(u)\leq s\}$, the preceding inequality implies
$r_n\leq\psi_\phi^{-1}(B_{n,\varepsilon})$. Therefore,
\[
\operatorname{Reg}_{\omega}(d_{\widehat g_n})
\leq
\bar c_0\psi_\phi^{-1}\left(
\frac{
a_{n,\phi}+\varepsilon
+2\Delta_{n,\phi}(\calG_n;\widehat\eta)
+\rho_n
+\Gamma_{n,\phi}(\calG_n;\widehat\eta)
}{\bar c_0}
\right).
\]
Finally, \eqref{eq:surrogate-value-regret} follows by letting $\varepsilon\downarrow0$ and using $ 2\epsilon_n(1-\epsilon_n)\{V(d^\ast)-V(d)\}
 \leq\operatorname{Reg}_{\omega}(d)$.

If $\bar c_0=\bE(|Z_{\eta_0}|)=0$, nonnegativity implies $Z_0=0$ almost surely, and hence $h_0(\bX)=\bE(Z_0\mid\bX)=0$ almost surely. By \eqref{eq:surrogate-w-identities} and weak positivity, $\tau(\bX)=0$ almost surely. The value regret is therefore zero for every rule.
\end{proof}

\subsection{Proof of the nuisance sensitivity bound M.\eqref{eq:generic-surrogate-nuisance-bound}}\label{ss:proof-nuisance-surrogate-bound}
\begin{proof}
Condition on $\mathcal F_1$ throughout. For $g\in\calG_n$, $z\in\mathbb R$, and $\bx\in\calX$, define
\[
r_g(z,\bx)
=
z_+\phi\{g(\bx)\}
+
(-z)_+\phi\{-g(\bx)\}.
\]
Equivalently,
\[
r_g(z,\bx)
=
\begin{cases}
z\phi\{g(\bx)\}, & z>0,\\
0, & z=0,\\
(-z)\phi\{-g(\bx)\}, & z<0.
\end{cases}
\]
Consequently, $r_g(z,\bx)
=
|z|\phi\{\operatorname{sign}(z)g(\bx)\}$. The identity at $z=0$ is valid irrespective of the convention for
$\operatorname{sign}(0)$, because both sides are zero. Therefore,
$$r_g(Z_\eta,\bX)
=
|Z_\eta|
\phi\{\operatorname{sign}(Z_\eta)g(\bX)\}
=
\ell_\phi(g;\eta)$$
holds almost surely.

For $g,g'\in\calG_n$, define $q_{g,g'}(z,\bx)
=
r_g(z,\bx)-r_{g'}(z,\bx)$. For $z>0$, 
\begin{equation*}
q_{g,g'}(z,\bx)
=
z[
\phi\{g(\bx)\}-\phi\{g'(\bx)\}
],    
\end{equation*}
whereas for $z<0$,
\begin{equation*}
q_{g,g'}(z,\bx)
=
(-z)[
\phi\{-g(\bx)\}-\phi\{-g'(\bx)\}
].    
\end{equation*}
Thus, the slopes of $q_{g,g'}(\cdot,\bx)$ on the positive and
negative half-lines are, respectively, $\phi\{g(\bx)\}-\phi\{g'(\bx)\}$ and $\phi\{-g'(\bx)\}-\phi\{-g(\bx)\}$. Because $|g|,|g'|\leq B_n$ and
$0\leq\phi(u)\leq K_{\phi,n}$ for $|u|\leq B_n$, both slopes
have absolute value at most $K_{\phi,n}$. Moreover,
\begin{equation*}
q_{g,g'}(0,\bx)=\lim_{z\to0^{+}}q_{g,g'}(z,\bx)=\lim_{z\to0^{-}}q_{g,g'}(z,\bx)=0,    
\end{equation*}
so $q_{g,g'}(\cdot,\bx)$ is continuous
at zero. It follows that, for all $z,z'\in\mathbb R$, 
\begin{equation*}
|q_{g,g'}(z,\bx)-q_{g,g'}(z',\bx)|
\leq
K_{\phi,n}|z-z'|.    
\end{equation*}
To see this, if $z$ and $z'$ have the same sign, the Lipschitz property follows immediately. If they have
opposite signs, then $q_{g,g'}(0,\bx)=0$ and
\begin{align*}
|q_{g,g'}(z,\bx)-q_{g,g'}(z',\bx)|
\leq&
|q_{g,g'}(z,\bx)-q_{g,g'}(0,\bx)|
+
|q_{g,g'}(0,\bx)-q_{g,g'}(z',\bx)|\\
\leq&
K_{\phi,n}(|z|+|z'|)\\
=&
K_{\phi,n}|z-z'|,
\end{align*}
where the last equality follows because $zz'\leq0$. Hence, we obtain
\begin{align*}
&\left|
\{\calL_\phi(g;\eta_0)-\calL_\phi(g';\eta_0)\}
-
\{\calL_\phi(g;\widehat\eta)
  -\calL_\phi(g';\widehat\eta)\}
\right|\\
=&
\left|
\bE\lb
q_{g,g'}(Z_{\eta_0},\bX)
-
q_{g,g'}(Z_{\widehat\eta},\bX)
\rb
\right|\\
\leq&
K_{\phi,n}\bE(|Z_{\widehat\eta}-Z_{\eta_0}|).
\end{align*}
Taking the supremum over $g,g'\in\calG_n$ gives $\Gamma_{n,\phi}(\calG_n;\widehat\eta)
\leq
K_{\phi,n}\bE(|Z_{\widehat\eta}-Z_{\eta_0}|)$.

Let $\delta_e=\widehat e-e_0$ and $\delta_m=\widehat m-m_0$. Expanding the two residual products gives the exact identity
\begin{align}
 Z_{\widehat\eta}-Z_{\eta_0}
 =-\{Y-m_0(\bX)\}\delta_e(\bX)
 -\{A-e_0(\bX)\}\delta_m(\bX)
 +\delta_e(\bX)\delta_m(\bX).\notag
\end{align}
By the triangle inequality and Cauchy--Schwarz,
\begin{align}
 \bE(|Z_{\widehat\eta}-Z_{\eta_0}|)
 &\leq
 \|Y-m_0(\bX)\|_2\|\delta_e\|_2
 +\|A-e_0(\bX)\|_2\|\delta_m\|_2
 +\|\delta_e\|_2\|\delta_m\|_2.
 \label{eq:surrogate-z-l2-bound}
\end{align}
Since $A\in\{-1,1\}$ and $e_0(\bX)=\bE(A\mid\bX)$, $ \|A-e_0(\bX)\|_2^2
 =\bE\{1-e_0(\bX)^2\}\leq1$.
Substituting this bound into \eqref{eq:surrogate-z-l2-bound}, and then substituting the result into $\Gamma_{n,\phi}(\calG_n;\widehat\eta)
\leq
K_{\phi,n}\bE(|Z_{\widehat\eta}-Z_{\eta_0}|)$, proves M.\eqref{eq:generic-surrogate-nuisance-bound}.
\end{proof}

\subsection{Proof of Theorem \ref{thm:bounded-hinge-sieve}}
\begin{proof}
The proof follows arguments similar to those used in Section \ref{ss:proof-of-theorem{thm:sieve-regret}}. For $g,g'\in\calG_n$, it follows that $\calL_H(g;\eta)-\calL_H(g';\eta)
 =-P\lbb Z_\eta\{g(\bX)-g'(\bX)\}\rbb$, which is exactly the difference of the linear risks in \eqref{eq:hinge-affine-criterion}. We first verify orthogonality for this linear criterion. For an ambient target direction $v$,
\begin{align*}
 D_g\calL_H^{\mathrm{linear}}(\bar g;\eta)[v]
 =-P\{Z_\eta v(\bX)\}.
\end{align*}
Differentiating in a nuisance direction $(\dot e,\dot m)$ at $\eta_0$ gives
\begin{align*}
 D_\eta D_g\calL_H^{\mathrm{linear}}(\bar g;\eta_0)
 \lbb v,(\dot e,\dot m)\rbb=P\left[v(\bX)\{\dot e(\bX)(Y-m_0(\bX))
 +(A-e_0(\bX))\dot m(\bX)\}\right]=0,
\end{align*}
where the final equality follows from $\bE(Y-m_0\mid\bX)=0$ and $\bE(A-e_0\mid\bX)=0$. This proves the universal Neyman orthogonality. 

For $g_{n}^\circ\in\calG_n$, it follows that
\begin{align}\label{eq:hinge-epsilon-comparator}
 \calL_H(g_{n}^\circ;\eta_0)
 -\calL^\ast_{H}(\eta_0)
 \leq a_n^H.
\end{align}
Because $\mathbb P_2|Z_{\widehat\eta}|$ is common to all scores, approximate empirical hinge minimization is equivalent to approximate minimization of $\mathbb P_2\ell_H^{\mathrm{linear}}(g;\widehat\eta)$. Adding and subtracting the two population linear risks therefore yields 
\begin{equation}\label{eq:bound-approximation-error-maximal-fluctuation}
    \calL_H^{\mathrm{linear}}(\widehat g_n;\widehat\eta)
 -\calL_H^{\mathrm{linear}}(g_{n}^\circ;\widehat\eta)\leq
 2\Delta_{n,H}^{\mathrm{linear}}(\calG_n;\widehat\eta)+\rho_n.
\end{equation}

Let $\delta_e=\widehat e-e_0$ and $\delta_m=\widehat m-m_0$. Direct expansion gives
\begin{align}
 Z_{\widehat\eta}-Z_{\eta_0}
 =-\{Y-m_0(\bX)\}\delta_e(\bX)
 -\{A-e_0(\bX)\}\delta_m(\bX)
 +\delta_e(\bX)\delta_m(\bX),\notag
\end{align}
so that $ \bE(Z_{\widehat\eta}-Z_{\eta_0}\mid\bX)
 =\delta_e(\bX)\delta_m(\bX)$. Consequently,
\begin{align}\label{eq:hinge-exact-nuisance-proof}
 \{\calL_H(\widehat{g}_n;\eta_0)-\calL_H(g_n^\circ;\eta_0)\}
 -\{\calL_H(\widehat{g}_n;\widehat\eta)-\calL_H(g_n^\circ;\widehat\eta)\}=&P\left[\delta_e(\bX)\delta_m(\bX)
 \{\widehat{g}_n(\bX)-g_n^\circ(\bX)\}\right]\notag\\
 \leq&2\|\delta_e\|_2\|\delta_m\|_2,
\end{align}
where the inequality follows from the fact that both scores are bounded by one and from the Cauchy--Schwarz inequality. Combining \eqref{eq:hinge-epsilon-comparator}--\eqref{eq:hinge-exact-nuisance-proof} gives
\begin{align}\label{eq:hinge-excess-proof}
 \calL_H(\widehat g_n;\eta_0)-\calL^\ast_{H}(\eta_0)
 \leq a_n^H
 +2\Delta_{n,H}^{\mathrm{linear}}(\calG_n;\widehat\eta)+\rho_n
 +2\|\delta_e\|_2\|\delta_m\|_2.
\end{align}

Finally, let $g_H^\ast(\bX)=\operatorname{sign}\{h_0(\bX)\}$. Under weak positivity, we obtain $g_H^\ast(\bX)=d^\ast$. Moreover, for the bounded hinge loss,
$g_H^\ast$ attains the population minimum, so that $\calL_H^\ast(\eta_0)
=
\calL_H(g_H^\ast;\eta_0)$. To see this, for any measurable score $g$, let
$\bar g=(-1)\vee(g\wedge1)$. Pointwise, $\{1-S_0\bar g(\bX)\}_+
\leq
\{1-S_0g(\bX)\}_+$, so clipping to $[-1,1]$ cannot increase the hinge risk. It therefore
suffices to minimize over scores satisfying $|g|\leq1$. For every such
score,
\begin{align*}
\calL_H(g;\eta_0)
=&
P\!\left[|Z_0|\{1-S_0g(\bX)\}\right]\\
=&
P|Z_0|-P\{h_0(\bX)g(\bX)\}\\
&\geq
P|Z_0|-P|h_0(\bX)|,
\end{align*}
where the inequality follows from
$h_0(\bX)g(\bX)\leq|h_0(\bX)|$ as $|g|\leq1$. Equality holds for the measurable score $g_H^\ast(\bX)=\operatorname{sign}\{h_0(\bX)\}$, because $h_0(\bX)g_H^\ast(\bX)=|h_0(\bX)|$. Hence $\calL_H^\ast(\eta_0)
=
\calL_H(g_H^\ast;\eta_0)$. Therefore, for every $g\in[-1,1]^{\calX}$,
\begin{align*}
\calL_H(g;\eta_0)-\calL_H^\ast(\eta_0)
=&
P\left[
|h_0(\bX)|
-h_0(\bX)g(\bX)
\right]\\
=&
P\left[
|h_0(\bX)|
-|h_0(\bX)|\text{sign}\{h_0(\bX)\}g(\bX)
\right]\\
=&
P\left[
|h_0(\bX)|
-|h_0(\bX)|g_H^\ast(\bX)g(\bX)
\right]\\
=&
P\left[
|h_0(\bX)|
\{1-g_H^\ast(\bX)g(\bX)\}
\right]\\
=&
P\left[
|h_0(\bX)|
\{1-d^\ast(\bX)g(\bX)\}
\right]\\
\geq&
P\left[
|h_0(\bX)|
\mone\{\operatorname{sign}(g(\bX))\neq d^\ast(\bX)\}
\right]\\
\geq&
2\epsilon_n(1-\epsilon_n)
\left[
V(d^\ast)-V\{\operatorname{sign}(g)\}
\right].
\end{align*}
The conclusion follows by applying the preceding result with $g=\widehat g_n$ in \eqref{eq:hinge-excess-proof}.
\end{proof}

\subsection{Proof of the equivalence between penalized ERM and unpenalized ERM over an RKHS ball}\label{ss:proof-penalized-equivalence}
\begin{proof}
We show that the penalized empirical risk minimization (ERM) problem over the RKHS,
\[
\widehat g_{\lambda_n}
\in\arg\min_{g\in\calH_{K_n}}
\left\{\mathbb P_2\ell_\phi(g;\widehat\eta)
+\lambda_n\|g\|_{\calH_{K_n}}^2\right\},
\]
is equivalent to unpenalized ERM over the radius-$R_n$ RKHS ball
\[
\calG_n^{\mathrm{svm}}(R_n)
=\{g\in\calH_{K_n}:\|g\|_{\calH_{K_n}}\leq R_n\}.
\]
Indeed, write
$Q_n(g)=\mathbb P_2\ell_\phi(g;\widehat\eta)$ and let
$R_{n,\lambda}=\|\widehat g_{\lambda_n}\|_{\calH_{K_n}}$.
For every $g$ with
$\|g\|_{\calH_{K_n}}\leq R_{n,\lambda}$,
penalized optimality gives
\[
 Q_n(\widehat g_{\lambda_n})
 +\lambda_nR_{n,\lambda}^2
 \leq
 Q_n(g)+\lambda_n\|g\|_{\calH_{K_n}}^2
 \leq
 Q_n(g)+\lambda_nR_{n,\lambda}^2.
\]
Therefore
$Q_n(\widehat g_{\lambda_n})\leq Q_n(g)$, proving that
$\widehat g_{\lambda_n}$ minimizes $Q_n$ over the matched RKHS
ball.

Conversely, fix $R_n>0$ and let $\widehat g_{R_n}\in
\arg\min_{g\in\calG_n^{\mathrm{svm}}(R_n)}
Q_n(g)$. Then we can write 
\[Q_n(g)
=
\frac{1}{n_2}\sum_{i\in\mathcal F_2}
|Z_{\widehat\eta,i}|\phi\{\operatorname{sign}(Z_{\widehat\eta,i}) g(\bX_i)\}.\]
Because $g\mapsto \operatorname{sign}(Z_{\widehat\eta,i}) g(\bX_i)$ is linear, convexity of
$\phi$ and nonnegativity of $|Z_{\widehat\eta,i}|$ imply that $Q_n$ is convex. Moreover, for
$g,h\in\calG_n^{\mathrm{svm}}(R_n)$, we have
$|g(\bX_i)|\vee|h(\bX_i)|\leq \kappa_nR_n$. If $\phi$ is
$L_{\phi,n}$-Lipschitz on $[-\kappa_nR_n,\kappa_nR_n]$, then the reproducing
property and Cauchy--Schwarz give
\begin{align*}
|Q_n(g)-Q_n(h)|
\leq &\frac{L_{\phi,n}}{n_2}
 \sum_{i\in\mathcal F_2}|Z_{\widehat\eta,i}|
 |g(\bX_i)-h(\bX_i)| \\
\leq &\frac{L_{\phi,n}}{n_2}
 \sum_{i\in\mathcal F_2}|Z_{\widehat\eta,i}|
 K_n(\bX_i,\bX_i)^{1/2}
 \|g-h\|_{\calH_{K_n}} \\
\leq &4M L_{\phi,n}\kappa_n
 \|g-h\|_{\calH_{K_n}},
\end{align*}
where the last inequality follows from $K_n(\bX_i,\bX_i)\leq \kappa_n^2$ and \eqref{eq:bound-Zeta-4M}. Hence $Q_n$ is Lipschitz continuous, and therefore continuous, on
$\calG_n^{\mathrm{svm}}(R_n)$.

Let
$c_{n,R}(g)=\|g\|_{\calH_{K_n}}^2-R_n^2$.
By \citet[Chapter~8, Section~8.3, Theorem~1]
{luenberger1969optimization}, applied with
$X=\calH_{K_n}$, $Z=\mathbb R$, and $P=\mathbb R_+$,
strict feasibility $c_{n,R}(0)=-R_n^2<0$ implies that there
exists $\lambda_{n,R}\geq0$ such that
\[
\widehat g_{R_n}
\in
\arg\min_{g\in\calH_{K_n}}
\left\{
Q_n(g)+\lambda_{n,R}\lsb \|g\|_{\calH_{K_n}}^2-R_n^2\rsb
\right\}
\]
and
\begin{equation}\label{eq:KKT-condition}
    \lambda_{n,R}
\left\{
\|\widehat g_{R_n}\|_{\calH_{K_n}}^2-R_n^2
\right\}=0.
\end{equation}
Moreover, Hilbert space Fermat and subdifferential sum rules
\citep[Theorem~16.3 and Corollary~16.48(iii)]
{bauschke2017convex} yield
\[0\in
\partial Q_n(\widehat g_{R_n})
+
2\lambda_{n,R}\widehat g_{R_n}.\]
If
$\|\widehat g_{R_n}\|_{\calH_{K_n}}<R_n$, \eqref{eq:KKT-condition} implies $\lambda_{n,R}=0$. Conversely, if
$\lambda_{n,R}=0$, then $\widehat g_{R_n}$ is an unconstrained
minimizer of $Q_n$; hence, if it is not an unconstrained minimizer,
then $\lambda_{n,R}>0$.
\end{proof}

\subsection{Proof of Corollary \ref{cor:bounded-svm}}\label{ss:svm-technical-supp}

For completeness, we first state the empirical process bound needed for general calibrated SVM losses.

\begin{lemma}[\emph{Hilbert score surrogate complexity}]\label{lem:svm-surrogate-complexity}
Assume the conditions of Theorem \ref{thm:surrogate-oracle}, $|Y|\leq M$,
$\|\widehat m\|_\infty\leq M$, and $\|\widehat e\|_\infty\leq1$. Let
$B_n=\kappa_nR_n$, suppose that $\phi$ is $L_{\phi,n}$-Lipschitz on
$[-B_n,B_n]$, and set $K_{\phi,n}=\sup_{|u|\leq B_n}\phi(u)$. Conditional on the nuisance training split, with probability at least $1-\delta$,
\begin{align}\label{eq:svm-surrogate-empirical-bound}
 \Delta_{n,\phi}(\calG_n^{\mathrm{svm}}(R_n);\widehat\eta)
 \leq C M\left\{
 \frac{L_{\phi,n}\kappa_nR_n}{\sqrt{n_2}}
 +K_{\phi,n}\sqrt{\frac{\log(2/\delta)}{n_2}}
 \right\}.
\end{align}
Consequently, Theorem \ref{thm:surrogate-oracle} applies with
\eqref{eq:svm-surrogate-empirical-bound} and the nuisance bound
M.\eqref{eq:generic-surrogate-nuisance-bound}.
\end{lemma}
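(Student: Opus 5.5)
The plan is to control $\Delta_{n,\phi}(\calG_n^{\mathrm{svm}}(R_n);\widehat\eta)=\sup_{g}|(\mathbb P_2-P)\ell_\phi(g;\widehat\eta)|$ by the same symmetrization-plus-McDiarmid argument used in the proof of Theorem~\ref{thm:vc-sieve-regret}. The chaining step is replaced by a contraction argument and the standard RKHS-ball Rademacher bound. Throughout we condition on $\mathcal F_1$, so $\widehat\eta$ and the kernel $K_n$ are fixed, and we use the envelope $|Z_{\widehat\eta}|\leq4M$ from \eqref{eq:bound-Zeta-4M}.

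\textbf{Step 1 (bounded differences).} For $g\in\calG_n^{\mathrm{svm}}(R_n)$ we have $\|g\|_\infty\leq\kappa_nR_n=B_n$. Hence $0\leq\ell_\phi(g;\widehat\eta)=|Z_{\widehat\eta}|\phi\{S_{\widehat\eta}g(\bX)\}\leq4MK_{\phi,n}$. Replacing one observation in $\mathcal F_2$ therefore changes $\Delta_{n,\phi}$ by at most $4MK_{\phi,n}/n_2$. The conditional McDiarmid inequality then gives, with probability at least $1-\delta$,
\[
\Delta_{n,\phi}\leq\bE(\Delta_{n,\phi}\mid\mathcal F_1)+CMK_{\phi,n}\sqrt{\log(2/\delta)/n_2}.
\]

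\textbf{Step 2 (symmetrization and contraction).} Symmetrization bounds $\bE(\Delta_{n,\phi}\mid\mathcal F_1)$ by twice the conditional Rademacher average of $\{\ell_\phi(g;\widehat\eta)\}$. To strip the loss, write
\[
\ell_\phi(g;\widehat\eta)_i=\varphi_i\{g(\bX_i)\},\qquad \varphi_i(u):=|Z_{\widehat\eta,i}|\,\phi(S_{\widehat\eta,i}u).
\]
Each $\varphi_i$ is $4ML_{\phi,n}$-Lipschitz on $[-B_n,B_n]$, since $|S_{\widehat\eta,i}|\leq1$. First center by subtracting $\varphi_i(0)$; the subtracted term does not depend on $g$ and therefore contributes nothing to the supremum of the centered Rademacher sum beyond absolute-value bookkeeping. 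The Ledoux--Talagrand contraction inequality then reduces the problem to $4ML_{\phi,n}\,\bE_\xi\sup_{\|g\|_{\calH_{K_n}}\leq R_n}|n_2^{-1}\sum_i\xi_ig(\bX_i)|$, with an absolute-value constant of at most $2$.

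\textbf{Step 3 (RKHS ball).} By the reproducing property and Cauchy--Schwarz,
\[
\sup_{\|g\|\leq R_n}\Bigl|\sum_i\xi_ig(\bX_i)\Bigr|=R_n\Bigl\|\sum_i\xi_iK_n(\bX_i,\cdot)\Bigr\|_{\calH_{K_n}}.
\]
Jensen's inequality bounds the expectation of this quantity by $R_n\{\sum_iK_n(\bX_i,\bX_i)\}^{1/2}\leq R_n\kappa_n\sqrt{n_2}$. Combining Steps 1--3 yields \eqref{eq:svm-surrogate-empirical-bound}. The final sentence of the lemma is then direct substitution into Theorem~\ref{thm:surrogate-oracle} together with M.\eqref{eq:generic-surrogate-nuisance-bound}, using $B_n=\kappa_nR_n$.

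The main obstacle is Step 2. The contraction must be applied with data-dependent but fixed (given the sample) functions $\varphi_i$ that vanish at $0$ only after centering. The absolute value inside $\Delta_{n,\phi}$ also needs care, which we handle either by the sign-enlargement trick of \eqref{eq:symmetrization-centered} or by the factor-2 version of Ledoux--Talagrand.
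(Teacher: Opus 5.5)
Your proposal is correct and follows essentially the same route as the paper: McDiarmid with bounded differences $4MK_{\phi,n}/n_2$, conditional symmetrization, centering $\varphi_i$ at $\phi(0)$ so that the Ledoux--Talagrand contraction applies, and the Cauchy--Schwarz/Jensen bound $R_n\kappa_n\sqrt{n_2}$ for the RKHS ball. Two details differ only cosmetically from the paper, which extends the maps to contractions on $\mathbb R$ by clipping to $[-B_n,B_n]$ and contracts onto $\sum_i\xi_iH_ig(\bX_i)$ rather than $\sum_i\xi_ig(\bX_i)$. The one point to state more precisely is that the $g$-free term $n_2^{-1}\sum_i\xi_i|Z_{\widehat\eta,i}|\phi(0)$ does not vanish. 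Its conditional expectation is at most $4MK_{\phi,n}/\sqrt{n_2}$, and the paper absorbs it into $CMK_{\phi,n}\sqrt{\log(2/\delta)/n_2}$ using $\phi(0)\leq K_{\phi,n}$ and $\log(2/\delta)\geq\log 2$.
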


\begin{proof}[Proof of Lemma \ref{lem:svm-surrogate-complexity} and Corollary \ref{cor:bounded-svm}]
Condition on the nuisance training split and write
$W_i=|Z_{\widehat\eta,i}|\leq4M$ by \eqref{eq:bound-Zeta-4M} and $H_i=\operatorname{sign}(Z_{\widehat\eta,i})$. For $g\in\calG_n^{\mathrm{svm}}(R_n)$, the reproducing property gives
\begin{align*}
 |g(x)|=\langle g,K_n(x,\cdot)\rangle_{\calH_{K_n}}\leq\|g\|_{\calH_{K_n}}\|K_n(x,\cdot)\|_{\calH_{K_n}}\leq\|g\|_{\calH_{K_n}}K_n(x,x)^{1/2}
 \leq\kappa_nR_n=B_n,
\end{align*}
where the first inequality follows from the Cauchy-Schwarz inequality in the RKHS and the second inequality follows from the reproducing property applied to the kernel section with
\[\|K_n(x,\cdot)\|_{\calH_{K_n}}=\sqrt{\langle K_n(x,\cdot),K_n(x,\cdot)\rangle_{\calH_{K_n}}}=\sqrt{K_n(x,x)}.\]
Define the centered coordinate maps
\begin{align*}
 \varphi_i(t)=W_i\lb\phi(H_it)-\phi(0)\rb.
\end{align*}
They (i) vanish at zero with $\varphi_i(0)=0$ for all $i$; and (ii) are $4ML_{\phi,n}$-Lipschitz on $[-B_n,B_n]$. To see (ii), for any $s,t\in[-B_n,B_n]$, since $|H_i|\leq1$,
$H_is,H_it\in[-B_n,B_n]$. Hence, using the
$L_{\phi,n}$-Lipschitz continuity of $\phi$ on this interval and
$W_i\leq4M$,
\begin{align*}
    |\varphi_i(t)-\varphi_i(s)|
=&
W_i\bigl|\phi(H_it)-\phi(H_is)\bigr|\\
\leq&
W_iL_{\phi,n}|H_i(t-s)|\\
\leq&
4ML_{\phi,n}|t-s|.
\end{align*}

To apply the Ledoux--Talagrand contraction method, let
$L_n=4ML_{\phi,n}$,
$\pi_{B_n}(u)=(-B_n)\vee(u\wedge B_n)$, and 
\[\rho_i(u)
=
L_n^{-1}W_i
\left[\phi\{\pi_{B_n}(u)\}-\phi(0)\right].\]
By construction, $\pi_{B_n}$ is $1$-Lipschitz because it is the composition of the two $1$-Lipschitz maps $u\mapsto u\wedge B_n$ and $u\mapsto(-B_n)\vee u$. Since $W_i\leq4M$, $\phi$ is $L_{\phi,n}$-Lipschitz on
$[-B_n,B_n]$, and $\pi_{B_n}$ is $1$-Lipschitz, each $\rho_i$
satisfies $\rho_i(0)=0$ and $|\rho_i(u)-\rho_i(v)|\leq|u-v|$, for $u,v\in\mathbb R$.
Hence $\rho_i$ is a contraction \citep[\S4.5, p.~112, Theorem~4.12]{ledoux1991probability}. Apply
\citet[Theorem~4.12]{ledoux1991probability} with the contractions $\rho_i$, $F(u)=u$ and
\[
T_n
=
\left\{
\bigl(H_i g(\bX_i)\bigr)_{i\in\mathcal F_2}:
\|g\|_{\calH_{K_n}}\leq R_n
\right\}
\subset[-B_n,B_n]^{n_2}.
\]
Since
$L_n\rho_i\{H_i g(\bX_i)\}
=\varphi_i\{g(\bX_i)\}$, the theorem gives
\begin{equation*}
\bE_\xi\sup_{\|g\|_{\calH_{K_n}}\leq R_n}
\left|
\frac1{n_2}\sum_{i\in\mathcal F_2}
\xi_i\varphi_i\{g(\bX_i)\}
\right|\leq
8ML_{\phi,n}
\bE_\xi\sup_{\|g\|_{\calH_{K_n}}\leq R_n}
\left|
\frac1{n_2}\sum_{i\in\mathcal F_2}
\xi_iH_i g(\bX_i)
\right|.    
\end{equation*}

For each realization of $\xi$, let $S_\xi
=
\sum_{i\in\mathcal F_2}
\xi_iH_iK_n(\bX_i,\cdot)$. By the reproducing property and the bilinearity of the inner product, 
\[
\sum_{i\in\mathcal F_2}\xi_iH_i g(\bX_i)
=\sum_{i\in\mathcal F_2}\xi_iH_i \langle g,K_n(\bX_i,\cdot)\rangle_{\calH_{K_n}}=
\langle g,S_\xi\rangle_{\calH_{K_n}}.
\]
The Cauchy-Schwarz inequality then gives 
\[
\sup_{\|g\|_{\calH_{K_n}}\leq R_n}
|\langle g,S_\xi\rangle_{\calH_{K_n}}|
\leq R_n\|S_\xi\|_{\calH_{K_n}},
\]
where equality is attained, when $S_\xi\neq0$, at
$g=R_nS_\xi/\|S_\xi\|_{\calH_{K_n}}$. Therefore, we obtain
\begin{align*}
 &\bE_\xi\sup_{\|g\|_{\calH_{K_n}}\leq R_n}
 \left|\frac{1}{n_2}\sum_{i\in\mathcal F_2}\xi_i\varphi_i\{g(\bX_i)\}\right|\\
 \leq& 8ML_{\phi,n}
 \bE_\xi\sup_{\|g\|_{\calH_{K_n}}\leq R_n}
 \left|\frac{1}{n_2}\sum_{i\in\mathcal F_2}
 \xi_iH_i g(\bX_i)\right|\\
 \leq & 8ML_{\phi,n}\frac{R_n}{n_2}
 \bE_\xi\left\|\sum_{i\in\mathcal F_2}
 \xi_iH_iK_n(\bX_i,\cdot)\right\|_{\calH_{K_n}}\\
 \leq&8ML_{\phi,n}\frac{R_n}{n_2}\lsb
\bE_\xi\|S_\xi\|_{\calH_{K_n}}^2
\rsb^{1/2}\\
 \leq&8ML_{\phi,n}\frac{R_n}{n_2}\left\{
\sum_{i,j\in\mathcal F_2}
H_iH_j\bE_\xi(\xi_i\xi_j)
K_n(\bX_i,\bX_j)
\right\}^{1/2}\\
 \leq&8ML_{\phi,n}\frac{R_n}{n_2}\left\{
\sum_{i\in\mathcal F_2}
H_i^2K_n(\bX_i,\bX_i)
\right\}^{1/2}\\
 \leq& 8ML_{\phi,n}\frac{\kappa_nR_n}{\sqrt{n_2}},
\end{align*}
where the third inequality follows from Jensen's inequality, the fourth inequality follows from the bilinearity of the inner product, the fifth inequality follows from the fact that $\bE_\xi(\xi_i\xi_j)=0$ for any $i\neq j$, and the last inequality follows from $K_n(\bX_i,\bX_i)\leq \kappa_n^2$.

Let $f_g(\mO_i)
=
W_i\phi\{H_i g(\bX_i)\}$. Since $W_i\leq4M$, $|g(\bX_i)|\leq B_n$, and $\phi\geq0$, $0\leq f_g(\mO_i)\leq 4M K_{\phi,n}$. By conditional symmetrization and
$f_g(\mO_i)=\varphi_i\{g(\bX_i)\}+W_i\phi(0)$,
\begin{align*}
\bE(\Delta_{n,\phi}\mid\mathcal F_1)
\leq&
2\bE_{O,\xi}\sup_{\|g\|_{\calH_{K_n}}\leq R_n}
\left|
\frac1{n_2}\sum_i\xi_i
\varphi_i\{g(\bX_i)\}
\right|+
2\bE_{O,\xi}
\left|
\frac1{n_2}\sum_i\xi_iW_i\phi(0)
\right|\\
\leq&
16ML_{\phi,n}
\frac{\kappa_nR_n}{\sqrt{n_2}}
+
\frac{8M K_{\phi,n}}{\sqrt{n_2}},
\end{align*}
where the last term follows from
\[
\bE_\xi
\left|
\frac1{n_2}\sum_i\xi_iW_i\phi(0)
\right|
\leq
\frac1{n_2}
\left\{\sum_iW_i^2\phi(0)^2\right\}^{1/2}
\leq
\frac{4M K_{\phi,n}}{\sqrt{n_2}}.
\]

To control deviations, let $S^{(j)}$ be obtained from the
second-stage sample $S$ by replacing $\mO_j$ with $\mO_j'$. Then
\[
\begin{aligned}
|\Delta_{n,\phi}(S)-\Delta_{n,\phi}(S^{(j)})|
\leq&
\frac1{n_2}
\sup_{\|g\|_{\calH_{K_n}}\leq R_n}
|f_g(\mO_j)-f_g(\mO_j')|\\
\leq&
\frac{4M K_{\phi,n}}{n_2}.
\end{aligned}
\]
Consequently, the conditional bounded-difference inequality \citep{mcdiarmid1989method} yields
\[
\Pr\left(
\left|
\Delta_{n,\phi}
-\bE(\Delta_{n,\phi}\mid\mathcal F_1)
\right|>t
\,\middle|\,
\mathcal F_1
\right)
\leq
2\exp\left\{
-\frac{n_2t^2}{8M^2K_{\phi,n}^2}
\right\}.
\]
Taking $t
=
2\sqrt{2}\,M K_{\phi,n}
\sqrt{{\log(2/\delta)}/{n_2}}$
and absorbing numerical constants gives, with conditional
probability at least $1-\delta$,
\[
\Delta_{n,\phi}
\leq
CM\left\{
\frac{L_{\phi,n}\kappa_nR_n}{\sqrt{n_2}}
+
K_{\phi,n}
\sqrt{\frac{\log(2/\delta)}{n_2}}
\right\}.
\]

For the bounded SVM construction, define the clipped image class $\widetilde{\calG}_n^{\mathrm{bsvm}}(R_n)
=
\{
T_1\circ f:
f\in\calG_n^{\mathrm{svm}}(R_n)
\}$, where \(T_1(u)=(-1)\vee(u\wedge1)\). Let \(\widehat f_n\) be a
\(\rho_n\)-approximate empirical hinge minimizer over
\(\calG_n^{\mathrm{svm}}(R_n)\) and set
\(\widehat g_n=T_1\circ\widehat f_n\). Clipping need not preserve RKHS membership, so $\widehat g_n$ may not belong to $\calG_n^{\mathrm{bsvm}}(R_n)$; instead, $\widehat g_n\in\widetilde{\calG}_n^{\mathrm{bsvm}}(R_n)$. Since clipping cannot increase hinge loss and $\calG_n^{\mathrm{bsvm}}(R_n)\subseteq \calG_n^{\mathrm{svm}}(R_n)$,
\[
\mathbb P_2\ell_H(\widehat g_n;\widehat\eta)
\leq
\mathbb P_2\ell_H(\widehat f_n;\widehat\eta)
\leq
\inf_{g\in\calG_n^{\mathrm{bsvm}}(R_n)}
\mathbb P_2\ell_H(g;\widehat\eta)+\rho_n.
\]
Moreover, $\calG_n^{\mathrm{bsvm}}(R_n)\subseteq\widetilde{\calG}_n^{\mathrm{bsvm}}(R_n)$ because $T_1\circ g=g$ for every bounded comparator $g$. Thus, for any $g_n^\circ\in\calG_n^{\mathrm{bsvm}}(R_n)$, the same add-and-subtract argument used in the proof of Theorem \ref{thm:bounded-hinge-sieve} gives
\begin{align*}
 \calL_H^{\mathrm{linear}}(\widehat g_n;\widehat\eta)
 -\calL_H^{\mathrm{linear}}(g_n^\circ;\widehat\eta)
 \leq
 2\Delta_{n,H}^{\mathrm{linear}}
 \bigl(\widetilde{\calG}_n^{\mathrm{bsvm}}(R_n);\widehat\eta\bigr)
 +\rho_n.
\end{align*}
Both scores are bounded by one, so the nuisance comparison in \eqref{eq:hinge-exact-nuisance-proof} remains unchanged. It remains to control the empirical process over the clipped image class. Conditional on the nuisance training split, \(T_1\) is \(1\)-Lipschitz, satisfies \(T_1(0)=0\), and is a contraction. Hence, by the contraction inequality, we obtain
\begin{align*}
&\bE_\xi
\sup_{f\in\calG_n^{\mathrm{svm}}(R_n)}
\left|
\frac{1}{n_2}
\sum_{i\in\mathcal F_2}
\xi_i Z_{\widehat\eta,i}
T_1\{f(\bX_i)\}
\right|\\
\leq&
4M\bE_\xi
\sup_{f\in\calG_n^{\mathrm{svm}}(R_n)}
\left|
\frac{1}{n_2}
\sum_{i\in\mathcal F_2}
\xi_i f(\bX_i)
\right|\\
\leq&
\frac{4M\kappa_nR_n}{\sqrt{n_2}}.
\end{align*}
Conditional symmetrization and the same bounded-difference argument
therefore give, with conditional probability at least \(1-\delta\),
\begin{align*}
\Delta_{n,H}^{\mathrm{linear}}
\bigl(\widetilde{\calG}_n^{\mathrm{bsvm}}(R_n);
\widehat\eta\bigr)
\leq
8M\left\{
\frac{\kappa_nR_n}{\sqrt{n_2}}
+
\sqrt{\frac{\log(2/\delta)}{n_2}}
\right\}.
\end{align*}
Combining the empirical comparison above, the nuisance expansion in \eqref{eq:hinge-exact-nuisance-proof}, the bounded hinge calibration argument, and the contraction bound proves
\begin{align*}
    V(d^\ast)-V\{\operatorname{sign}(\widehat g_n)\}
 \leq \frac{a_{n,\mathrm{svm}}^H+16 Mq_{n,\mathrm{svm}}(\delta)
 +\rho_n+2\|\widehat e-e_0\|_2\|\widehat m-m_0\|_2}{2\epsilon_n(1-\epsilon_n)}.
\end{align*}
\end{proof}

\subsection{Proof of Corollary \ref{thm:relu-regret}}\label{sec:supp-relu-details}
We recap the basic setup. Without loss of generality, let $\bX\in[0,1]^p$. For a scalar $u$, write $(u)_+=u\vee 0$ and let $\sigma_v(u)=(u-v)_+$ act componentwise. Following \citet{schmidt2020nonparametric}, for a depth $L$, width vector $\mathbf p=(p_0,\ldots,p_{L+1})$ with $p_0=p$ and $p_{L+1}=1$, sparsity $s$, and output bound $F$, let $\mathcal F(L,\mathbf p,s,F)$ consist of networks
\begin{align}
 f(\bX)=W_L\sigma_{v_L}\circ W_{L-1}\sigma_{v_{L-1}}\circ\cdots \circ W_1\sigma_{v_1}\circ W_0\bX
\end{align}
satisfying
\begin{align*}
 &\max_{0\leq\ell\leq L}\|W_\ell\|_\infty
 \vee\max_{1\leq\ell\leq L}\|v_\ell\|_\infty\leq1,\\
 &\sum_{\ell=0}^L\|W_\ell\|_0
 +\sum_{\ell=1}^L\|v_\ell\|_0\leq s,
 \qquad \|f\|_\infty\leq F.
\end{align*}
Here, $\circ$ denotes composition. We consider the hard tanh output activation and set $F=1$.

For $\calG_n=\mathcal F(L_n,\mathbf p_n,s_n,1)$ and
$V_n=\prod_{\ell=0}^{L_n+1}(p_{\ell,n}+1)$, Lemma 5 of \citet{schmidt2020nonparametric} gives
\begin{align}\label{eq:schmidt-hieber-entropy}
 \log N\lsb\zeta,\calG_n,\|\cdot\|_\infty\rsb
 \leq(s_n+1)\log\lbb2\zeta^{-1}(L_n+1)V_n^2\rbb,
\end{align}
where $N\{\zeta,\calG_n,\|\cdot\|_\infty\}$ denotes the covering number, namely, the minimum number of balls of radius $\zeta$ under the $\|\cdot\|_\infty$ norm required to cover $\calG_n$.

\begin{proof}[Proof of Theorem \ref{thm:relu-regret}]
The class $\calG_n$ is bounded by one, so Theorem \ref{thm:bounded-hinge-sieve} applies. It remains to control the empirical process term $\Delta_{n,H}^{\mathrm{linear}}(\calG_n;\widehat\eta)=\sup_{g\in\calG_n}
 \left|\{\mathbb P_2-P\}
 \{Z_{\widehat\eta}g(\bX)\}\right|$. By \eqref{eq:bound-Zeta-4M}, $|Z_{\widehat\eta}|\leq4M$. 
Let \(h_1,\ldots,h_N\) be centers of a \(1/n_2\)-cover of
\(\calG_n\), where
\[
N=N\lsb\frac{1}{n_2},\calG_n,\|\cdot\|_\infty\rsb.
\]
Thus, for every \(g\in\calG_n\), there exists an index \(j=j(g)\)
such that
\[
\|g-h_j\|_\infty\leq\frac{1}{n_2}.
\]

Because the covering centers need not belong to \(\calG_n\), they
need not initially be bounded by one. Define the hard tanh clipping map
\[
T_1(u)
=
(-1)\vee(u\wedge1)
=
-1+(u+1)_+-(u-1)_+
\]
and the clipped centers $\bar h_j=T_1\circ h_j$, $j=1,\ldots,N$.
Since \(\calG_n\subseteq[-1,1]^{\calX}\), for every
\(g\in\calG_n\), every \(x\in\calX\), and every \(j\),
\[
|g(x)-\bar h_j(x)|
=
|g(x)-T_1\{h_j(x)\}|
\leq
|g(x)-h_j(x)|.
\]
Consequently, the clipped centers also form a \(1/n_2\)-cover:
for every \(g\in\calG_n\), there exists \(j=j(g)\) such that
\[
\|g-\bar h_j\|_\infty
\leq
\|g-h_j\|_\infty
\leq
\frac{1}{n_2}.
\]
Importantly, by definition, it follows that $\|\bar h_j\|_\infty\leq1$, $j=1,\ldots,N$.

For every \(g\in\calG_n\), choose a corresponding clipped covering
center \(\bar h_j\). Since
\(|Z_{\widehat\eta}|\leq4M\) by \eqref{eq:bound-Zeta-4M}, it follows that
\begin{align*}
&\left|
\{\mathbb P_2-P\}
\left[
Z_{\widehat\eta}
\{g(\bX)-\bar h_j(\bX)\}
\right]
\right|\\
\leq&
\mathbb P_2
\left[
|Z_{\widehat\eta}|
\,|g(\bX)-\bar h_j(\bX)|
\right]
+
P
\left[
|Z_{\widehat\eta}|
\,|g(\bX)-\bar h_j(\bX)|
\right]\\
\leq&
\frac{4M}{n_2}
+
\frac{4M}{n_2}
=
\frac{8M}{n_2}.
\end{align*}
Therefore,
\begin{align}
\Delta_{n,H}^{\mathrm{linear}}
(\calG_n;\widehat\eta)
=&
\sup_{g\in\calG_n}
\left|
\{\mathbb P_2-P\}
\{Z_{\widehat\eta}g(\bX)\}
\right|\notag\\
&\leq
\max_{1\leq j\leq N}
\left|
\{\mathbb P_2-P\}
\{Z_{\widehat\eta}\bar h_j(\bX)\}
\right|
+
\frac{8M}{n_2}.\label{eq:approximation-covering-bound}
\end{align}

Finally, condition on the nuisance training split
\(\mathcal F_1\). Then \(\widehat\eta\), the class \(\calG_n\), and
the clipped covering centers
\(\bar h_1,\ldots,\bar h_N\) are fixed, while the observations
\(\{\mO_i:i\in\mathcal F_2\}\) remain independent. For each
\(j=1,\ldots,N\), define $W_{i,j}
:=
Z_{\widehat\eta}(\mO_i)\bar h_j(\bX_i)$, $i\in\mathcal F_2$.
Since \(|Z_{\widehat\eta}|\leq4M\) by \eqref{eq:bound-Zeta-4M} and
\(\|\bar h_j\|_\infty\leq1\), $-4M\leq W_{i,j}\leq4M$. Define $S_{n_2,j}
:=
\sum_{i\in\mathcal F_2}W_{i,j}$. By independence of the two sample splits, we have $S_{n_2,j}
-
\bE(S_{n_2,j}\mid\mathcal F_1)
=
n_2
\{\mathbb P_2-P\}
\{Z_{\widehat\eta}\bar h_j(\bX)\}$.

For a fixed \(j\), Hoeffding's inequality, with
\(a_i=-4M\), \(b_i=4M\), and hence $\sum_{i\in\mathcal F_2}(b_i-a_i)^2
=
n_2(8M)^2
=
64M^2n_2$, gives, for every \(t>0\),
\begin{align*}
&\Pr\left(
\left|
S_{n_2,j}
-
\bE(S_{n_2,j}\mid\mathcal F_1)
\right|
\geq t
\ \middle|\ \mathcal F_1
\right)\leq
2\exp\left\{
-\frac{2t^2}{64M^2n_2}
\right\}
=
2\exp\left\{
-\frac{t^2}{32M^2n_2}
\right\}.
\end{align*}
Equivalently, setting \(t=n_2u\), for every \(u>0\),
\begin{align*}
&\Pr\left(
\left|
\{\mathbb P_2-P\}
\{Z_{\widehat\eta}\bar h_j\}
\right|
\geq u
\ \middle|\ \mathcal F_1
\right)\leq
2\exp\left\{
-\frac{n_2u^2}{32M^2}
\right\}.
\end{align*}

A union bound over \(j=1,\ldots,N\) now yields
\begin{align*}
&\Pr\left(
\max_{1\leq j\leq N}
\left|
\{\mathbb P_2-P\}
\{Z_{\widehat\eta}\bar h_j\}
\right|
\geq u
\ \middle|\ \mathcal F_1
\right)\leq
2N\exp\left\{
-\frac{n_2u^2}{32M^2}
\right\}.
\end{align*}
Choose
\[
u=u_\delta
:=
4\sqrt{2}\,M
\left\{
\frac{\log N+\log(2/\delta)}{n_2}
\right\}^{1/2}.
\]
For this choice,
\begin{align*}
2N\exp\left\{
-\frac{n_2u_\delta^2}{32M^2}
\right\}=
2N
\exp\left\{
-\log N-\log(2/\delta)
\right\}=\delta.
\end{align*}
Finally, it follows that, with conditional
probability at least \(1-\delta\),
\begin{equation}\label{eq:hoeffding-finite-bound}
    \max_{1\leq j\leq N}
\left|
\{\mathbb P_2-P\}
\{Z_{\widehat\eta}\bar h_j\}
\right|
\leq
4\sqrt{2}\,M
\left\{
\frac{H_n+\log(2/\delta)}{n_2}
\right\}^{1/2},
\end{equation}
where $H_n=\log N=\log N\lsb n_2^{-1},\calG_n,\|\cdot\|_\infty\rsb$. 

Combining \eqref{eq:hoeffding-finite-bound} with \eqref{eq:approximation-covering-bound} gives, with conditional probability at least \(1-\delta\),
\begin{align}
\Delta_{n,H}^{\mathrm{linear}}
(\calG_n;\widehat\eta)
\leq&
4\sqrt{2}\,M
\left\{
\frac{H_n+\log(2/\delta)}{n_2}
\right\}^{1/2}
+
\frac{8M}{n_2}\notag\\
\leq&
8 M\left\{
\left(
\frac{H_n+\log(2/\delta)}{n_2}
\right)^{1/2}
+
\frac{1}{n_2}
\right\}.
\label{eq:relu-empirical-process-proof}
\end{align}
Corollary \ref{thm:relu-regret} follows from applying \eqref{eq:schmidt-hieber-entropy} to \eqref{eq:relu-empirical-process-proof}.

\end{proof}



\setcounter{section}{2}\setcounter{subsection}{0}
\section*{Appendix B}
\addcontentsline{toc}{section}{Appendix B}

\subsection{Supporting information for the simulation experiments}
\label{ss:supporting-info-simulation}

We consider four data generating processes (DGPs). DGP~1 has a smooth nonlinear decision boundary under sufficient overlap; DGP~2 combines limited overlap with a decision boundary represented exactly by a depth-2 axis-aligned decision tree; DGP~3 is a fully parametric linear benchmark; and DGP~4 has sufficient true overlap but uses a misspecified logistic-linear propensity score working model, a commonly used specification in practice. Each Monte Carlo sample contains \(n=1000\) observations. Treatment
is coded as \(A\in\{-1,1\}\). Let the true nuisance functions be $\pi_0(\bX)=\mP(A=1\mid\bX)$, $m_0(\bX)=\bE(Y\mid\bX)$, $\mu_0(a,\bX)=\bE(Y\mid A=a,\bX)$, and $\tau_0(\bX)=\mu_{0}(1,\bX)-\mu_{0}(-1,\bX)$. The nuisance functions satisfy $\mu_{0}(1,\bX)=m_0(\bX)+\{1-\pi_0(\bX)\}\tau_0(\bX)$ and $\mu_{0}(-1,\bX)=m_0(\bX)-\pi_0(\bX)\tau_0(\bX)$. By \cite{robinson1988root}, if $W\mid\bX\sim\operatorname{Bernoulli}\{\pi_0(\bX)\}$, $A=2W-1$, and $Y=\mu_0(A,\bX)+\varepsilon$ with \(\bE(\varepsilon\mid A,\bX)=0\), then
\(\bE(Y\mid\bX)=m_0(\bX)\) and
\(\mu_{0}(1,\bX)-\mu_{0}(-1,\bX)=\tau_0(\bX)\).

\begingroup
\setlength{\jot}{2pt}

\paragraph{DGP 1: smooth nonlinear decision boundary under sufficient overlap.} Let
\(X_j\stackrel{\mathrm{iid}}{\sim}\operatorname{Unif}(0,1)\),
\(Z_j=2X_j-1\), \(j=1,\ldots,10\), and define
\[
\begin{aligned}
\eta(\bX)
=&1.2Z_6-Z_7+0.6Z_8+0.6Z_9Z_{10},\\
\pi_0(\bX)
=&0.20+0.60\operatorname{expit}\{\eta(\bX)\},\\
m_0(\bX)
=&2+Z_6+0.5Z_7+0.25Z_8.
\end{aligned}
\]
To define the decision boundary, set
\[
\widetilde b_{kj}
=
\sin\left(\frac{2\pi kj}{11}\right)
+0.55\cos\left\{\frac{2\pi(k+1)j}{11}\right\},
\quad k=1,\ldots,6,\quad j=1,\ldots,10.
\]
Writing \(\widetilde{\mathbf b}_k\) for row \(k\), let $\overline{\mathbf b}
=6^{-1}\sum_{k=1}^6\widetilde{\mathbf b}_k$ and 
\begin{equation*}
    \mathbf b_k
=\frac{6\left(\widetilde{\mathbf b}_k-\overline{\mathbf b}\right)}{\sum_{k=1}^6
\left\|\widetilde{\mathbf b}_k-\overline{\mathbf b}\right\|_2}.
\end{equation*}
For \(\mathbf Z=(Z_1,\ldots,Z_{10})^\top\), define $r_k
=\mathbf b_k^\top\mathbf Z$, $L_t(\mathbf r)=t\log\left\{\sum_{k=1}^6\exp(r_k/t)\right\}$, $g(\bX)
=L_{0.14}\{\mathbf r(\bX)\}
-L_{0.14}\{-\mathbf r(\bX)\}$, and $\tau_0(\bX)=3.87\tanh\{g(\bX)/0.15\}$. Thus \(d^*(\bX)=\operatorname{sign}\{\tau_0(\bX)\}\). The decision boundary is a complex, smooth difference of nonlinear polyhedral envelopes and depends on all ten covariates. Finally,
\(\varepsilon\sim\operatorname{Unif}(-2\sqrt{3},2\sqrt{3})\).

\paragraph{DGP 2: decision boundary characterized by a depth-2 axis-aligned decision tree under limited overlap.} Again let
\(X_j\stackrel{\mathrm{iid}}{\sim}\operatorname{Unif}(0,1)\),
\(Z_j=2X_j-1\), \(j=1,\ldots,10\), and define
\[
\begin{aligned}
d^*(\bX)
=&
\begin{cases}
\operatorname{sign}(Z_2), & X_1<1/2,\\
\operatorname{sign}(Z_3), & X_1\geq1/2,
\end{cases}
\end{aligned}
\]
where $\tau_0(\bX)=1.25d^*(\bX)$. This decision boundary is represented exactly by a
depth-2 axis-aligned tree. The propensity and marginal outcome regression
are
\[
\begin{aligned}
\pi_0(\bX)
=&\operatorname{expit}(3.5Z_4+1.5Z_5+0.5Z_6),\\
m_0(\bX)
=&2+5\sin(\pi Z_4)+2.5\cos(\pi Z_5)+0.75Z_6.
\end{aligned}
\]
The true propensity score follows a logit-linear model and exhibits limited overlap, with theoretical support approximately \([0.0041,0.9959]\). We use
\(\varepsilon\sim\operatorname{Unif}(-\sqrt{3},\sqrt{3})\). Because
\(\lvert\tau_0(\bX)\rvert=1.25\), this DGP satisfies $\operatorname{Regret}(\widehat d)
=1.25\,\Pr\{\widehat d(\bX)\neq d^*(\bX)\}$.

\paragraph{DGP 3: correctly specified linear benchmark.} Let
\(X_j\stackrel{\mathrm{iid}}{\sim}\operatorname{Unif}(0,1)\),
\(Z_j=2X_j-1\), \(j=1,\ldots,10\), and set
\[
\begin{aligned}
\pi_0(\bX)
=&\frac{1}{2},
\qquad
m_0(\bX)=2+0.5Z_3-0.3Z_4+0.2Z_5,\\
\tau_0(\bX)
=&3(Z_1+Z_2),
\qquad
d^*(\bX)=\operatorname{sign}(Z_1+Z_2).
\end{aligned}
\]
Thus, the true propensity score follows a logit-linear model, \(m_0\) and both arm specific regressions follow linear models, and the optimal ITR is induced by a linear decision rule. We generate
\(\varepsilon\sim\operatorname{Unif}(-\sqrt{3},\sqrt{3})\).

\paragraph{DGP 4: decision boundary characterized by a depth-2 axis-aligned decision tree with propensity score misspecification.} Let
\(\Pr(U=-0.5)=0.475\), \(\Pr(U=0.5)=0.475\), and \(\Pr(U=4)=0.05\), and
independently generate
\(V,Z_3,\ldots,Z_{30}\stackrel{\mathrm{iid}}{\sim}\operatorname{Unif}(-1,1)\).
Define \(B=\operatorname{sign}(V)\), \(P_U=\mone(U\geq0.5)\), and
\(R=\mone(U>2)\), and set
\[
\begin{aligned}
\pi_0(\bX)
=&
\begin{cases}
0.25, & U=-0.5,\\
0.75, & U\in\{0.5,4\},
\end{cases}\\
m_0(\bX)
=&3+3Z_3,
\qquad
\tau_0(\bX)=B\{1.25(1-R)-3.2R\}.
\end{aligned}
\]
The optimal ITR is
\[
d^*(\bX)=
\begin{cases}
B, & U\leq2,\\
-B, & U>2.
\end{cases}
\]
This rule is represented exactly by a depth-2 axis-aligned tree. Moreover,
\(m_0(\bX)\) lies in the linear span of \((1,Z_3)\), and both \(\mu_0(1,\bX)\) and
\(\mu_0(-1,\bX)\) lie in the linear span of
\((1,Z_3,B,P_UB,RB)\). The error distribution is a mixture of uniform distributions:
\[
\varepsilon\sim
\begin{cases}
\operatorname{Unif}(-\sqrt{3}/2,\sqrt{3}/2),
& \text{with probability }0.95,\\
\operatorname{Unif}(-\sqrt{3}\sqrt{15.25},
                     \sqrt{3}\sqrt{15.25}),
& \text{with probability }0.05,
\end{cases}
\]
which has conditional mean zero and conditional variance one. Although the
true overlap is sufficient, the fitted propensity working model is the
misspecified raw-\(U\) logistic regression
\(\overline\pi(U)=\operatorname{expit}(\beta_0+\beta_1U)\). Because $\{\overline\pi(-0.5),\overline\pi(0.5),\overline\pi(4)\}=(0.339,0.635,0.992)$, the fitted propensity score model may yield predictions close to \(1\).

\endgroup
\paragraph{Nuisance estimation.} 
All direct methods use five-fold cross-fitted nuisance predictions, with
common fold assignments and common predictions. For DGPs~1 and~2, nuisance functions are estimated using Super Learner \citep{vanderlaan2007super} with the following ensembles:
\[
\begin{split}
\mathcal L_1
=&
\{\texttt{SL.mean},\texttt{SL.glm},\texttt{SL.glmnet},
  \texttt{SL.earth},\texttt{SL.ranger},
  \texttt{SL.xgboost}\},\\
\mathcal L_2
=&
\{\texttt{SL.mean},\texttt{SL.glm},\texttt{SL.glmnet},
  \texttt{SL.gam},\texttt{SL.earth},\texttt{SL.nnet}\}.
\end{split}
\]
DGP~3 uses the singleton
\(\{\texttt{SL.glm}\}\) library, yielding correctly specified cross-fitted
logistic and linear regressions. For DGP~4, the propensity is estimated by
the misspecified logistic regression, whereas
\(m_0\) and \(\mu_0\) are estimated by correctly specified cross-fitted
ordinary least-squares regressions.

\paragraph{Sieve classes, surrogate relaxation, and tuning.}

We compare five ODRL implementations: exact optimization based on VC sieves, bounded SVM hinge (Example \ref{ss:svm-sieves}), SVM logistic (Example \ref{ss:svm-sieves}), ReLU hinge (Example \ref{ss:relu-sieves}), and ReLU logistic (Example \ref{ss:relu-sieves})---with Q-learning \citep{qian2011performance}, dWOLS \citep{wallace2015doubly},
A-learning \citep{schulte2015q}, E-learning \citep{mo2022efficient}, D-learning \citep{qi2018d}, EARL \citep{zhao2019efficient}, RWL \citep{zhou2017residual}, OWL \citep{zhao2012estimating}, Athey--Wager policy
learning \citep{athey2021policy}, and Kallus optimal retargeting \citep{kallus2021more}. In DGPs~1,~2, and~4, ODRL exact optimization,
Athey--Wager policy
learning, and Kallus optimal retargeting use the same exact depth-2 tree optimizer (Example \ref{ss:tree-sieves}). This policy class is correctly specified in DGPs~2
and~4 but not in DGP~1. In DGP~3, the three direct optimizers use the same
linear policy class (Example \ref{ss:linear-sieves}), which contains the true optimal ITR \(d^*\). The SVM-based methods under surrogate relaxation use the Gaussian radial
basis function (RBF) kernel \citep{scholkopf2002learning},
\[
K_{\gamma}(\bx,\bx')
=
\exp\left\{-\gamma\|\bx-\bx'\|_2^2\right\},
\qquad \gamma>0,
\]
in DGPs~1,~2, and~4, and the linear kernel
\(K(\bx,\bx')=\bx^\T\bx'\) in DGP~3. The ReLU classes and all tuning grids are
prespecified and held fixed across replications. All tuning uses
two-fold cross validation, except for bounded SVM hinge, which uses three
folds. DGP~1 uses the broader tuning grids needed for its complex nonlinear decision
boundary; DGPs~2 and~4 use the same reduced grids. The three exact optimizations over linear decision rules in DGP~3 are formulated as mixed integer programs, with a prespecified 40-minute time limit for each fit.  

\paragraph{Monte Carlo performance evaluation.}
For a fitted rule \(\widehat d\), we evaluate its welfare or value, regret, and misclassification rate (MCR) as $V(\widehat d)
=
\bE[\mu_{0}\{\widehat d(\bX),\bX\}]$,
\[
\operatorname{Regret}(\widehat d)
=
V(d^*)-V(\widehat d)
=
\bE\!\left[
|\tau_0(\bX)|
\mone\{\widehat d(\bX)\neq d^*(\bX)\}
\right],
\]
and $\operatorname{MCR}(\widehat d)
=\Pr\{\widehat d(\bX)\neq d^*(\bX)\}$, respectively. All performance metrics are computed over 500 Monte Carlo replications. The Monte Carlo standard errors of regret and MCR are also reported. The same fixed evaluation design is used for every method and replication:
\(20{,}000\) points for DGP~1, $125{,}000$ points for DGP~2,
\(100{,}000\) points for DGP~3, and \(120{,}000\) points for DGP~4. 


\subsection{Empirical analysis of the ONZ field experiment}
\label{sec:data-application}

We apply ODRL to a field experiment conducted in partnership with Oxford Net Zero (ONZ), a research and engagement program at the University of Oxford that promotes science-based approaches to climate neutrality \citep{pereira2025climate}. The published scientist-versus-peer
attribution effects have opposite signs in Europe and the United States,
motivating an exploratory analysis of individualized sender assignment. We
restrict the factorial experiment to the short-term frame and the common
two-source comparison, coding climate-scientist attribution as $A=1$ and
peer-politician attribution as $A=-1$. The reward $Y$ is the
government-level proportion of invited officials who click at least one
link. Our analytical sample contains $2284$ governments, including nonopeners
as zero outcomes; $1133$ receive climate-scientist attribution and 1151
receive peer-politician attribution. The corresponding click rates are
4.88\% and 4.56\%, while the scientist-minus-peer contrasts are 1.45
percentage points in Europe and $-3.02$ percentage points in the United
States, reproducing the reported geographic reversal.

We consider six baseline covariates: country, log population, the number of officials invited, and the shares of invited officials who are women, affiliated with left parties, and affiliated with right parties. Consistent with the simulation experiments, we consider five {ODRL implementations}: bounded ReLU with hinge loss (Example~\ref{ss:relu-sieves}), ReLU with logistic loss (Example~\ref{ss:relu-sieves}), bounded Gaussian RKHS with hinge loss (Example~\ref{ss:svm-sieves}), Gaussian RKHS with logistic loss (Example~\ref{ss:svm-sieves}), and exact optimization over depth-2 axis-aligned decision trees (Example~\ref{ss:tree-sieves}). Five-fold cross-fitted Super Learner \citep{vanderlaan2007super} estimates of $m(\bX)$ are obtained, while the known protocol assignment probability $\pi(\bX)=0.5$ is used. We select the tuning parameters for each policy learner by maximizing its mean three-fold cross-validated policy criterion. 

Table~\ref{tab:ajps-full-policy} reports pairwise agreement among the optimal ITRs estimated by the five ODRL learners in the ONZ field experiment. The proportions recommended to receive climate-scientist attribution range from 65.0\% to 74.8\%, with pairwise agreement rates ranging from 71.9\% to 89.4\%. Figure~\ref{fig:ajps-fit-the-fit} presents the estimated optimal ITR obtained by exact optimization over depth-2 axis-aligned decision trees. The rule recommends climate-scientist attribution for governments in which women constitute \(\leq10\%\) of invited officials when only one official is invited, and for governments with a female share \(>10\%\) when the local population is \(\leq212{,}988\); it recommends peer-politician attribution otherwise. Substantively, these splits suggest that the estimated relative effectiveness of sender attribution varies with the size and gender composition of the invited audience and, among governments with greater female representation, with local population.

\begin{table}[ht!]
\centering
\caption{Pairwise agreement among the estimated optimal ITRs in the ONZ
field experiment. The row labeled Observed $A$ reports agreement with the
observed treatment assignment. ReLU-H and ReLU-L denote ReLU with hinge and logistic losses, respectively; RBF-H and RBF-L denote Gaussian RKHS with hinge and logistic losses, respectively; and Tree denotes exact optimization over depth-2 axis-aligned decision trees.}
\label{tab:ajps-full-policy}

\begingroup
\footnotesize
\linespread{1}\selectfont
\compacttablesetup

\setlength{\tabcolsep}{10.2pt}
\renewcommand{\arraystretch}{0.86}
\setlength{\aboverulesep}{0.15ex}
\setlength{\belowrulesep}{0.9ex}

\begin{tabular}{@{}lccccc@{}}
\toprule
& ReLU-H & ReLU-L & RBF-H & RBF-L & Tree \\
\midrule
ReLU-H & 1 & 0.894 & 0.869 & 0.810 & 0.720 \\[1ex]
ReLU-L &   & 1     & 0.850 & 0.770 & 0.727 \\[1ex]
RBF-H  &   &       & 1     & 0.794 & 0.719 \\[1ex]
RBF-L  &   &       &       & 1     & 0.787 \\[1ex]
Tree   &   &       &       &       & 1     \\[1ex]
Observed $A$ & 0.490 & 0.491 & 0.464 & 0.493 & 0.493 \\
\midrule
Climate scientist, $n$ (\%) &
1585 (69.4) & 1489 (65.2) & 1484 (65.0) & 1666 (72.9) & 1708 (74.8) \\
\bottomrule
\end{tabular}

\endgroup
\end{table}
\FloatBarrier

\begin{figure}[ht!]
\centering
\includegraphics[width=0.96\textwidth]
{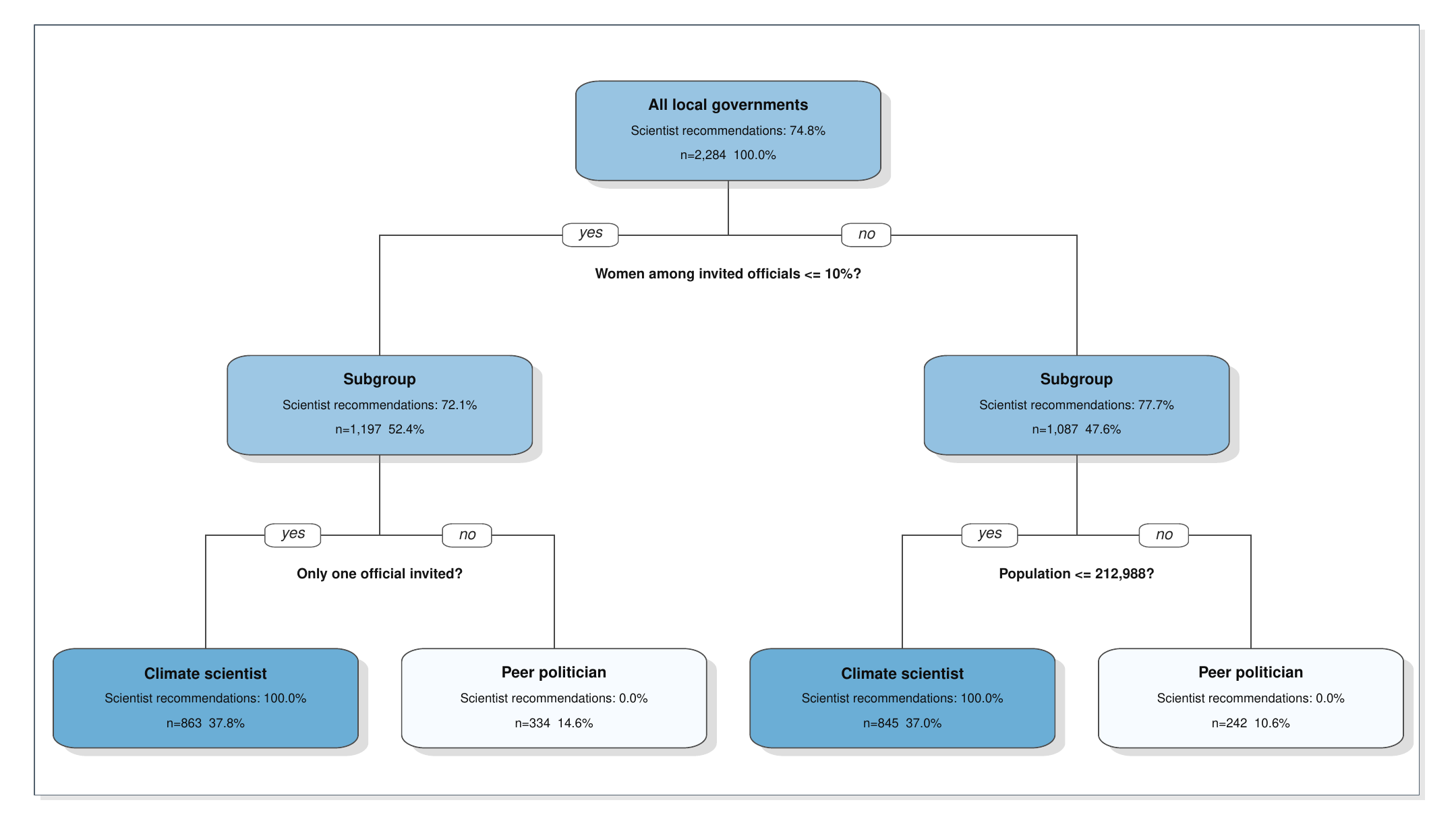}
\caption{Fitted depth-2 axis-aligned decision tree representing the estimated optimal ITR for the ONZ field experiment. Terminal nodes report the recommended treatment; boxes report the percentage of governments for which the estimated
optimal ITR recommends climate-scientist attribution, the node size, and
the corresponding sample proportion.}
\label{fig:ajps-fit-the-fit}
\end{figure}

Finally, post-learning value estimation using a doubly robust method yields estimated government-level click rates of \(6.56\%\), \(6.59\%\), \(7.99\%\), \(5.59\%\), and \(6.17\%\) under the estimated optimal ITRs obtained using ReLU-H, ReLU-L, RBF-H, RBF-L, and the decision tree method, respectively. The corresponding estimated values are \(4.84\%\) under assigning climate-scientist attribution to every government and \(5.67\%\) under a geographic policy assigning climate-scientist attribution in Europe and peer-politician attribution in the United States. Relative to assigning climate-scientist attribution to every government, the five estimated optimal ITRs yield apparent differences of \(1.72\), \(1.75\), \(3.15\), \(0.74\), and \(1.32\) percentage points, respectively. Relative to the geographic policy, the corresponding differences are \(0.89\), \(0.92\), \(2.32\), \(-0.08\), and \(0.49\) percentage points.

\clearpage

\subsection{Additional tables and figures}
\label{ss:additional-tables-figures}

\begin{figure}[ht!]
\centering
\includegraphics[width=0.93\textwidth]
{rhc\_simple\_sl5\_outer10\_propensity\_overlap.pdf}
\caption{Estimated propensity score overlap in the RHC study. The curves show ten-fold cross-fitted Super Learner estimates of the probability of receiving RHC, stratified by observed treatment. The dotted vertical lines mark 0.05 and 0.95.}
\label{fig:rhc-propensity-overlap}
\end{figure}

\clearpage


\end{document}